\newcommand\sbullet[1][.7]{\mathbin{\vcenter{\hbox{\scalebox{#1}{$\bullet$\phantom{e}}}}}}
\newtheorem{lemma}{Lemma}
\newtheorem*{problem*}{Problem Formulation}
\newtheorem{proposition}{Proposition}
\newtheorem{corollary}{Corollary}
\long\def\symbolfootnote[#1]#2{\begingroup%
\def\thefootnote{\fnsymbol{footnote}}\footnote[#1]{#2}\endgroup}
\newtheorem{theorem}{Theorem}
\newtheorem{definition}{Definition}
\newtheorem{assumption}{Assumption}
\newtheorem{cov_model}{Covariance Matrix Model}
\newtheorem*{pilot_model}{Non-Orthogonal Pilots Model}
\newcommand {\y}{\mathbf{y}}
\renewcommand {\v}{\mathbf{v}}
\newcommand {\x}{\mathbf{x}}
\newcommand {\e}{\mathbf{e}}
\newcommand {\n}{\mathbf{n}}
\renewcommand {\u}{\mathbf{u}}
\newcommand {\0}{\mathbf{0}}
\newcommand {\h}{\mathbf{h}}
\newcommand{\p}{\mathbf{p}}
\newcommand {\X}{\mathbf{X}}
\newcommand {\Y}{\mathbf{Y}}
\newcommand {\D}{\mathbf{D}}
\renewcommand {\S}{\mathbf{S}}
\newcommand {\W}{\mathbf{W}}
\newcommand {\E}{\mathbf{E}}
\newcommand {\U}{\mathbf{U}}
\newcommand {\V}{\mathbf{V}}
\newcommand {\Z}{\mathbf{Z}}
\newcommand {\I}{\mathbf{I}}
\newcommand {\C}{\mathbf{C}}
\newcommand {\A}{\mathbf{A}}
\newcommand {\B}{\mathbf{B}}
\newcommand {\M}{\mathbf{M}}
\newcommand {\N}{\mathbf{N}}
\renewcommand {\P}{\mathbf{P}}
\newcommand {\T}{\mathbf{T}}
\newcommand {\Hc}{\mathbf{H}}
\newcommand {\etav}{\boldsymbol{\eta}}
\newcommand {\betav}{\boldsymbol{\beta}}
\newcommand {\La}{\mathbf{\Lambda}}
\newcommand {\Sig}{\mathbf{\Sigma}}
\newcommand {\Ps}{\mathbf{\Psi}}
\newcommand {\Gam}{\mathbf{\Gamma}}
\newcommand {\De}{\mathbf{\Delta}}
\newcommand {\Diag}{\mathrm{diag}}
\newcommand {\Tr}{\mathrm{tr}}
\renewcommand {\vec}{\mathrm{vec}}
\newcommand {\Exp}{\mathrm{E}}
\newcommand {\CN} {\mathcal{CN}}
\renewcommand {\Pr} {\mathsf{Pr}}
\newcommand {\Compl}{\mbox{$\mathbb{C}$}}
\newcommand {\Real}{\mbox{$\mathbb{R}$}}
\newcommand{\snr}{\mathsf{snr}}
\newcommand{\sinr}{\mathsf{sinr}}
\newcommand{\mse}{\mathsf{mse}}
\newcommand{\tras}{\mathrm{T}}
\begin{document}

\title{ Covariance-Aided CSI Acquisition with Non-Orthogonal Pilots in Massive MIMO: \\A Large-System Performance Analysis}
\author{Alexis~Decurninge, Luis~G.~Ord\'o\~nez, and Maxime~Guillaud \thanks{The authors are with the Mathematics and Algorithmic Sciences Lab., Paris Research Center, Huawei Technologies France, Boulogne-Billancourt. Emails: \texttt{\{alexis.decurninge, luis.ordonez, maxime.guillaud\}@huawei.com}}
\thanks{This paper was presented in part at the 2019 IEEE Global Communications Conference (GLOBECOM), Hawaii, USA, 9--13 December 2019.}}
\maketitle

\begin{abstract}
Massive multiple-input multiple-output (MIMO) systems use antenna arrays with a large number of antenna elements to serve many different users simultaneously. The large number of antennas in the system makes, however, the channel state information (CSI) acquisition strategy design critical and particularly challenging. Interestingly, in the context of massive MIMO systems, channels exhibit a large degree of spatial correlation which results in strongly rank-deficient spatial covariance matrices at the base station (BS). With the final objective of analyzing the benefits of covariance-aided uplink multi-user CSI acquisition in massive MIMO systems, here we compare the channel estimation mean-square error (MSE) for (i) conventional CSI acquisition, which does not assume any knowledge on the user spatial covariance matrices and uses orthogonal pilot sequences; and (ii) covariance-aided CSI acquisition, which exploits the individual covariance matrices for channel estimation and enables the use of non-orthogonal pilot sequences. We apply a large-system analysis to the latter case, for which new asymptotic MSE expressions are established under various assumptions on the distributions of the pilot sequences and on the covariance matrices. We link these expressions to those describing the estimation MSE of conventional CSI acquisition with orthogonal pilot sequences of some equivalent length. This analysis provides insights on how much  training overhead can be reduced with respect to the conventional strategy when a covariance-aided approach is adopted.
\end{abstract}

\section{Introduction}

\noindent Massive multiple-input multiple-output (MIMO) \cite{Rusek_etal_SPMag2012,Larsson_MassiveMIMO_IEEE_CommMag2014,BjornsonHoydisSanguinetti} is considered to be one of the key technologies for realizing the performance targets expected from future wireless systems \cite{Boccardi2014_5G}. Massive MIMO base stations (BSs) use antenna arrays with the number of antenna elements being some orders of magnitude larger than classical MIMO technology. As a result, the system spectral efficiency can be effectively boosted by spatially multiplexing many different users in the same communication resource element \cite{JSAC2013HoydisDebbah}. This requires, however, accurate channel state information (CSI).  Given the large number of antennas at the BS and the large number of users simultaneously served, the CSI acquisition strategy design is critical and particularly challenging in massive MIMO systems.

Conventional cellular MIMO systems acquire CSI by sensing the channel with pilot signals, known at both sides of the communication link.  In the absence of prior information about the channel statistics or when the channels are independent and identically distributed (i.i.d.), it is well known that the length of these pilot sequences should at least coincide with the total number of transmit antennas in the system \cite{Hassibi2003} for the channels to be identifiable. Additionally, mutually orthogonal pilot sequences are preferred, since they result in a better channel estimation accuracy with covariance-agnostic channel estimators \cite{Hassibi2003,Zhan97}. For the problem we consider in this paper, i.e., uplink (UL) CSI acquisition in a massive MIMO system serving single-antenna users, these conditions impose the pilot length to be at least equal to the number of users for which the channel is being simultaneously estimated.  Depending on the coherence time of the channel or, more exactly, on the channel sensing periodicity, the transmission of long pilots sequences (to guarantee orthogonality) instead of data-bearing symbols can represent a significant loss in UL spectral efficiency.

Interestingly, in the context of massive MIMO systems, channels are far from being i.i.d and, in contrast, they exhibit a large degree of spatial correlation which results in a strongly rank-deficient spatial covariance matrix at the BS, as shown by numerous channel measurements campaigns as well as by theoretical channel models. For instance, experimental data in \cite{Payami_Tufvesson_Channel_measurements_analysis_2_6GHz_EUCAP12} show that the measured weakest channel singular values are significantly smaller than what would be expected under the Gaussian i.i.d. fading hypothesis \cite{Martinez_TAP18_experimental_MassiveMIMO_propagation}.  This phenomenon is due to the fact that a small number of specular paths dominate in the propagation scenario, as put in evidence e.g. in \cite{DanFei_etal_MassiveMIMO_measurements_analysis_3_3GHz_ChinaCom2015}. Independent experimental results in \cite{Hoydis_MMIMO_measurements_ISWCS2012} confirm a rank-deficient spatial covariance matrix by analyzing the channel correlation and channel matrix condition number observed with a cylindrical array of 112 elements in a mixed line-of-sight and non-line-of-sight scenario. Since the observed correlation is not mitigated by increasing the number of elements in the antenna array, we can conclude that it is a fundamental property of the propagation environment. Furthermore, this rank-deficient property holds irrespective of the massive MIMO array geometry: \cite{Payami_Tufvesson_Channel_measurements_analysis_2_6GHz_EUCAP12} uses a uniform linear array, \cite{Hoydis_MMIMO_measurements_ISWCS2012} a cylindrical array, and \cite{Martinez_TAP18_experimental_MassiveMIMO_propagation} a 2D array for the covariance eigenvalue profile analysis. On a more theoretical side, it was shown in \cite{Yin_Gesbert_etal_coordinated_estimation_JSAC2013,Adhikary_JSDM_IT2013} that if the support of the angle of arrival distribution is assumed to be bounded, the channel covariance matrix is rank-deficient with a rank depending on the tightness of the support bounds.

Consequently, by exploiting the rank-deficiency property, covariance-aided techniques are likely to be a key ingredient in the design of spectrally efficient massive MIMO systems.  Indeed,~\cite{Bjo18_unlimited_capacity} demonstrates that sharing (perfect) covariance information across different cells results in unbounded spectral efficiencies (as the number of BS antennas increases without bound) under a fairly mild assumption on the linear independence between the user covariance matrices. In the context of CSI acquisition, covariance information has been mainly exploited to propose  orthogonal pilot reuse strategies\cite{Yin_Gesbert_etal_coordinated_estimation_JSAC2013, Adhikary_JSDM_IT2013 ,You15, Hajri_etal_MMIMO_user_clustering_pilot_assignment_Allerton2016} or non-orthogonal pilot designs \cite{pilot_length_minimization_Asilomar2015,Jia15,Tom16} to mitigate pilot contamination\cite{Jos09,Jose_etal_contamination_TWC11,Ngo_Marzetta_Larsson_contamination_ICASSP2011}, that is, the undesired effect of obtaining a channel estimate that is contaminated by the channels of other users. All these methods rely on the intuition that users can be (partially) separated in space using their individual covariance matrices during the CSI acquisition process and perfectly orthogonal pilot sequences are no longer needed. In the extreme case of all the individual covariance matrices spanning mutually orthogonal subspaces, unit-length pilot sequences are sufficient to guarantee channel identifiability \cite{pilot_length_minimization_Asilomar2015}. Unfortunately, it is difficult to know how spatial covariance information helps reducing the training overhead required by orthogonal pilots beyond this limit situation.


This work aims at analyzing the fundamental performance of covariance-aided CSI acquisition with non-orthogonal pilots in the massive MIMO regime, i.e., when the individual spatial covariance matrices are rank-deficient and possibly span non-orthogonal subspaces. With this objective, we focus on a MIMO system acquiring the CSI in the uplink under the assumption that the BS is able to perfectly track the individual spatial covariance matrices of all users with negligible additional pilot sequences (as described e.g. in \cite{Dec17}). We study analytically the channel estimation mean-square error (MSE) for the following cases: \textsf{(i)} \emph{conventional CSI acquisition}, which does not assume any spatial covariance knowledge and uses orthogonal pilot sequences; and \textsf{(ii)} \emph{covariance-aided CSI acquisition}, which exploits the individual spatial covariance matrices for channel estimation and possibly uses non-orthogonal pilot sequences. This work is motivated by the difficulty of interpreting the estimation MSE formulas for case \textsf{(ii)} under general covariance matrices and pilot sequences. Specifically, our contribution is as follows:

\begin{itemize}
\item We derive deterministic equivalents for the covariance-aided estimation MSE under different assumptions (either random or deterministic) for the covariance matrices and the set of pilot sequences. This allows to better understand a general situation beyond the extreme cases when orthogonal pilot sequences are used or when the individual spatial covariance matrices span mutually orthogonal subspaces.
\medskip 
\item When the covariance matrices and the pilots sequences are assumed to be drawn from certain random distributions, we link the deterministic equivalent obtained for case \textsf{(ii)}  to the MSE expression of case \textsf{(i)} with orthogonal pilots of an \emph{equivalent length} depending on the received SNRs and the ranks of the covariance matrices of all the users in the system.  This result is used to answer the question of how much the training overhead can be reduced with respect to orthogonal pilots when covariance-aided CSI acquisition is adopted and at least the performance of conventional CSI acquisition needs to be guaranteed.
\medskip 
\item In order to obtain the deterministic equivalents for the MSE in the covariance-aided case, we derive new results on random matrix theory, which are of interest in their own. In particular, we extend the well-known trace-lemma, initially stated in \cite{Bai98}, to block matrices in Proposition \ref{prop:block_trace} using the so-called block-trace operator \cite{Fil98_blktrace}.   
\end{itemize}

The rest of the paper is organized as follows. In Section \ref{sec:preliminaries} we introduce the channel model and describe in detail the adopted training-based CSI acquisition process. Section \ref{sec:mse_analysis} presents the channel estimation MSEs for both the conventional and the covariance-aided CSI acquisition schemes, whereas Section \ref{sec:large_system_analysis} contains the main contribution of this paper, that is, the large-system analysis of the covariance-aided MSE for different covariance matrices and non-orthogonal pilots models. Additionally we apply in Section \ref{sec:large_system_analysis} the obtained deterministic equivalents to approximately solve the so-called CSI pilot length optimization problem in closed-form.  Finally, in Section \ref{sec:numerical} we validate our results via numerical simulations.

\section{Preliminaries} \label{sec:preliminaries}

\noindent In this paper we consider a massive MIMO system, in which a massive MIMO BS with $M$ antennas  estimates the UL channels for  $K+1$ single-antenna users using $K+1$ pilot sequences of length $L$.  The CSI acquisition process can be summarized as follows. First, the $K+1$ users simultaneously transmit their corresponding length-$L$ pilot sequences, which are not necessarily orthogonal, over the same $L$ communication resource elements (e.g., subcarriers in the case of orthogonal frequency-division multiplexing). Then, the BS collects the $M \times L$ observations and estimates the UL channels for the $K+1$ users by means of a linear MMSE (LMMSE) channel estimator.  In the following, we discuss the channel model and we describe the CSI acquisition procedure in more detail.

\subsection{Channel Model}

\noindent Let us assume that the $M\times 1$ narrowband channel connecting the $k$-th single-antenna user with the $M$ BS antennas, $\h_k \in \Compl^M$, can be expressed as
\begin{align} \label{eq:channel_model}
\h_k &= \sqrt{\beta_k} \Sig_k^{1/2} \etav_k,  \qquad  & k=0,\ldots,K 
\end{align}
where $\etav_k \sim \CN(\0,\I_{r_k})$ models the small-scale fading process, $\beta_k>0$ denotes the pathloss, and $\Sig_k$ is the rank-$r_k$ $M \times M$ spatial covariance matrix of user $k$. Here we adopt the widely accepted ``windowed'' wide-sense stationary (WSS) fading channel model (see \cite{Adhikary_JSDM_IT2013,Yin14,Mir18} for details), which assumes that the small scale fading coefficients in $\{\etav_k\}$ are drawn independently and kept fixed during the channel coherence time $T_{\mathsf C}$, whereas the slow time-varying large-scale fading parameters (second-order statistics), $\{\beta_k, \Sig_k\}$ are considered to remain constant over a window  $T_{\mathsf WSS} \gg T_{\mathsf C}$. In consequence, the channel can be approximated as WSS inside this window and we can define
\begin{align} \label{eq:Rk}
 \Exp \{ \h_k \h_k^{\dagger} \} &= \beta_k \Sig_k = \beta_k \U_k \La_k \U_k^{\dagger},  & k=0,\ldots,K
\end{align}
where $\U_k = ( \u_{k,1}, \ldots, \u_{k,r_k})  \in \Compl^{M\times r_k}$ contains the eigenvectors associated with the non-zero eigenvalues of $\Sig_k$, $\lambda_{k,1} \geq \cdots \geq \lambda_{k,r_k}>0$,  and $\La_k = \Diag\big(\lambda_{k,1}, \ldots, \lambda_{k,r_k}\big)$. Additionally, we normalize the covariance matrices $\{\Sig_k\}$ to guarantee that
\begin{align} \label{eq:covariance_normalization}
\Tr( \Sig_k ) & = M, \qquad & k=0,\ldots,K.
\end{align}

\subsection{CSI Acquisition Model}

\noindent The training-based CSI acquisition strategy can be described as follows. We denote by $\P \in \Compl^{L \times (K+1)}$ the matrix gathering the $K+1$ length-$L$ pilot sequences $\mathbf{p}_0, \ldots, \mathbf{p}_K$ assigned to the $K+1$ users:
\begin{align}
\P = \big( \mathbf{p}_0, \ldots,  \mathbf{p}_K \big)
\end{align}
where $\p_k = \big( p_k(1), \ldots, p_k(L)\big)^{\tras}$ satisfies\footnote{Note that this power normalization is more realistic than the assumption that $\|\p\|^2=1$ independently of $L$ commonly adopted in the pilot design literature.}
\begin{align} \label{eq:pilot_normalization}
\frac{1}{L}\sum_{\ell=1}^L |p_k(\ell)|^2 &= 1, \qquad & \qquad k=0,\ldots,K. 
\end{align}
Letting all $K+1$ users simultaneously transmit their respective pilot sequence, the signal received by the BS at the $\ell$-th resource element, $\y(\ell) \in \Compl^{M}$, is
\begin{align} \label{received_signal_l}
\y(\ell) &= \sum_{k=0}^K \sqrt{P_k} \h_k p_k(\ell) + \n(\ell), \qquad &\ell = 1, \ldots, L
\end{align}
where $\{P_k\}$ are the transmit powers and $\n(\ell) = \big(n_1(\ell), \ldots, n_M(\ell)\big)^\tras \in \mathbb{C}^{M}$ denotes the additive white Gaussian noise with i.i.d. circularly symmetric components, $n_m(\ell)\sim \mathcal{CN}(0,\sigma^2)$ for $m = 1, \ldots,M$. Grouping the received signal for the $L$ resource elements dedicated to training in $\Y = \big( \y(1), \ldots, \y(L)\big) \in \Compl^{M\times L}$, the signal model in \eqref{received_signal_l} can be more compactly expressed as
\begin{align} \label{eq:CSI_model_1}
\Y = \Hc \D_{\boldsymbol{P}}^{1/2} \P^{\tras} + \N
\end{align}
with  $\Hc = \big( \h_0, \cdots,  \h_K \big) \in \Compl^{M\times (K+1)}$, $\D_{\boldsymbol{P}} = \Diag\big(P_0, \ldots, P_K\big)\in \Real_{+}^{(K+1)\times (K+1)}$, and $\N =  \big( \n(1), \cdots, \n(L) \big) \in \Compl^{M\times L}$. We can equivalently write 
\begin{align} \label{eq:observation}
\y = \vec(\Y) = \big(\P\D_{\boldsymbol{P}}^{1/2} \otimes \I_M\big)\vec(\Hc) + \vec(\N) = \tilde{\P} \tilde{\D}_{\boldsymbol{P}}^{1/2} \h +\n
\end{align}
where we have defined $\h = \vec(\Hc)= \big(\h_0^{\tras},\ldots, \h_K^{\tras} \big)^{\tras}$, $\tilde{\P} = \P \otimes \I_M$, and  $\tilde{\D}_{\boldsymbol{P}} = \D_{\boldsymbol{P}} \otimes \I_M$. 

Given the observation model in \eqref{eq:observation}, the BS estimates the individual channels from the $K+1$ users, adopting a LMMSE approach, which under the channel model in \eqref{eq:channel_model} is given by \cite[Chap.~12]{Kay93}
\begin{align}  \label{eq:hat_h}
\hat{\h} & = 
 \Sig \tilde{\D}_{\snr} \tilde{\D}_{\boldsymbol{P}}^{-1/2}\tilde{\P}^{\dagger}\big(\tilde{\P} \tilde{\D}_{\snr}^{1/2} \Sig \tilde{\D}_{\snr}^{1/2} \tilde{\P}^{\dagger} + \I_{LM}\big)^{-1} \y  
\end{align}
where we have introduced the received signal-to-noise ratios (SNRs), $\{\snr_k= \beta_k P_k/\sigma^2\}$, $\tilde{\D}_{\snr} = \Diag(\snr_0, \ldots, \snr_K) \otimes \I_M$, and $\Sig = \Diag \big(\Sig_0, \ldots, \Sig_K\big)$.
Observe that the estimator in \eqref{eq:hat_h} requires the knowledge of the second-order statistics of the $K+1$ individual channels. When this information is not available, the estimator in \eqref{eq:hat_h}  is substituted by a mismatched estimator, which has different accuracy depending on how much is assumed to be known from the channel model in \eqref{eq:channel_model}.  
In particular, we distinguish between the following cases:
\begin{itemize}
\item[\textsf{(i)}]\underline{Conventional CSI Acquisition Strategy:}  We assume that the BS either does not have or does not use the individual spatial covariance matrices and, hence, uses $\{\Sig_k = \I_M\}$. It knows the transmit powers $\{P_k\}$, and the received signal-to-noise ratios (SNRs) $\{\snr_k= \beta_k P_k/\sigma^2\}$ including the pathloss information. We also consider that the pilot set gathered in $\P$ is orthogonal, i.e., 
\begin{align} \label{eq:orthogonal_pilots}
\p_k^{\dagger}\p_{k^\prime} &=
\begin{cases}
L , & \qquad k = k^\prime \\ 
0, & \qquad   k\neq k^\prime \\
\end{cases},
& &k=0, \ldots K
\end{align}
which requires the pilot-length to satisfy $L \geq K+1$. Then, the channel estimator in \eqref{eq:hat_h} becomes the mismatched LMMSE estimator $\hat{\h}^{\mathsf{(i)}} = \big((\h^{\mathsf{(i)}}_0)^{\tras},\ldots, (\h^{\mathsf{(i)}}_K)^{\tras} \big)^{\tras}$ with $\h^{\mathsf{(i)}}_k$ given by
\begin{align}
\hat{\h}^{\mathsf{(i)}}_k &= \Big( \frac{\snr_k/ \sqrt{P_k}}{1+L\snr_k} \Big) \tilde{\P}_k^{\dagger} \y, \qquad &k=0,\ldots,K
\label{eq:hat_h_k_i}
\end{align}  
where $\tilde{\P}_k = \p_k \otimes \I_M$. This channel estimation technique coincides with the element-wise MMSE estimator in \cite[Sec.~3.4]{BjornsonHoydisSanguinetti} when the diagonal entries of the individual  spatial covariance matrices $\{\Sig_k\}$ are assumed to be 1.
This case is analyzed in Section \ref{sec:case_i}. 

\bigskip

\item[\textsf{(ii)}]\underline{Covariance-Aided CSI Acquisition Strategy:} We assume that the BS exploits the knowledge of the individual spatial covariance matrices $\{\Sig_k\}$, the transmit powers $\{P_k\}$, and the received SNRs $\{\snr_k\}$, during CSI acquisition, and uses an arbitrary (possibly non-orthogonal) pilot set $\P$, i.e., \eqref{eq:orthogonal_pilots} is not satisfied. The channel estimator in that case is directly obtained from \eqref{eq:hat_h}, that is,  $\hat{\h}^{\mathsf{(ii)}} = \big((\h^{\mathsf{(ii)}}_0)^{\tras},\ldots, (\h^{\mathsf{(ii)}}_K)^{\tras} \big)^{\tras}$ with $\hat{\h}^{\mathsf{(ii)}}_k$ given by 
\begin{align}
\label{eq:hat_h_k_ii}
\hat{\h}_k^{\mathsf{(ii)}} & =  
\big( \snr_k / \sqrt{P_k}\big) \Sig_k\tilde{\P}_k^{\dagger}\big(\tilde{\P}\tilde{\D}_{\snr}^{1/2} \Sig \tilde{\D}_{\snr}^{1/2}\tilde{\P}^{\dagger} + \I_{LM}\big)^{-1} \y , \qquad & k=0,\ldots,K.
\end{align}
This case is investigated in Sections~\ref{sec:case_ii} and \ref{sec:large_system_analysis}.
\end{itemize}

\subsection{CSI identifiability} \label{sec:identifiability}

\noindent Let us now present the identifiability conditions on the system parameters: the pilot length $L$ and the ranks of the individual covariance matrices $\{r_k\}$, which enable to identify the CSI vector $\h$ from the observations $\y$ in \eqref{eq:observation} in the noiseless case ($\n=\bf{0}$). Using an equation counting argument, we see  that CSI identifiability requires $\text{rank}(\tilde{\P}\tilde{\D}_{\P}\Sig^{1/2})=\sum_{k=0}^K r_k$ or, equivalently, that 
\begin{equation}\label{eq:identifiability}
\text{rank}(\tilde{\P}\U)=\sum_{k=0}^K r_k
\end{equation}
where $\U=\text{diag}(\U_1,\dots,\U_K)$, so that the system can be uniquely inverted. Since $\tilde{\P}\U$ has $ML$ rows and $\sum_{k=0}^K r_k$ columns, a necessary condition for CSI identifiability is 
\begin{equation}\label{eq:size_constraint}
\frac{1}{L}\sum_{k=0}^K \frac{r_k}{M} \leq 1.
\end{equation}
In particular, if all covariance matrices are full rank, $\{r_k = M\}$, the CSI is identifiable if and only if $\text{rank}(\P)=K+1$, i.e., it is necessary that $L\geq K+1$. On the contrary, if all the covariance matrices $\{\Sig_k\}$ span orthogonal subspaces, we necessarily have that  $\sum_{k=0}^K r_k \leq M$ and, hence, the CSI is identifiable for $L\geq 1$. Besides these two extreme cases, it is hard to establish identifiability conditions for general pilot sequences and user covariance matrices and this is exactly what complicates the MSE analysis of covariance-aided CSI acquisition. Still, thanks to the large-system analysis in Section \ref{sec:large_system_analysis}, we are able extract meaningful conclusions for the intermediate cases.

\section{MSE Analysis of CSI Acquisition Strategies} \label{sec:mse_analysis}

\noindent In this section we present the channel estimation MSE expressions for both the conventional and the covariance-aided CSI acquisition schemes. In particular, we characterize the performance of each CSI acquisition strategy by the channel estimation mean-square error (MSE) of a given user, denoted as user $0$. This incurs in no loss of generality and allows us to derive useful insights by considering the other users as interferers.

\subsection{Conventional CSI Acquisition} \label{sec:case_i}

\noindent Let us first focus on the conventional CSI acquisition strategy, which 
uses orthogonal pilots and applies the mismatched (covariance-agnostic) LMMSE channel estimator in \eqref{eq:hat_h_k_i}.
The channel estimation error covariance matrix in this case is given by 
\begin{align}
\C_{\e}^{(\mathsf{i})} \big( \{\Sig_k\}, \{\snr_k\} \big) & = \Exp\big\{ (\h - \hat{\h}^{\mathsf{(i)}})(\h - \hat{\h}^{\mathsf{(i)}})^{\dagger} \big\}  = \tilde{\D}_{\betav}^{1/2} \big(\I_{MK} + L \tilde{\D}_{\snr}\big)^{-1} \big(\Sig + L \tilde{\D}_{\snr}\big) \big(\I_{MK} + L\tilde{\D}_{\snr}\big)^{-1} \tilde{\D}_{\betav}^{1/2}
\end{align}
with $\tilde{\D}_{\betav}= \Diag(\beta_0, \ldots, \beta_K) \otimes \I_M$, and the individual error covariance matrix for user $0$ 
follows from the first $M \times M$ block of in the diagonal of $\C_{\e}^{(\mathsf{i})} \big( \{\Sig_k\}, \{\snr_k\} \big)$:
\begin{align} \label{eq:Ce_0_i}
\C_{\e_0}^{(\mathsf{i})} \big(\Sig_0, \snr_0 \big) & = \Exp\big\{ (\h_0 - \hat{\h}_0^{\mathsf{(i)}})(\h_0 - \hat{\h}_0^{\mathsf{(i)}})^{\dagger} \big\} = \frac{\beta_0}{(1+L\snr_0)^2} \big(\Sig_0 + L\snr_0\I_{M} \big).
\end{align}
The channel estimation MSE for user $0$ can be derived from \eqref{eq:Ce_0_i} as presented in the next lemma.
\medskip
\begin{lemma}\label{lem:mse_i_orthogonal} 
Let the pilot set $\P$ contain $K+1$ orthogonal pilot sequences of length $L\geq K+1$, so that \eqref{eq:orthogonal_pilots} holds. Then, the individual MSE of estimating $\h_0$ with the estimator $\hat{\h}^{\mathsf{(i)}}_0$ in \eqref{eq:hat_h_k_i} is given by 
\begin{align}\label{eq:sum_mse_i}
\mse_0^{(\mathsf{i})} (\snr_0) & \triangleq \frac{1}{M} \Tr \big( \C^{\mathsf{(i)}}_{\e_0} (\Sig_0, \snr_0) \big) = \frac{\beta_0}{1+L\snr_0}.
\end{align} 
\end{lemma}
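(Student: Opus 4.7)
The plan is to proceed by direct computation: the error covariance matrix $\C_{\e_0}^{(\mathsf{i})}$ is already provided in closed form in \eqref{eq:Ce_0_i} immediately above the lemma, so the lemma reduces to normalizing its trace by $M$ and invoking the covariance normalization \eqref{eq:covariance_normalization}. The orthogonality hypothesis \eqref{eq:orthogonal_pilots} is what makes \eqref{eq:Ce_0_i} take that simple diagonal-plus-covariance form in the first place, so that step deserves a brief sanity check before the trace computation.

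I would first (briefly) verify \eqref{eq:Ce_0_i} since the estimator \eqref{eq:hat_h_k_i} is covariance-agnostic and hence mismatched. Using the Kronecker structure, the orthogonality condition translates to $\tilde{\P}_k^\dagger \tilde{\P}_{k'} = (\p_k^\dagger \p_{k'})\otimes \I_M = L\,\delta_{k,k'}\,\I_M$, which decouples users when the observation \eqref{eq:observation} is projected through $\tilde{\P}_0^\dagger$. Plugging this into \eqref{eq:hat_h_k_i} gives
\[
\h_0 - \hat{\h}_0^{(\mathsf{i})} \;=\; \frac{1}{1+L\snr_0}\,\h_0 \;-\; \frac{\snr_0/\sqrt{P_0}}{1+L\snr_0}\,\tilde{\P}_0^\dagger \n,
\]
and taking the outer expectation, using $\Exp\{\h_0\h_0^\dagger\} = \beta_0\Sig_0$, $\Exp\{\n\n^\dagger\} = \sigma^2\I_{LM}$, $\tilde{\P}_0^\dagger\tilde{\P}_0 = L\I_M$, and $\snr_0 = \beta_0 P_0/\sigma^2$, reproduces \eqref{eq:Ce_0_i}.

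The lemma itself then follows in one line:
\[
\frac{1}{M}\Tr\bigl(\C_{\e_0}^{(\mathsf{i})}\bigr) \;=\; \frac{\beta_0}{M(1+L\snr_0)^2}\Bigl(\Tr(\Sig_0) + L\snr_0 M\Bigr) \;=\; \frac{\beta_0\,(1+L\snr_0)}{(1+L\snr_0)^2} \;=\; \frac{\beta_0}{1+L\snr_0},
\]
where $\Tr(\Sig_0) = M$ by \eqref{eq:covariance_normalization}. There is essentially no obstacle; the only mild bookkeeping concern is tracking the Kronecker products when verifying the user-decoupling under orthogonal pilots, and everything else is a linearity-of-trace simplification. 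The result is noteworthy not for its difficulty but because it will serve as the benchmark against which the covariance-aided deterministic equivalents of Section~\ref{sec:large_system_analysis} are compared, and its particularly clean dependence on the \emph{product} $L\snr_0$ is what suggests the later notion of an equivalent pilot length.
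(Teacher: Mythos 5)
Your proof is correct and follows essentially the same route as the paper, which presents the lemma as an immediate consequence of the error covariance expression \eqref{eq:Ce_0_i} together with the normalization $\Tr(\Sig_0)=M$ from \eqref{eq:covariance_normalization}; your explicit verification of \eqref{eq:Ce_0_i} via the decoupling $\tilde{\P}_0^\dagger\tilde{\P}_k = L\,\delta_{0,k}\I_M$ is a correct elaboration of a step the paper leaves implicit.
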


\subsection{Covariance-Aided CSI Acquisition} \label{sec:case_ii}

\noindent Let us now analyze the case in which the spatial covariance matrices are exploited in the CSI acquisition process. In consequence, let us assume now that the BS knows the spatial covariance matrices $\{\Sig_k\}$, the transmit powers $\{P_k\}$, and the received SNRs, $\{\snr_k \}$, and estimates the channel using the LMMSE estimator in \eqref{eq:hat_h_k_ii}. Then, the channel estimation error covariance matrix 
is given by 
\begin{align}
\C_{\e}^{\mathsf{(ii)}}\big(\P, \{\Sig_k\}, \{\snr_k\} \big) 
& = \Exp\big\{ (\h - \hat{\h}^{\mathsf{(ii)}})(\h - \hat{\h}^{\mathsf{(ii)}})^{\dagger} \big\} \\
& = \tilde{\D}_{\betav}^{1/2} \big( \Sig - \Sig \tilde{\D}_{\snr}^{1/2}\tilde{\P}^{\dagger} \big( \tilde{\P}\tilde{\D}_{\snr}^{1/2}\Sig \tilde{\D}_{\snr}^{1/2}\tilde{\P}^{\dagger} + \I_{LM} \big)^{-1}\tilde{\P} \tilde{\D}_{\snr}^{1/2} \Sig \big) \tilde{\D}_{\betav}^{1/2} \label{eq:Ce_ii}
\end{align}
and, hence, the individual error covariance matrix for user $0$ is
\begin{align}
\C_{\e_0}^{\mathsf{(ii)}}\big(\P, \{\Sig_k\}, \{\snr_k\}\big) & =\Exp\big\{ (\h_0 - \hat{\h}_0^{\mathsf{(ii)}})(\h_0 - \hat{\h}_0^{\mathsf{(ii)}})^{\dagger} \big\} 
 = \beta_0 \big( \Sig_0 - \snr_0 \Sig_0 \tilde{\P}_0^{\dagger} \big( \tilde{\P}\tilde{\D}_{\snr}^{1/2}\Sig \tilde{\D}_{\snr}^{1/2}\tilde{\P}^{\dagger} + \I_{LM} \big)^{-1}\tilde{\P}_0 \Sig_0 \big).\label{eq:Ce_ii_k}
\end{align}
Finally, the channel estimation MSE for user $0$ in the covariance-aided case is 
\begin{align} \label{eq:sum_mse_ii_0}
\mse^{(\mathsf{ii})}_0 \big(\P,\{\Sig_k\}, \{\snr_k\}\big) &\triangleq \frac{\beta_0}{M}\Tr\big(\C_{\e_0}^{\mathsf{(ii)}}\big(\P, \{\Sig_k\}, \{\snr_k\}\big) \big)\\
& = \frac{1}{M} \sum_{i=1}^{r_0} \frac{\beta_0\lambda_{0,i}}{1+ \snr_0 \lambda_{0,i}  \label{eq:sum_mse_ii}
\big( \p_0 \otimes \u_{0,i}\big)^{\dagger}  \Big( \sum_{k=1}^K \snr_k \big( \p_k\p_k^{\dagger} \otimes \Sig_k \big)  + \I_{ML} \Big)^{-1} \big( \p_0 \otimes \u_{0,i}\big)} 
\end{align}
where we have used the covariance matrix decomposition in \eqref{eq:Rk} and Lemma~\ref{lem:inv} postponed in Appendix~\ref{app:pre_results}.

For convenience, let us now introduce the estimation signal-to-interference-plus-noise ratio (SINR) as follows. Recall from \eqref{eq:hat_h_k_ii} that $\hat{\h}^{\mathsf{(ii)}}_0 = \E_0^{\dagger}\y $ with $\E_0\triangleq\big( \snr_{0} / \sqrt{P_0}\big) \big(\tilde{\P}\tilde{\D}_{\snr}^{1/2} \Sig \tilde{\D}_{\snr}^{1/2}\tilde{\P}^{\dagger} + \I_{LM}\big)^{-1} \tilde{\P}_{0} \Sig_{0} $ and the received signal $\y$ as given in \eqref{eq:observation}. Thanks to the linearity of the estimator, we can identify the useful signal as the contribution originated from the transmission of $\p_0$ by user $0$, i.e., $\sqrt{P_0}\E_0^{\dagger} \tilde{\P}_0 \h_0$ and denote the rest as interference-plus-noise, i.e., $\E_0^{\dagger}\big(\y-\sqrt{P_0}\E_0^{\dagger} \tilde{\P}_0 \h_0\big)$. Accordingly, we define the estimation SINR measured in the subspace spanned by each eigenvector of $\Sig_0$ as the ratio of the expectation (with respect to the noise) of the two quantities, i.e.,
\begin{align} \label{eq:SINR_0}
\sinr_{0,i}^{(\mathsf{ii})} \big(\P,\{\Sig_k\}, \{\snr_k\}\big) & = \frac{  \u_{0,i}^{\dagger} \big(\snr_0 \E_0^{\dagger} \tilde{\P}_0\Sig_0 \tilde{\P}_0^{\dagger} \E_0\big) \u_{0,i} } { \u_{0,i}^{\dagger} \big( \E_0^{\dagger} \big( \sum_{k=1}^K \snr_k \tilde{\P}_k\Sig_k\tilde{\P}_k^{\dagger} + \I_{ML}\big)  \E_0 \big) \u_{0,i}} \\ \label{eq:SINR}
& = \snr_0 \lambda_{0,i}  \big( \p_0 \otimes \u_{0,i}\big)^{\dagger}  \Big( \sum_{k=1}^K \snr_k \big( \p_k\p_k^{\dagger} \otimes \Sig_k \big)  + \I_{ML} \Big)^{-1} \big( \p_0 \otimes \u_{0,i}\big), 
& \qquad i=1, \ldots, r_0
\end{align}
where we used again Lemma~\ref{lem:inv}. Identifying terms, we can now rewrite the MSE expression in  \eqref{eq:sum_mse_ii} as
\begin{align} \label{eq:sum_mse_ii_sinr}
\mse^{(\mathsf{ii})}_0 \big(\P,\{\Sig_k\}, \{\snr_k\}\big) = \frac{1}{M} \sum_{i=1}^{r_0} \frac{\beta_0\lambda_{0,i}}{1+ \sinr_{0,i}^{(\mathsf{ii})} \big(\P,\{\Sig_k\}, \{\snr_k\}\big)}.
\end{align}

It is interesting to observe (e.g., in \eqref{eq:SINR_0}) that $\sinr_{0,i}^{(\mathsf{ii})} \big(\P,\{\Sig_k\}, \{\snr_k\}\big) \leq L\snr_0 \lambda_{0,i}$, where the upper bound is achieved for $i=1, \ldots, r_0$, when the interference from the $K$ users is completely canceled by LMMSE channel estimator. In that case the MSE in the following lemma results.  
\medskip

\begin{lemma} \label{lem:mse_ii_orthogonal} 
Let one or both following conditions hold:
\begin{itemize}
  \item[\emph{\textbf{(a)}}] (Orthogonal pilot condition). Pilot sequence $\p_0$ is orthogonal to the rest of pilot sequences, i.e., 
  \begin{align} \label{eq:orthogonal_pilots_0}
    \p_0^{\dagger}\p_k & = 0,  \qquad & k = 1, \ldots, K
  \end{align}
  \item[\emph{\textbf{(b)}}] (Orthogonal covariance subspaces). The $r_0$-dimensional subspace spanned by the covariance matrix of user $0$ is orthogonal to the subspace spanned by the covariance matrices of the $K$ interfering users, i.e.,
  \begin{align} \label{eq:orthogonal_covs_0}
    \U_0^{\dagger}\Sig_k & = \0_{r_0},  \qquad & k = 1, \ldots, K.
  \end{align}   
\end{itemize}
Then, $\sinr_{0,i}^{(\mathsf{ii})} \big(\P,\{\Sig_k\}, \{\snr_k\}\big) = L\snr_0 \lambda_{0,i}$ and the individual MSE of estimating $\h_0$ with the estimator $\hat{\h}^{\mathsf{(ii)}}_0$ in \eqref{eq:hat_h_k_ii} is 
\begin{align} \label{eq:mse_ii_orthogonal}
\mse^{(\mathsf{ii})}_0 \big(\Sig_0, \snr_0\big) & \triangleq
\frac{1}{M} \sum_{i=1}^{r_0}\frac{\beta_0 \lambda_{0,i}   }{1+L \snr_0 \lambda_{0,i} } 
\end{align}
where  $\lambda_{0,1}\geq \cdots \geq \lambda_{0,r_0}>0$ are the non-zero eigenvalues of $\Sig_0$ as introduced in \eqref{eq:Rk}.
\end{lemma}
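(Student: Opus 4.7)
The plan is to start from the SINR expression in \eqref{eq:SINR} and show that under either assumption (a) or (b), the vector $\p_0 \otimes \u_{0,i}$ is a fixed point (eigenvector with eigenvalue $1$) of the matrix
\begin{equation*}
\M \triangleq \sum_{k=1}^K \snr_k \bigl(\p_k\p_k^{\dagger} \otimes \Sig_k\bigr) + \I_{ML},
\end{equation*}
so that $\M^{-1}(\p_0 \otimes \u_{0,i}) = \p_0 \otimes \u_{0,i}$ and, using the pilot normalization \eqref{eq:pilot_normalization} together with $\|\u_{0,i}\|=1$, the quadratic form reduces to $\|\p_0\|^2 = L$. Substituting this into \eqref{eq:SINR} immediately gives $\sinr_{0,i}^{(\mathsf{ii})} = L\snr_0 \lambda_{0,i}$, and plugging the result into \eqref{eq:sum_mse_ii_sinr} yields the claimed MSE formula \eqref{eq:mse_ii_orthogonal}.

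The core computation I would carry out is simply
\begin{equation*}
\M(\p_0 \otimes \u_{0,i}) = \sum_{k=1}^K \snr_k \bigl(\p_k \p_k^{\dagger}\p_0 \otimes \Sig_k \u_{0,i}\bigr) + \bigl(\p_0 \otimes \u_{0,i}\bigr),
\end{equation*}
and then observe that each term in the sum vanishes under either hypothesis. Under (a), the orthogonality condition \eqref{eq:orthogonal_pilots_0} gives $\p_k^{\dagger}\p_0 = 0$ for all $k\geq 1$. Under (b), the Hermitian symmetry of $\Sig_k$ together with \eqref{eq:orthogonal_covs_0} yields $\Sig_k \u_{0,i} = 0$ for all $k\geq 1$ and all $i \in \{1,\ldots,r_0\}$ (since $\u_{0,i}$ is a column of $\U_0$). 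In either case the summation collapses to zero, leaving $\M(\p_0 \otimes \u_{0,i}) = \p_0 \otimes \u_{0,i}$.

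There is essentially no hard obstacle here: the argument is algebraic and follows at once from the Kronecker structure of $\M$. The only points worth being careful about are the conversion from $\U_0^{\dagger}\Sig_k = \0$ to $\Sig_k \u_{0,i} = \0$ (which requires invoking the Hermitian property of $\Sig_k$), and remembering that the identifiability issues raised in Section~\ref{sec:identifiability} do not obstruct the computation because the MMSE estimator is always well defined even without identifiability. Once the eigenvector property is established, the final step is a one-line substitution into \eqref{eq:sum_mse_ii_sinr}.
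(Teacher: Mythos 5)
Your proof is correct and follows essentially the same route the paper implies (the paper states the lemma without a formal proof, only remarking that the SINR upper bound $L\snr_0\lambda_{0,i}$ is attained when the interference is completely cancelled). Showing that $\p_0\otimes\u_{0,i}$ is fixed by $\M$ via the Kronecker mixed-product identity, and hence by $\M^{-1}$, is the natural formalization of that remark, and your handling of the two hypotheses — conjugating \eqref{eq:orthogonal_pilots_0} to get $\p_k^{\dagger}\p_0=0$, and using Hermitian symmetry to pass from $\U_0^{\dagger}\Sig_k=\0$ to $\Sig_k\u_{0,i}=\0$ — is exactly right.
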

\medskip

Observing that $f(x) = \frac{x}{1+ax}$ is concave in $x>0$ for any $a>0$, we can apply Jensen's inequality to see that 
\begin{align} \label{eq:mse_ii_ub}
\mse^{(\mathsf{ii})}_0 \big(\Sig_0, \snr_0\big)   
\leq \frac{\beta_0 (\Tr(\Sig_0)/M)}{1 +L \snr_0 (\Tr(\Sig_0)/r_0)}  = \frac{\beta_0 }{1 +L \snr_0 (M/r_0) }
\end{align}
with equality when $\lambda_{0,1} = \cdots = \lambda_{0,r_0}$. We can conclude that, under the conditions of Lemma \ref{lem:mse_ii_orthogonal}, covariance-aided CSI acquisition strictly outperforms the conventional strategy with orthogonal pilots in a massive MIMO system, that is,  $\mse^{(\mathsf{ii})}_0 \big(\Sig_0, \snr_0\big) < \mse^{(\mathsf{i})}_0 \big(\snr_0\big)$, whenever the eigenvalues $\lambda_{0,1}, \dots,\lambda_{0,r_0}$ are not all equal and, in particular, when the spatial covariance matrix of user $0$ is not full-rank  ($r_0 < M$). This can be easily interpreted as follows. Both the conventional and the covariance-aided CSI strategies under the conditions in Lemma \ref{lem:mse_ii_orthogonal} effectively remove any interference caused by the pilots of the $K$ remaining users, so that they become purely noise-limited. The covariance-aided channel estimator in \eqref{eq:hat_h_k_ii} additionally removes all the noise outside the subspace spanned by $\Sig_0$ and this reduces the noise power at least by a factor of $M/r_0$, as confirmed by \eqref{eq:mse_ii_ub}.        



\section{Large-System Analysis of Covariance-Aided CSI Acquisition} \label{sec:large_system_analysis}

\noindent Given the difficulty of interpreting the effect of the pilot sequences $\P$ and the spatial covariance matrices $\{\Sig_k\}$ on the channel estimation MSE for the covariance-aided CSI acquisition strategy as given in \eqref{eq:sum_mse_ii}, in this section we adopt a large-scale analysis approach. Indeed, we study $\mse^{(\mathsf{ii})}_0 \big(\P,\{\Sig_k\}, \{\snr_k\}\big)$ asymptotically in $K, L, M$ with given fixed ratios under different assumptions on $\{ \Sig_k\}$ and  $\P \in \Compl^{L \times (K+1)}$ and derive the corresponding deterministic equivalents as summarized in Table~\ref{tab:summary}. Deterministic equivalents provide asymptotically tight deterministic approximations for the MSE, which allow us to decouple the effects of the non-orthogonality of the pilots and the non-orthogonality of the covariance matrices.

\subsection{Deterministic Equivalent for Random Covariance Matrices and Deterministic Pilots} \label{sec:random_covariances}

\noindent Let us assume in this section that the pilot length $L$ and the number of interfering users $K$ are finite and that the individual covariance matrix of user $0$, $\Sig_0$,  and the pilots sequences $\P$ are deterministic. Furthermore, the individual spatial covariance matrices of the interfering users are assumed to be drawn from a random distribution.   

Besides the rank-deficiency, little information is available about the distribution of realistic spatial covariance matrices in massive MIMO, as few experimental works have concentrated specifically on that point (it would require to measure the channel with different array geometries and over different scenarios). Therefore, we adopt the maximum entropy principle \cite{Debbah_maxent}, and observe that among all the distributions over rank-$r_k$ positive semidefinite matrices of a given size and trace, the Wishart distribution with $r_k$ degrees of freedom and column covariance matrix proportional to the identity is the one that has maximum differential entropy \cite[Section~18.2.2.3]{debbahcouillet}. Accordingly, we  assume  the  following (general) random model for the individual spatial covariance matrices.


\begin{table}[t!] 
  {\centering \small
   \begin{tabular}{m{1.0cm} m{3.7cm} m{2.3cm} m{1.7cm} m{7.1cm}}
  \hline \addlinespace[.2cm] 
   &\textbf{Deterministic Equivalent \newline to the estimation MSE} & \textbf{Covariance\newline Matrices}  & \textbf{Pilots} & \textbf{Asymptotic Regime}  \\
  \addlinespace[.2cm] \hline \addlinespace[.1cm] 
  Thm.~\ref{thm:deteq1} & $\xi_{0}^{(\mathsf{ii})}\big(\P, \Sig_0, \{\snr_k\} \big)$ & $\Sig_0$ deterministic \newline $\{\Sig_k\}$ random  & deterministic & $\sbullet M, r_k\rightarrow \infty$ with ratios $\tau_{M,k}=r_k/M$, where\newline  $\phantom{\sbullet}$$0<\lim\inf_{M,r_k} \tau_{M,k}$ \newline $\sbullet K,L$ finite \\
  \addlinespace[.1cm]  \hline \addlinespace[.1cm] 
  Thm.~\ref{thm:deteq3} & $\xi_{0}^{(\mathsf{ii})}\big(\{\Sig_k\}, \{\snr_k\}; \Gam_L\big)$ & $\Sig_0$,  $\{\Sig_k\}$  \newline deterministic & random & $\sbullet M, r_k$ finite \newline $\sbullet K,L \rightarrow \infty$ with ratio $\alpha_L = L/K$, where \newline $\phantom{\sbullet}$$0<\lim\inf_{K,L}\alpha_L$ \\
   \addlinespace[.1cm]  \hline \addlinespace[.1cm] 
  Thm.~\ref{thm:deteq3_uni} & $\xi_{0}^{(\mathsf{ii})}\big(\{\Sig_k\}, \{\snr_k\}; \Gam_L\big)$ & $\Sig_0$,  $\{\Sig_k\}$  \newline deterministic & random & $K,L,M,r_k\rightarrow \infty$ with ratios: \newline $\sbullet K,L \rightarrow \infty$ with ratio $\alpha_L = L/K$,  where \newline$\phantom{\sbullet}$$0<\lim\inf_{K,L}\alpha_L$, and \newline $\sbullet$there exists $0<\delta<1$ such that \newline $\phantom{\sbullet}$$\lim\sup_{M,L} ML^{-\delta}<\infty$\\
  \addlinespace[.1cm]  \hline \addlinespace[.1cm] 
  Thm.~\ref{thm:deteq4} & $\bar{\xi}_{0}^{(\mathsf{ii})}\big(\Sig_0, \{\snr_k\} ; \gamma_L \big)$ & $\Sig_0$ deterministic \newline $\{\Sig_k\}$ random & random & $K,L,M,r_k\rightarrow \infty$ with ratios: \newline $\sbullet \tau_{M,k}=r_k/M$, where \newline$\phantom{\sbullet}$$0 < \lim \inf_{M,r_k} \tau_{M,k} \leq \lim \sup_{M,r_k} \tau_{M,k}< 1$  \newline $\sbullet \alpha_L = L/K$, where $0<\lim\inf_{K,L} \alpha_L$, and
  \newline $\sbullet$there exists $\frac{2}{3}\leq\delta<1$ such that \newline $\phantom{\sbullet}$$0<\lim\inf_{M,L} ML^{-\delta}\leq \lim\sup_{M,L} ML^{-\delta}<\infty$\\
   \addlinespace[.1cm]  \hline  
  \end{tabular}}
  \caption[]{Summary of asymptotic results for $\mse^{(\mathsf{ii})}_0 \big(\P,\{\Sig_k\}, \{\snr_k\}\big)$.\footnotemark}
  \label{tab:summary}
  \end{table}
  
  \footnotetext{Note that the assumptions on the asymptotic regime provided in this table are standard with the exception of the existence of some $\delta<1$ such that $\lim\sup_{M,L} ML^{-\delta}<\infty$, which prevents from considering the case of $M/L$ being bounded. However, we conjecture that this assumption could be relaxed without changing our results.}

\begin{assumption}[\textbf{Random covariance model}]
\label{ass:covmodel}
The individual covariance matrices of the $K$ interfering users are assumed to be random according to:
\begin{align} \label{ass:sig_k}
\Sig_{k} & = \frac{1}{r_k}\sum_{i=1}^{r_k} \x_{k,i}\x_{k,i}^\dagger, & \qquad k=1,\ldots, K
\end{align} 
where $r_k$ is the rank of $\Sig_{k}$ with probability $1$ and the entries of $\x_{k,i}$ are i.i.d.~with zero-mean, unit variance, and have finite eighth order moment. Observe that the covariance matrix normalization in \eqref{eq:covariance_normalization} is now satisfied in expectation. 
\end{assumption}

\noindent Note that Assumption~\ref{ass:covmodel} is in fact more general than the maximum entropy covariance matrix assumption. The (entropy maximizing) Wishart distribution is obtained by adding the Gaussianity assumption to \ref{ass:covmodel}.


\noindent Then, the deterministic equivalent of the MSE of user $0$ as $M, \{r_k\}\rightarrow \infty$ given in the next theorem follows.
\medskip
\begin{theorem}
\label{thm:deteq1}
Assume that the individual covariance matrices of the interfering users, $\{\Sig_k\}$, follow Assumption~{\rm\ref{ass:covmodel}}. Let $\lambda_{0,1}\geq \cdots \geq \lambda_{0,r_0}>0$ denote the non-zero eigenvalues of $\Sig_0$ and define
\begin{equation} \label{eq:mse_ii_eq}
\xi_{0}^{(\mathsf{ii})}\big(\P, \Sig_0, \{\snr_k\} \big)  \triangleq
\frac{1}{M}\sum_{i=1}^{r_0}\frac{\beta_0 \lambda_{0,i}}{1+\snr_0 \lambda_{0,i}  \p_0^\dagger\S_{L}^{-1}\p_0}
\end{equation}
where
\begin{equation}\label{eq:SM}
\S_{L} = \sum_{k=1}^K \frac{\snr_k }{1+\iota_{M,k}}\p_{k}\p_{k}^\dagger+\I_{L}
\end{equation}
and the constants $\iota_{M,1},\dots,\iota_{M,K}$ are given by the unique nonnegative solutions to the following fixed point equations:
\begin{align} \label{eq:c_k_M}
\iota_{M,k} & = \frac{\snr_k}{\tau_{M,k}} \Big(\p_{k}^\dagger\S_{L}^{-1}\p_{k} 
-\frac{1}{M}\sum_{i=1}^{r_0} \frac{\snr_0 \lambda_{0,i}  |\p_{k}^\dagger\S_{L}^{-1}\p_0|^2}{1+\snr_0  \lambda_{0,i} \p_0^\dagger\S_{L}^{-1}\p_0}\Big),& k = 1,\ldots,K.
\end{align}
Then, as $M$ and $r_k$ grow large with ratios $\tau_{M,k}\triangleq r_k/M$ such that $0< \lim\inf_{M,r_k} \tau_{M,k}$, we have that  
\begin{equation} \label{eq:convergence_deteq1}
\mse_{0}^{(\mathsf{ii})}\big(\P, \{\Sig_k\}, \{\snr_k\}\big) - \xi_{0}^{(\mathsf{ii})}\big(\P, \Sig_0, \{\snr_k\}\big) \xrightarrow[M\rightarrow \infty]{\mathrm{a.s.}} 0.
\end{equation}
\end{theorem}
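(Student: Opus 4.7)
The convergence \eqref{eq:convergence_deteq1} will follow once I establish, for each $i \in \{1,\ldots,r_0\}$, the almost-sure convergence
\[
Q_{M,i} \triangleq (\p_0 \otimes \u_{0,i})^\dagger \C_M^{-1} (\p_0 \otimes \u_{0,i}) \xrightarrow[M\to\infty]{\mathrm{a.s.}} \p_0^\dagger \S_L^{-1}\p_0, \quad \C_M \triangleq \sum_{k=1}^K \snr_k \bigl(\p_k\p_k^\dagger \otimes \Sig_k\bigr) + \I_{ML},
\]
together with enough uniformity over $i$ to commute the limit with the outer $M^{-1}\sum_{i=1}^{r_0}$ in \eqref{eq:sum_mse_ii}; this uniformity is obtained by a union bound over the (at most $M$) fixed eigendirections $\u_{0,i}$, combined with a polynomial-in-$M$ concentration rate made available by the eighth-moment hypothesis of Assumption~\ref{ass:covmodel}. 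The starting point is the factorization $\Sig_k = r_k^{-1}\X_k\X_k^\dagger$, which turns $\C_M$ into $\I_{ML}$ plus a sum of $\sum_{k=1}^K r_k$ independent rank-one perturbations $\frac{\snr_k}{r_k}(\p_k\otimes\x_{k,j})(\p_k\otimes\x_{k,j})^\dagger$; from there the standard deterministic-equivalent machinery (iterated Sherman-Morrison together with trace-lemma-type concentration) applies in principle.

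The decisive technical obstacle is that each perturbing vector $\p_k\otimes\x_{k,j}$ is \emph{structured}: only the $M$-dimensional factor $\x_{k,j}$ is random, while the $L$-dimensional factor $\p_k$ is fixed, so the classical trace lemma of~\cite{Bai98} cannot be applied directly. This is precisely where I would invoke the block-trace lemma stated as Proposition~\ref{prop:block_trace}, which guarantees, for any $ML\times ML$ matrix $\M$ of bounded operator norm independent of $\x$ with i.i.d.\ zero-mean, unit-variance entries of finite eighth moment,
\[
(\p\otimes\x)^\dagger \M (\p\otimes\x) - \frac{1}{M}\, \p^\dagger\, \Tr_b(\M)\, \p \xrightarrow[M\to\infty]{\mathrm{a.s.}} 0,
\]
where $\Tr_b(\M)\in\Compl^{L\times L}$ is the block trace of $\M$ viewed as an $L\times L$ matrix of $M\times M$ blocks. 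Applied to the Sherman-Morrison-reduced resolvent $(\C_M^{(k,j)})^{-1}$ obtained by deflating the $(k,j)$-th rank-one term off $\C_M$ (and hence independent of $\x_{k,j}$), this produces a deterministic equivalent for $\frac{\snr_k}{r_k}(\p_k\otimes\x_{k,j})^\dagger(\C_M^{(k,j)})^{-1}(\p_k\otimes\x_{k,j})$ that depends on $(k,j)$ only through $k$; that common limit is what I would identify as the $\iota_{M,k}$ of the fixed-point system~\eqref{eq:c_k_M}.

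Aggregating the per-user reductions collapses the $\sum_k r_k$ rank-one terms into the Kronecker-structured effective resolvent $(\S_L\otimes\I_M)^{-1}$, whence $Q_{M,i}\to \p_0^\dagger\S_L^{-1}\p_0$ for every $i$ (using $\|\u_{0,i}\|=1$). The peculiar correction term in~\eqref{eq:c_k_M} involving $\Sig_0$ emerges naturally when one expresses the block trace of the ``full-user'' resolvent (formally including a term $\snr_0\,\p_0\p_0^\dagger\otimes\Sig_0$) by peeling back user~$0$'s deterministic, rank-$r_0$ contribution via the Woodbury identity one eigendirection $\u_{0,i}$ at a time: each Sherman-Morrison step produces exactly a summand $\snr_0\lambda_{0,i}|\p_k^\dagger\S_L^{-1}\p_0|^2/(1+\snr_0\lambda_{0,i}\p_0^\dagger\S_L^{-1}\p_0)$. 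The remaining (and standard) obstacle is to show that~\eqref{eq:c_k_M} admits a unique nonnegative solution; I would do this by verifying that its right-hand side defines a standard interference function on $\mathbb{R}_+^K$ (monotonicity and scalability in $(\iota_{M,k})_k$), after which continuity of the composition $(\iota_{M,k})_k \mapsto Q_{M,i} \mapsto \mse_0^{(\mathsf{ii})}$ propagates the almost-sure convergence to yield~\eqref{eq:convergence_deteq1}.
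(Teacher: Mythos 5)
Your overall strategy---rank-one deflation plus trace-lemma concentration, identification of a fixed point, and uniqueness via standard interference functions---amounts to re-deriving the deterministic equivalent from scratch. The paper's proof is much shorter: it observes that $\sum_{k=1}^K\snr_k\tilde{\P}_k\Sig_k\tilde{\P}_k^\dagger=\sum_{k}\frac{\snr_k}{r_k}\sum_{j}(\tilde{\P}_k\x_{k,j})(\tilde{\P}_k\x_{k,j})^\dagger$ is exactly the separable model covered by \cite[Thm.~1]{Wag12}, applies that theorem as a black box to the single trace functional $\frac{1}{M}\Tr\big(\tilde{\P}_0\Sig_0^2\tilde{\P}_0^\dagger(\snr_0\tilde{\P}_0\Sig_0\tilde{\P}_0^\dagger+\A_M^\dagger\A_M+\I_{ML})^{-1}\big)$, and then reduces the resulting expressions to \eqref{eq:mse_ii_eq}--\eqref{eq:c_k_M} with the Woodbury identity. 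Within your route, the obstacle you single out is not actually an obstacle, and the tool you invoke for it is the wrong one: Proposition~\ref{prop:block_trace} concerns vectors $\x_L\otimes\I_M$ whose \emph{$L$-dimensional} factor is random, with convergence as $L\to\infty$ and a block trace valued in $\Compl^{M\times M}$; it is the engine of Theorem~\ref{thm:deteq3}, not of this theorem, and it is not the $L\times L$ partial trace you describe. Here $\p_k\otimes\x_{k,j}=\tilde{\P}_k\x_{k,j}$ is a deterministic, uniformly-bounded-norm linear image of an i.i.d.\ vector of growing dimension $M$, so $\x_{k,j}^\dagger\tilde{\P}_k^\dagger\M\tilde{\P}_k\x_{k,j}$ falls squarely within the ordinary trace lemma (Lemma~\ref{lem:trace}); no block machinery is needed, and the paper uses none in this proof. (Your displayed convergence is also mis-normalized: the left-hand quadratic form is $O(M)$ while the right-hand term is $O(1)$.)

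The more serious gap is the uniformity step. You need control of $Q_{M,i}$ simultaneously over the $r_0=\tau_{M,0}M$ eigendirections of $\Sig_0$ and propose a union bound powered by the eighth-moment hypothesis. Lemma~\ref{lem:proof_trace} with finite eighth moments gives at best a fourth-moment deviation bound of order $M^{-2}$ per quadratic form, so Markov plus a union bound over $O(M)$ directions yields a total failure probability of order $M^{-1}$, which is not summable; Borel--Cantelli then does not deliver the almost-sure uniform convergence your argument requires. The fix is the one the paper adopts implicitly: do not treat the directions separately, but apply the deterministic equivalent to the aggregate $O(1)$ trace functional. Relatedly, to land on the \emph{specific} fixed point \eqref{eq:SM}--\eqref{eq:c_k_M}---whose correction term reflects the presence of $\snr_0\tilde{\P}_0\Sig_0\tilde{\P}_0^\dagger$ inside the resolvent---you must keep user~$0$'s deterministic term in the matrix you deflate, as the paper does, rather than starting from the already-reduced form \eqref{eq:sum_mse_ii}; you gesture at this, but a complete proof must either carry the full resolvent throughout or establish the equivalence of the two fixed-point systems. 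Your interference-function uniqueness argument is sound; it is essentially what lies inside the cited \cite[Thm.~1]{Wag12}.
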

\vspace*{-8pt}
\noindent \emph{Proof.} The proof is mainly based on an application of \cite[Thm.~1]{Wag12}. See Appendix~\ref{app:deteq1}.
\medskip

The result in Theorem \ref{thm:deteq1} is useful to see the effect of using non-orthogonal pilots on the CSI acquisition accuracy. Note that the constant $\iota_{M,k}$ measures the level of interference created by the pilot sequence of user $k$ when estimating the channel of user $0$ (the larger $\iota_{M,k}$, the lower the interference). In particular, if the pilots are orthogonal (see condition \emph{\textbf{(a)}} in Lemma $\ref{lem:mse_ii_orthogonal}$), it holds that $\p_0^\dagger\S_L^{-1}\p_0=\p_0^\dagger\p_0 =  L$ and the deterministic equivalent in \eqref{eq:mse_ii_eq} becomes the MSE given in Lemma $\ref{lem:mse_ii_orthogonal}$ for the interference-free case.

\subsection{Deterministic Equivalent for Deterministic Covariance Matrices and Random Pilots} \label{sec:random_pilots}

\noindent  Let us now assume that all the covariance matrices $\{\Sig_k\}$ are deterministic and consider the following random model for the pilot sequences.
\medskip
\begin{assumption}[\textbf{Random pilot model}]
\label{ass:pilotmodel}
The random length-$L$  pilot sequences are of the form
\begin{align} \label{eq:random_pilots}
 \p_k &= (p_k(1),\dots, p_k(L))^T, & k =0, \ldots, K
\end{align}
with $\p_0, \ldots, \p_K \in \Compl^{L}$ being independent random vectors with i.i.d.~entries of zero-mean, unit variance, and have finite eighth order moment. Observe that the normalization in \eqref{eq:pilot_normalization} is now satisfied in expectation.
\end{assumption}
\noindent Furthermore, we assume that the pilot length $L$ is sufficient to estimate the subspace spanned by the covariance matrices of all users $\{\Sig_k\}$ as formalized next.
\medskip
\begin{assumption}[\textbf{Pilot length}]\label{ass:enough_pilots}
There exists $\nu_M<1$ such that
\begin{equation}\label{eq:enough_pilots}
\lim\sup_{K,L}\Big\|\frac{1}{L}\sum_{k=0}^K \U_k\U_k^\dagger\Big\|\leq \nu_M \text{\ \ a.s.}
\end{equation}
with $\U_k \in \Compl^{M\times r_k}$ containing the $r_k$ eigenvectors associated with the non-zero eigenvalues of $\Sig_k$.
\end{assumption}

\noindent Indeed, Assumption~\ref{ass:enough_pilots} is related to the system identifiability discussed in Section \ref{sec:identifiability}. If \eqref{eq:enough_pilots} holds, one immediate consequence is that
\begin{equation}
\lim\sup_{K,L}\frac{1}{L}\sum_{k=0}^K r_k \leq M\Big\|\frac{1}{L}\sum_{k=0}^K \U_k\U_k^\dagger\Big\|\leq \nu_MM \leq M
\label{eq:enough_pilots2}
\end{equation}
which is a necessary condition for asymptotic identifiability (see \eqref{eq:size_constraint}). 

For convenience, let us introduce the block version of the trace operator for block matrices (see \cite{Fil98_blktrace} for details) before presenting the deterministic equivalent of the MSE error of user $0$ as $L,K \rightarrow \infty$ in the next theorem.
\medskip
\begin{definition} \label{def:blk_trace}
Consider a matrix $\B_{M,L}\in\mathbb{C}^{ML\times ML}$ composed of blocks $(\B_M^{(i,j)})_{1\leq i,j\leq L}$ of size $M\times M$, i.e.
\begin{equation}
\B_{M,L} = \left(\begin{array}{ccc}
\B_M^{(1,1)} & \dots & \B_M^{(1,L)}\\
\vdots & \ddots & \vdots\\
\B_M^{(L,1)} & \dots & \B_M^{(L,L)}\\
\end{array}\right).
\end{equation}
The block-trace of $\B_{M,L}$ is defined as\footnote{Note that this definition is highly dependent on the size of the blocks $M$. We omit, however, the reference to $M$ in the notation for the sake of simplicity.} 
\begin{equation}\label{eq:block_trace}
\mathrm{blktr}[\B_{M,L}] = \sum_{i=1}^L \B_M^{(i,i)}\in\mathbb{C}^{M \times M}.
\end{equation}
\end{definition}
\vspace*{-1pt}
\begin{theorem}\label{thm:deteq3} 
Assume that all $K+1$ pilot sequences are random pilots satisfying Assumption~{\rm\ref{ass:pilotmodel}} and the pilot-length is such that  Assumption {\rm \ref{ass:enough_pilots}} holds. Let $\lambda_{0,1}\geq \cdots \geq \lambda_{0,r_0}>0$ denote the non-zero eigenvalues of $\Sig_0$ and $\u_{0,1},\dots,\u_{0,r_0}$ the corresponding eigenvectors and define
\begin{align} \label{eq:xi_0_deteq3}
\xi_{0}^{(\mathsf{ii})}\big(\{\Sig_k\}, \{\snr_k\}; \Gam_L\big) 
\triangleq
\frac{1}{M}\sum_{i=1}^{r_0}\frac{\beta_0 \lambda_{0,i}}{1+ L\snr_0 \lambda_{0,i} \u_{0,i}^{\dagger}(\Gam_L + \I_M)^{-1} \u_{0,i} }
\end{align}
where $\Gam_L$ is the unique $M\times M$ positive definite matrix solution to the following fixed point equation
\begin{align} \label{eq:gam_fp}
\Gam_L &= \sum_{k=1}^K  \snr_k\Sig_k\big(\I_M+L \snr_k(\Gam_L+\I_M)^{-1}\Sig_k\big)^{-1}.
\intertext{Define}
\label{eq:defA}
\A_L &\triangleq \frac{1}{L}\mathrm{blktr}\Big[\Big(\sum_{k=1}^K \snr_k\tilde{\P}_{k} \Sig_k\tilde{\P}_{k} ^\dagger + \I_{ML}\Big)^{-1}\Big]
\end{align}
where the block-trace operator is introduced in Definition {\rm \ref{def:blk_trace}}. 
Then, as $L$ and $K$ grow large with ratio $\alpha_L \triangleq L/K$ such that $\lim\inf_{K,L}\alpha_L>0$ and whenever there exists $\epsilon_M>0$ such that
\begin{align} \label{eq:lambda_max_Gam_L}
\lim\inf_{K,L} \lambda_{\min}(\A_L)\geq\epsilon_M \text{\ \ a.s.}
\end{align}
we have that
\begin{equation} \label{eq:convergence_deteq3}
L\left|\mse_{0}^{(\mathsf{ii})}\big(\P, \{\Sig_k \}, \{\snr_k\}\big) - \xi_{0}^{(\mathsf{ii})}\big(\{\Sig_k\}, \{\snr_k\}; \Gam_L\big) \right|
  \xrightarrow[K,L\rightarrow\infty]{\mathrm{a.s.}} 0 .
\end{equation} 
\end{theorem}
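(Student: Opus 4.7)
The plan is to follow a two-layer deterministic-equivalent argument analogous to the strategy behind Theorem~\ref{thm:deteq1}, but now exploiting the randomness of the pilots rather than of the covariance matrices. Writing the MSE in \eqref{eq:sum_mse_ii} as
\[
\mse_{0}^{(\mathsf{ii})} = \frac{1}{M}\sum_{i=1}^{r_0}\frac{\beta_0 \lambda_{0,i}}{1 + \snr_0 \lambda_{0,i}\, Q_{0,i}}, \qquad Q_{0,i} \triangleq \big(\p_0 \otimes \u_{0,i}\big)^{\dagger}\,\T_L\,\big(\p_0 \otimes \u_{0,i}\big),
\]
where $\T_L \triangleq \big(\sum_{k=1}^K \snr_k \p_k\p_k^\dagger \otimes \Sig_k + \I_{ML}\big)^{-1}$ depends only on $\p_1,\dots,\p_K$, the target reduces to proving $Q_{0,i} - L\,\u_{0,i}^\dagger (\Gam_L + \I_M)^{-1} \u_{0,i} \to 0$ almost surely at a rate that, after passing through the nonlinearity $x \mapsto (1 + \snr_0 \lambda_{0,i} x)^{-1}$, yields the claimed $o(1/L)$ bound on $|\mse_{0}^{(\mathsf{ii})} - \xi_{0}^{(\mathsf{ii})}|$.

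The first layer exploits the independence of $\p_0$ from $\T_L$. Expanding $Q_{0,i} = \sum_{\ell_1,\ell_2} p_0(\ell_1)^{*} p_0(\ell_2)\,\u_{0,i}^\dagger \T_L^{(\ell_1,\ell_2)}\u_{0,i}$, the block-trace extension of the trace lemma (Proposition~\ref{prop:block_trace}, which is precisely what motivates Definition~\ref{def:blk_trace}), combined with the i.i.d.\ zero-mean unit-variance and eighth-moment conditions of Assumption~\ref{ass:pilotmodel}, yields $Q_{0,i} - \u_{0,i}^\dagger \mathrm{blktr}[\T_L]\,\u_{0,i} \to 0$ almost surely — equivalently, $Q_{0,i} - L\,\u_{0,i}^\dagger \A_L\,\u_{0,i} \to 0$ by \eqref{eq:defA}. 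The problem is thereby reduced to identifying the deterministic equivalent of the random block-trace matrix $\A_L$.

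The second, deeper layer shows that $\A_L - (\Gam_L + \I_M)^{-1} \to \mathbf{0}$ in operator norm. The key step is a rank-$r_k$ Sherman--Morrison--Woodbury identity applied to the factorisation $\snr_k(\p_k\p_k^\dagger \otimes \Sig_k) = (\p_k \otimes \U_k)(\snr_k\La_k)(\p_k \otimes \U_k)^\dagger$: it expresses $\T_L$ in terms of the leave-one-out resolvent $\T_L^{(-k)}$ and of the inner quadratic form $(\p_k \otimes \U_k)^\dagger \T_L^{(-k)}(\p_k \otimes \U_k)$, to which — since $\p_k$ is independent of $\T_L^{(-k)}$ — the matrix form of the block-trace lemma applies and gives $(\p_k \otimes \U_k)^\dagger \T_L^{(-k)}(\p_k \otimes \U_k) \approx L\,\U_k^\dagger \A_L \U_k$ after a rank-one perturbation comparison between $\A_L$ and $\A_L^{(-k)}$. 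Taking the block-trace of the resulting Woodbury identity and summing over $k$ produces the self-consistent equation $\A_L^{-1} \approx \I_M + \sum_{k=1}^K \snr_k\Sig_k (\I_M + L\snr_k \A_L\Sig_k)^{-1}$, which is exactly \eqref{eq:gam_fp} under the identification $\Gam_L + \I_M = \A_L^{-1}$. Existence and uniqueness of $\Gam_L$ in the positive-definite cone follow from the standard monotonicity/contraction argument for Wishart-type matrix maps, and the spectral lower bound \eqref{eq:lambda_max_Gam_L} keeps $\A_L^{-1}$ uniformly bounded, so the denominators arising in the recursion do not degenerate.

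The hardest part will be this second layer, where three difficulties combine. First, the Woodbury perturbations have rank $r_k\geq 1$ rather than $1$, so at every leave-one-out step the scalar trace lemma is insufficient and the block-trace lemma must be used in its full matrix-valued form. Second, the fixed-point equation lives in the positive-definite cone, so uniqueness and contraction must be established via matrix monotonicity rather than on the real line. Third, the sharpened rate $o(1/L)$ — not merely $o(1)$ — requires second-moment (not only first-moment) control of every block-trace concentration residual, propagated consistently through the Woodbury/leave-one-out recursion; the $\Theta(L)$ scaling of $1 + \snr_0 \lambda_{0,i} Q_{0,i}$, which is in turn guaranteed by Assumption~\ref{ass:enough_pilots}, is precisely what converts an $o(1)$ control on $Q_{0,i}/L$ into the stated $o(1/L)$ rate on the MSE.
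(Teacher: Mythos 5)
Your proposal takes essentially the same route as the paper's proof: reduce the claim to showing that $\frac{1}{L}\tilde{\P}_0^\dagger\big(\sum_{k=1}^K\snr_k\tilde{\P}_k\Sig_k\tilde{\P}_k^\dagger+\I_{ML}\big)^{-1}\tilde{\P}_0$ converges in operator norm to $(\Gam_L+\I_M)^{-1}$, first replacing it by $\A_L$ via the block-trace lemma of Proposition~\ref{prop:block_trace} and the independence of $\p_0$, and then proving $\A_L\to(\Gam_L+\I_M)^{-1}$ by leave-one-out (rank-$r_k$ Woodbury) perturbations, a contraction estimate on the positive-definite cone, and uniqueness of the fixed point of the map underlying \eqref{eq:gam_fp}. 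One small calibration note: the $o(1/L)$ rate does not actually require sharper (second-moment) control of the concentration residuals --- an $o(1)$ operator-norm bound suffices, because the prefactor $L$ is absorbed by the $\Theta(1/L)$ scaling of the denominators, which is exactly the mechanism you identify in your closing clause.
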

\vspace*{-8pt}
\noindent \emph{Proof.} The idea of the proof is to generalize the result of Bai and Silverstein \cite{Sil95} to block-matrices in order to provide a deterministic equivalent for $\A_L$. The main step consists in using the equivalent of a rank-1 perturbation for block matrices and proving that the fixed point mapping underlying \eqref{eq:gam_fp} is satisfied by $\A_L$ and that it is indeed a contraction. As a key ingredient, we generalize the trace lemma to block-matrices with convenient random matrix concentration inequalities in Proposition~\ref{prop:block_trace} of Appendix \ref{app:pre_results}. See the detailed proof of Theorem \ref{thm:deteq3} in Appendix \ref{app:deteq3}.
\medskip

The result in Theorem \ref{thm:deteq3} is useful to understand the effect of the relative orthogonality between the subspace spanned by the covariance matrix of user 0, $\Sig_0 = \sum_{i=1}^{r_0} \lambda_{0,i} \u_{0,i} \u_{0,i}^{\dagger}$, and the subspace spanned by the covariance matrices of the $K$ interfering  users, as captured by the estimation SINR term $L\snr_0\lambda_{0,i}\u_{0,i}^{\dagger}(\Gam_L + \I_M)^{-1} \u_{0,i}$ in \eqref{eq:xi_0_deteq3}. In particular, when the subspace spanned by the covariance matrix of user $0$ is orthogonal to the subspace spanned by the other users, i.e., $\u_{0,i}^{\dagger} \Big(\sum_{k=1}^K  \U_k\U_k^{\dagger} \Big)\u_{0,i}=0$ for $i=1,\ldots,r_0$ (see condition \emph{\textbf{(b)}} in Lemma $\ref{lem:mse_ii_orthogonal}$), it follows from \eqref{eq:gam_fp} that the deterministic equivalent in \eqref{eq:xi_0_deteq3} becomes the MSE given in Lemma $\ref{lem:mse_ii_orthogonal}$. The reason is that the projection into the subspace of user $0$ in the channel estimator in \eqref{eq:hat_h_k_ii} already cancels all interference from the other users and orthogonal pilots are no longer needed. Moreover, using the fixed point equation in \eqref{eq:gam_fp}, the SINR can be upper bounded using the inequality
\begin{align}
 \u_{0,i}^{\dagger}(\Gam_L + \I_M)^{-1} \u_{0,i}  \leq  \u_{0,i}^{\dagger} \Big(   \sum_{k=1}^K  \snr_k\Sig_k\big(\I_M+L \snr_k\Sig_k\big)^{-1} + \I_M \Big)^{-1} \u_{0,i} 
& \leq \u_{0,i}^{\dagger} \Big(   \frac{1}{L}\sum_{k=1}^K  \U_k\U_k^{\dagger} + \I_M \Big)^{-1} \u_{0,i}  \label{eq:interf_bound_1}\\
& \leq 1 - \Big(\frac{1-\nu_M}{L}\Big) \u_{0,i}^{\dagger} \Big(\sum_{k=1}^K  \U_k\U_k^{\dagger} \Big)\u_{0,i} \label{eq:interf_bound_2}
\end{align}
where in \eqref{eq:interf_bound_1} we use that $(\Gam_L+\I_M)^{-1} \preceq \I_M$ and that $(\snr_k^{-1}\La^{-1}_k+ L\I_{r_k})^{-1} \preceq \frac{1}{L}\I_{r_k}$,\footnote{Let $\A,\B$ be $N \times N$ Hermitian matrices. We say that $\A \preceq \B$, if $\B -\A$ is positive semidefinite.} and \eqref{eq:interf_bound_2} follows from using the Taylor expansion of $\big(\frac{1}{L}\sum_{k=1}^K  \U_k\U_k^{\dagger} + \I_M \big)^{-1}$ under the assumption in \eqref{eq:enough_pilots}. Therefore, the larger the projection of $\u_{0,i}$ on the interferer covariance matrix, the lower the SINR.

In order to extend the results of Theorem \ref{thm:deteq3} for the case $M, \{r_k\}\rightarrow \infty$, we need to ensure that the convergence in \eqref{eq:convergence_deteq3} holds uniformly in $M$. This is done in the following theorem. 
\medskip
\begin{theorem}\label{thm:deteq3_uni} 
Assume that all $K+1$ pilot sequences are random pilots satisfying Assumption~{\rm\ref{ass:pilotmodel}} and the pilot-length is such that Assumption {\rm \ref{ass:enough_pilots}} holds uniformly in $M$. Assume that pilot $\p_0$ is uniformly bounded, i.e., there exists $\chi_{0}>0$ such that for any $\ell$ it holds $|p_{0}(\ell)|^2<\chi_{0}$ a.s.\footnote{This assumption is always satisfied in practice, since the maximum transmit power is always limited.}  If, furthermore, $\epsilon_M$ in \eqref{eq:lambda_max_Gam_L} is independent from $M$ and there exists $0<\delta<1$ such that $\lim\sup_{L,M} ML^{-\delta}<\infty \text{\ \ a.s.}$, the convergence of $\mse_{0}^{(\mathsf{ii})}\big(\P, \{\Sig_k \}, \{\snr_k\}\big)$ to the deterministic equivalent $\xi_{0}^{(\mathsf{ii})}\big(\{\Sig_k\}, \{\snr_k\}; \Gam_L\big)$ introduced in Theorem {\rm \ref{thm:deteq3}} is uniform in $M$.
\end{theorem}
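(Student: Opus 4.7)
The plan is to revisit the proof of Theorem~\ref{thm:deteq3} in Appendix~\ref{app:deteq3} and track explicitly the $M$-dependence of every error term, so as to show that they all vanish uniformly in $M$ under the additional hypotheses. That proof rests on three ingredients: (i) the block-trace concentration inequality of Proposition~\ref{prop:block_trace}; (ii) a rank-$1$ perturbation identity for block matrices; and (iii) a contraction argument for the fixed-point equation~\eqref{eq:gam_fp} defining $\Gam_L$. Each step contributes an error of the generic form $M^{a}L^{-b}$ with $a,b>0$, and the hypothesis $\lim\sup_{L,M} M L^{-\delta}<\infty$ for some $\delta<1$ converts this into $O(L^{a\delta-b})$. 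Uniform convergence in $M$ then amounts to showing that every pair $(a_i,b_i)$ produced along the proof satisfies $b_i-a_i\delta>1$ so that, after multiplication by the $L$ prefactor in \eqref{eq:convergence_deteq3}, the residual error still tends to zero.

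First, I would re-examine the block-trace concentration step and isolate the factors scaling with $M$. The additional hypothesis $|p_0(\ell)|^2<\chi_{0}$ a.s.\ replaces moment bounds on $\p_0$ by pointwise bounds, which lets the quadratic forms $\p_0^\dagger \S_L^{-1}\p_0$ and the forms $\u_{0,i}^\dagger(\cdot)\u_{0,i}$ appearing in $\xi_{0}^{(\mathsf{ii})}$ and in \eqref{eq:defA} be controlled by concentration inequalities with explicit polynomial $M$-dependence. The pilots $\p_1,\dots,\p_K$ still obey only the eighth-order moment condition of Assumption~\ref{ass:pilotmodel}, so I would condition on $\p_0$ and exploit the independence of $\{\p_k\}_{k\ge 1}$ from $\p_0$ to keep the two moment regimes separate. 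Second, I would verify that the contraction argument for~\eqref{eq:gam_fp} has a rate independent of $M$: because $\epsilon_M$ in \eqref{eq:lambda_max_Gam_L} is assumed independent of $M$ and Assumption~\ref{ass:enough_pilots} holds uniformly in $M$, both the Lipschitz constant of the fixed-point mapping and the operator norm of $\Gam_L$ remain uniformly bounded, preventing the fixed-point iteration from deteriorating as $M$ grows.

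The main obstacle will be tracking the sharpest possible $M$-dependence in the block-trace step, since the Bai--Silverstein machinery underlying Proposition~\ref{prop:block_trace} is not originally designed to produce bounds uniform in the block dimension and naive applications of the matrix trace inequalities produce a spurious factor of $M$ per perturbation. Obtaining exponents with $b_i-a_i\delta>1$ will require careful book-keeping of the moments granted by Assumption~\ref{ass:pilotmodel} and, where necessary, additional Cauchy--Schwarz splittings between norms acting on the $M$-dimensional and $L$-dimensional factors of the Kronecker products $\tilde{\P}_k=\p_k\otimes\I_M$. A secondary subtlety is that the rate granted by Assumption~\ref{ass:pilotmodel} is only of the order $L^{-(1-\eta)/2}$ for arbitrary $\eta>0$, which fixes how close the exponent $\delta$ may approach $1$; this constraint is hidden inside the constants $a_i$ and is precisely the reason the theorem restricts to $\delta<1$ strictly.
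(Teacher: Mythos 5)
Your proposal is correct and follows essentially the same route as the paper: the paper's own proof is likewise a bookkeeping pass over the proof of Theorem~\ref{thm:deteq3}, invoking the uniform (bounded-entry) concentration inequalities \eqref{eq:concentration1}--\eqref{eq:concentration2} of Proposition~\ref{prop:block_trace} for the block-trace step, the condition $\lim\sup_{L,M} ML^{-\delta}<\infty$ to kill the $M/L$ factor in \eqref{eq:MoverL}, and the $M$-independence of $\epsilon_M$ (together with Assumption~\ref{ass:enough_pilots} holding uniformly) to keep the contraction constant $\eta_M$ away from $1$ uniformly. These are exactly the three sources of $M$-dependence you identify, so no further comparison is needed.
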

\vspace*{-8pt}
\noindent \emph{Proof.} See Appendix \ref{app:deteq3_uni}.
\medskip

Finally, recall that the deterministic equivalent in Theorems \ref{thm:deteq3} (and \ref{thm:deteq3_uni}) only holds under some technical condition given in \eqref{eq:lambda_max_Gam_L} on matrix $\A_L$ defined in \eqref{eq:defA}. For completeness we provide in the following proposition two alternative sufficient conditions on the system parameters for guaranteeing the required result.

\medskip
\begin{proposition} \label{prop:Gam_L}
Let $\A_L$ be defined as in \eqref{eq:defA}, and let the pilot sequence $\p_0$ satisfy Assumption~{\rm\ref{ass:pilotmodel}}. Then, there exists $\epsilon_M>0$ such that $\lim\inf_{K,L} \lambda_{\min}(\A_L)\geq \epsilon_M$ a.s., whenever one of the following conditions hold: 
\begin{itemize}
\item[\emph{\textbf{(a)}}] (Summable received SNRs condition). The covariance matrices $\{\Sig_k\}$ have uniformly bounded spectral norm and
\begin{align} \label{eq:sum_snr}
\sum_{k=0}^K \snr_k < \infty.
\end{align}
\item[\emph{\textbf{(b)}}] (Strong subspace identifiability condition).
There exists a constant $c_M>0$ such that
\begin{equation} \label{eq:Gam_L_cond_b}
\frac{1}{K+1}\U^\dagger\tilde{\P}^\dagger\tilde{\P}\U \succeq c_M \I_{\sum_{k=0}^K r_k}\text{\ \ a.s.}
\end{equation}
where $\U = \Diag \big(\U_0,\U_1,\dots,\U_K\big) \in \Compl^{(K+1)M \times \sum_{k=0}^K r_k}$ gathers the eigenvectors of the individual covariance matrices of the $K+1$ users as defined in \eqref{eq:Rk}.
\end{itemize}
Moreover, if \textbf{(a)} holds or if $c_M$ in \textbf{(b)} is independent of $M$, then $\epsilon_M$ is also independent of $M$.
\end{proposition}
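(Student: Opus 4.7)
The plan is to treat the two sufficient conditions separately, since they correspond to different mechanisms for keeping the effective interference matrix $\M \triangleq \sum_{k=1}^K \snr_k \tilde{\P}_k \Sig_k \tilde{\P}_k^\dagger$ under control; in both cases, I would bound $\u^\dagger \A_L \u$ for an arbitrary unit vector $\u\in\Compl^M$ and then take the infimum over the unit sphere to extract a lower bound on $\lambda_{\min}(\A_L)$.

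For part \textbf{(a)}, my strategy is a two-stage Jensen argument. First, I would introduce the isometry $\Y = \I_L \otimes \u$ (so that $\Y^\dagger\Y = \I_L$), use the identity $\u^\dagger \A_L \u = \frac{1}{L}\Tr[\Y^\dagger(\M+\I_{ML})^{-1}\Y]$, and apply the operator Jensen (Kantorovich) inequality, valid because $t\mapsto t^{-1}$ is operator convex on $(0,\infty)$, to obtain $\Y^\dagger(\M+\I_{ML})^{-1}\Y \succeq (\Y^\dagger(\M+\I_{ML})\Y)^{-1} = (\I_L + \B(\u))^{-1}$, where $\B(\u)=\sum_{k=1}^K \snr_k(\u^\dagger\Sig_k\u)\p_k\p_k^\dagger$ is PSD. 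The scalar harmonic--arithmetic inequality (Jensen for $t\mapsto 1/t$ applied to the eigenvalues of $\I_L+\B(\u)$) then yields $\frac{1}{L}\Tr[(\I_L+\B(\u))^{-1}]\geq (1+\Tr(\B(\u))/L)^{-1}$. Finally, since $\Tr(\B(\u))/L \leq c\cdot(\max_k\|\p_k\|^2/L)\sum_k\snr_k$ by the spectral-norm bound on $\Sig_k$, and since $\max_k|\|\p_k\|^2/L-1|\to 0$ a.s.\ (obtained from Assumption~\ref{ass:pilotmodel} via a Markov bound on the centred fourth moment, a union bound over the $K$ pilots, and Borel--Cantelli), taking the infimum over unit $\u$ gives $\lim\inf_{K,L}\lambda_{\min}(\A_L)\geq (1+c\sum_k\snr_k)^{-1}>0$ a.s., a bound that is manifestly independent of $M$.

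For part \textbf{(b)}, I would exploit the factorisation $\M=\Q\D\Q^\dagger$ with $\Q=\tilde{\P}_{-0}\U_{-0}$ and $\D=\Diag(\snr_k\La_k)_{k=1}^K$, so that Woodbury's identity yields $(\M+\I_{ML})^{-1}=\I_{ML}-\Q(\D^{-1}+\Q^\dagger\Q)^{-1}\Q^\dagger$ and hence $\A_L = \I_M - \frac{1}{L}\mathrm{blktr}[\Q(\D^{-1}+\Q^\dagger\Q)^{-1}\Q^\dagger]$. Since $\D^{-1}\succeq 0$, one has $(\D^{-1}+\Q^\dagger\Q)^{-1}\preceq (\Q^\dagger\Q)^{-1}$; and because $\Q^\dagger\Q$ is obtained from $(\tilde{\P}\U)^\dagger(\tilde{\P}\U)$ by deleting the rows and columns associated with user $0$, Cauchy interlacing combined with condition~(b) gives $\lambda_{\min}(\Q^\dagger\Q)\geq (K+1)c_M$, so $(\D^{-1}+\Q^\dagger\Q)^{-1}\preceq \frac{1}{(K+1)c_M}\I$. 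A direct block expansion produces $\mathrm{blktr}[\Q\Q^\dagger]=\sum_{k=1}^K\|\p_k\|^2\U_k\U_k^\dagger$, whose spectral norm is controlled by $(\max_k\|\p_k\|^2)\cdot\|\sum_{k=1}^K\U_k\U_k^\dagger\|\leq (1+o(1))\nu_M L^2$ a.s., via the same uniform pilot-norm concentration together with Assumption~\ref{ass:enough_pilots}. Assembling these estimates gives $\lim\inf_{K,L}\lambda_{\min}(\A_L)\geq 1-\frac{(1+o(1))\nu_M L}{(K+1)c_M}$, which is strictly positive in the asymptotic regime of interest; the claim about $M$-independence follows because this final bound depends on $M$ only through $c_M$, $\nu_M$ and $\alpha_L$.

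The hard part is none of the linear-algebraic steps (Woodbury, Cauchy interlacing, operator Jensen, harmonic--arithmetic mean), but rather the uniform-in-$k$ concentration $\max_{k\geq 1}|\|\p_k\|^2/L - 1|\to 0$ a.s., which has to survive a union bound over $K$ independent pilots in a regime where $K$ may grow linearly in $L$. This is exactly where the finite eighth-moment hypothesis in Assumption~\ref{ass:pilotmodel} is used, so the careful bookkeeping of this concentration step is what will ensure both the positivity of $\epsilon_M$ in each case and its $M$-independence under the stated conditions.
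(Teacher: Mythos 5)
Your argument for part \textbf{(a)} is correct and takes a genuinely different route from the paper. You compress along the antenna direction with the isometry $\I_L\otimes\u$ and invoke the operator Jensen inequality $\Y^{\dagger}\A^{-1}\Y\succeq(\Y^{\dagger}\A\Y)^{-1}$, reducing everything to the $L\times L$ matrix $\I_L+\B(\u)$ and a scalar harmonic--arithmetic step; the paper instead lower-bounds $\A_L$ by the full $(K+1)$-user version and performs a leave-one-resource-element-out Woodbury step, writing $\A_L\succeq\frac{1}{L}\sum_{\ell}(\I_M+\Z_{(\ell)})^{-1}\succeq\big(1+\frac{1}{L}\sum_\ell\|\Z_{(\ell)}\|\big)^{-1}\I_M$ and then bounding $\Ps(\I+\tilde{\Y}_{(\ell)}\tilde{\Y}_{(\ell)}^{\dagger}\Ps)^{-1}$ crudely. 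Both routes end at a bound of the form $(1+c\sum_k\snr_k)^{-1}$ and both lean on the same concentration fact $\max_k\frac{1}{L}\|\p_k\|^2\to 1$ a.s.; be aware that your proposed fourth-moment Markov bound gives only $O(L^{-2})$ per pilot, so the union bound over $K\sim L$ pilots yields a non-summable $O(L^{-1})$ and Borel--Cantelli does not close as stated (the paper asserts the same fact without proof; a clean fix in your setting is to bound $\frac{1}{L}\sum_k\snr_k\|\p_k\|^2$ directly, whose centred fourth moment is summable thanks to $\sum_k\snr_k<\infty$).

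Part \textbf{(b)}, however, has a genuine gap. Your chain of estimates is individually correct (the Woodbury factorisation, the interlacing bound $\lambda_{\min}(\Q^{\dagger}\Q)\geq(K+1)c_M$, and $\mathrm{blktr}[\Q\Q^{\dagger}]=\sum_{k=1}^K\|\p_k\|^2\U_k\U_k^{\dagger}$), but the conclusion
\begin{equation*}
\lambda_{\min}(\A_L)\;\geq\;1-\frac{(1+o(1))\,\nu_M L}{(K+1)c_M}\;\approx\;1-\frac{\nu_M\alpha_L}{c_M}
\end{equation*}
is not ``strictly positive in the asymptotic regime of interest.'' Condition \textbf{(b)} only asserts the existence of \emph{some} $c_M>0$; since the diagonal blocks of $\frac{1}{K+1}\U^{\dagger}\tilde{\P}^{\dagger}\tilde{\P}\U$ force $c_M\leq(1+o(1))\alpha_L$, your bound is positive only when $c_M>\nu_M\alpha_L$, which is an additional, unstated hypothesis (it fails, for instance, when overlapping covariance subspaces make $c_M$ small while condition \textbf{(b)} still holds with that small $c_M$). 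The structural reason is that a single global Woodbury step produces a bound of the subtractive form $\I_M-\X$, and nothing in the hypotheses forces $\|\X\|<1$. The paper avoids this by the per-$\ell$ decomposition above: each term $(\I_M+\Z_{(\ell)})^{-1}$ is automatically positive definite, so the resulting bound $\big(1+\max_{\ell}\|\Z_{(\ell)}\|\big)^{-1}$ is positive no matter how large the interference term is, and condition \textbf{(b)} is used only to show $\|\Z_{(\ell)}\|\leq\frac{2}{c_M}\frac{\|\y_\ell\|^2}{K+1}<\infty$ with $\y_\ell=(p_0(\ell),\dots,p_K(\ell))^{\tras}$. To repair your proof of \textbf{(b)} you would need to replace the global Woodbury inversion by some harmonic-type (reciprocal) bound of this kind.
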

\vspace*{-8pt}
\noindent \emph{Proof.} See Appendix \ref{app:A_L}.
\medskip

Observe that the conditions in \emph{\textbf{(a)}} imply that the power of the interference perceived by user $0$ from the other users (which is proportional to $\{\snr_k\}$) decreases fast enough, so that the interference level is controlled independently of the particular covariance eigenspaces. Alternatively, condition \emph{\textbf{(b)}} ensures, without imposing any restriction on the power of the interfering users, that random pilots are good enough for CSI acquisition given the relative orthogonality between the covariance eigenspaces of all users in the system. In fact, \eqref{eq:Gam_L_cond_b} is slightly stronger than the channel identifiability condition, which consists in assuming that $\U^\dagger\tilde{\P}^\dagger\tilde{\P}\U \succ \bf{0}$ (see Section~\ref{sec:identifiability}). Note that if $\lim\sup \frac{L}{K}>1$, \cite[Thm.~1.1]{Bai98} guarantees the existence of some constant $\alpha>0$ such that $\frac{1}{K+1}\tilde{\P}^\dagger\tilde{\P}\succeq \alpha \I_{ML}$ a.s., so that condition \emph{\textbf{(b)}}
in \eqref{eq:Gam_L_cond_b} is satisfied in that case.


\subsection{Deterministic Equivalent for Random Covariance Matrices and Random Pilots} 
\label{sec:random_covariances_random_pilots}
\noindent  Let us now focus on the case in which the covariance matrices $\{\Sig_k\}$ follow the random model in Assumption {\rm\ref{ass:covmodel}} and the pilot sequences follow the random model in Assumption {\rm\ref{ass:pilotmodel}}. To this end, we particularize the results in Section \ref{sec:random_pilots} for random covariance matrices. In the following proposition we give a deterministic equivalent in Frobenius norm for the fixed point matrix $\Gam_L$ in Theorem~\ref{thm:deteq3}  when $K,L,M \rightarrow \infty$.
\medskip
\begin{proposition} \label{prop:Gam_L2}
Assume that the individual covariance matrices of the interfering users $\{\Sig_k\}$ satisfy Assumption~{\rm\ref{ass:covmodel}}. Let $\Gam_L$ be the solution of the fixed point equation in \eqref{eq:gam_fp} and $\gamma_L$ be the unique solution to the following the fixed point equation
\begin{align} \label{eq:gamma}
\gamma_L= \frac{1}{L}\sum_{k=1}^K \frac{2\tau_{M,k}(1+\gamma_L)}{1+\tau_{M,k}+\frac{\tau_{M,k}(1+\gamma_L)}{L\snr_k}+\sqrt{\big(1+\tau_{M,k}+\frac{\tau_{M,k}(1+\gamma_L)}{L\snr_k}\big)^2-4\tau_{M,k}}}.
\end{align}
Further assume that Assumption {\rm \ref{ass:enough_pilots}} holds uniformly in $M$ and define $\alpha_L \triangleq L/K$ and $\tau_{M,k} \triangleq r_k/M$. Then, as $K$ and $L$ grow large with ratio $\alpha_L$, and $M$ and $r_k$ grow large with ratios $\tau_{M,k}$ satisfying $0< \lim\inf_{M,r_k} \tau_{M,k}\leq  \lim\sup_{M,r_k} \tau_{M,k}< 1$ for $k = 0, \ldots, K$, it holds that
\begin{align} \label{eq:gamma_L_bounds}
1 - \frac{\frac{1}{K}\sum_{k=1}^K \tau_{M,k}}{\alpha_L}\leq (1+\gamma_L)^{-1}\leq 1.
\end{align}
Then, when  $\lim\sup_k \snr_k<\infty$ and $\lim\inf_{L,M} ML^{-2/3}>0$, we have that
\begin{equation}
\frac{1}{M}\Big\|(\Gam_L+\I_M)^{-1}-(1+\gamma_L)^{-1}\I_M\Big\|_{F}^2\xrightarrow[K,L,M\rightarrow\infty]{\mathrm{a.s.}} 0.
\end{equation}
\end{proposition}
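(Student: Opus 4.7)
The plan is to establish the two claims of the proposition in sequence: first the bounds \eqref{eq:gamma_L_bounds} via a direct analysis of the scalar fixed-point equation \eqref{eq:gamma}, then the Frobenius-norm convergence of $(\Gam_L+\I_M)^{-1}$ to $(1+\gamma_L)^{-1}\I_M$ by a random-matrix concentration argument applied to the matrix fixed point \eqref{eq:gam_fp}.

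For the bounds and existence/uniqueness of $\gamma_L$, I would show that the right-hand side of \eqref{eq:gamma} defines a continuous contraction $F_L:[0,\infty)\to[0,\infty)$ (each summand is a bounded increasing function of $\gamma$ with derivative bounded away from $1$ under Assumption~\ref{ass:enough_pilots}), giving existence and uniqueness of $\gamma_L\geq 0$; the upper bound $(1+\gamma_L)^{-1}\leq 1$ is immediate from $\gamma_L\geq 0$. For the lower bound, the algebraic identity $(1+\tau+u)^2-4\tau=(1-\tau+u)^2+4\tau u$ (valid for $\tau<1$, $u\geq 0$) gives $\sqrt{(1+\tau+u)^2-4\tau}\geq 1-\tau+u$, so each denominator in \eqref{eq:gamma} is at least $2(1+u_k)\geq 2$. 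Summing the resulting bound over $k$ yields $F_L(\gamma)\leq (\bar\tau_M/\alpha_L)(1+\gamma)$ with $\bar\tau_M\triangleq\tfrac{1}{K}\sum_k\tau_{M,k}$, and evaluating at $\gamma=\gamma_L=F_L(\gamma_L)$ delivers exactly $(1+\gamma_L)^{-1}\geq 1-\bar\tau_M/\alpha_L$.

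For the Frobenius-norm convergence, I would substitute $\Sig_k=\frac{1}{r_k}\X_k\X_k^\dagger$ (Assumption~\ref{ass:covmodel}) and invoke the Woodbury identity to recast each summand of \eqref{eq:gam_fp} as $\frac{\snr_k}{r_k}\X_k(\I_{r_k}+\frac{L\snr_k}{r_k}\X_k^\dagger\Te_L\X_k)^{-1}\X_k^\dagger$, where $\Te_L\triangleq(\Gam_L+\I_M)^{-1}$. To decouple $\Te_L$ from $\X_k$, I would introduce the leave-one-out resolvent $\Te_L^{(k)}$ obtained by dropping user $k$ in \eqref{eq:gam_fp}, and use a rank-$r_k$ block perturbation bound (analogous to the argument underlying Theorem~\ref{thm:deteq3}) to show that $|\tfrac{1}{M}\Tr(\Te_L-\Te_L^{(k)})|\to 0$. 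Since $\Te_L^{(k)}$ is independent of $\X_k$, the trace lemma \cite[Lem.~B.26]{Bai98} gives, entrywise, $\X_k^\dagger\Te_L^{(k)}\X_k\approx Mt_L\I_{r_k}$, with $t_L\triangleq\tfrac{1}{M}\Tr(\Te_L)$; substituting back reduces \eqref{eq:gam_fp} to
\begin{equation*}
\Gam_L\approx \hat{\Gam}_L\triangleq \sum_{k=1}^{K}\frac{\snr_k\Sig_k}{1+L\snr_k t_L/\tau_{M,k}}.
\end{equation*}

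By mutual independence of $\{\Sig_k\}$ across $k$, $\hat{\Gam}_L$ concentrates in normalized Frobenius norm around its mean $q_L(t_L)\I_M$ with $q_L(t)\triangleq\sum_k\snr_k/(1+L\snr_k t/\tau_{M,k})$: a direct second-moment computation using $\Exp[\Sig_k]=\I_M$ and $\tfrac{1}{M}\Exp\|\Sig_k-\I_M\|_F^2=O(1/\tau_{M,k})$ yields $\tfrac{1}{M}\Exp\|\hat{\Gam}_L-q_L(t_L)\I_M\|_F^2=O(1/(\alpha_L L))\to 0$. Taking the normalized trace of the reduced equation and recognizing the Marchenko--Pastur integral $\tfrac{1}{M}\Tr(\Sig_k(\I_M+\beta\Sig_k)^{-1})\to \int \frac{x}{1+\beta x}\,\mathrm{d}\mu_{MP,1/\tau_{M,k}}(x)$ — which evaluates to the $k$-th summand of \eqref{eq:gamma} precisely when $\beta=L\snr_k/(1+\gamma_L)$ — identifies $t_L\to(1+\gamma_L)^{-1}$ and $q_L(t_L)\to\gamma_L$ self-consistently. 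The conclusion follows from the elementary resolvent inequality $\tfrac{1}{M}\|(\Gam_L+\I_M)^{-1}-(1+\gamma_L)^{-1}\I_M\|_F^2\leq (1+\gamma_L)^{-2}\tfrac{1}{M}\|\Gam_L-\gamma_L\I_M\|_F^2$, using $\|(\Gam_L+\I_M)^{-1}\|\leq 1$ (since $\Gam_L\succeq 0$). The main difficulty is handling the self-consistency loop: $\Te_L$ depends on every $\X_k$, so the trace lemma cannot be applied naively, and the leave-one-out substitution must be propagated cleanly through the matrix fixed-point equation. I expect the assumption $\liminf_{L,M}ML^{-2/3}>0$ to surface precisely when balancing the $O(\sqrt{K/M})$-type error accumulated across the $O(K)$ leave-one-out substitutions against the $O(1/\sqrt{\alpha_L L})$-type Frobenius-scale fluctuation of the sum of independent Wisharts.
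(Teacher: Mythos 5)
Your handling of the bounds \eqref{eq:gamma_L_bounds} is correct (the identity $(1+\tau+u)^2-4\tau=(1-\tau+u)^2+4\tau u$ is exactly the right observation, and the paper leaves this step implicit), but the second half of your argument contains a genuine flaw. After the Woodbury step you replace $\X_k^\dagger\Te_L\X_k$ by $Mt_L\I_{r_k}$ \emph{inside} the inverse $(\I_{r_k}+\tfrac{L\snr_k}{r_k}\X_k^\dagger\Te_L\X_k)^{-1}$. Even granting the leave-one-out decoupling, the fluctuation of the $r_k\times r_k$ matrix $\tfrac{L\snr_k}{r_k}\X_k^\dagger\Te_L^{(k)}\X_k$ about its mean has operator norm of the \emph{same order} as the mean itself (both are $O(L)$, since $\|\X_k^\dagger\X_k-M\I_{r_k}\|=O(M)$ and $r_k\asymp M$), so its spectrum spreads according to Marchenko--Pastur and cannot be collapsed to a scalar. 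Concretely, the normalized trace of your reduced equation $\hat\Gam_L=\sum_k\snr_k\Sig_k/(1+L\snr_k t_L/\tau_{M,k})$ gives the summand $\tau_{M,k}(1+\gamma_L)/(L(1+u_k))$ with $u_k=\tau_{M,k}(1+\gamma_L)/(L\snr_k)$, whereas the $k$-th summand of \eqref{eq:gamma} is $2\tau_{M,k}(1+\gamma_L)/\big(L\big(1+\tau_{M,k}+u_k+\sqrt{(1+\tau_{M,k}+u_k)^2-4\tau_{M,k}}\big)\big)$; these agree only when $\tau_{M,k}u_k=0$, so your reduced equation determines the wrong fixed point. Your closing appeal to the Marchenko--Pastur integral of $\tfrac{1}{M}\Tr\big(\Sig_k(\I_M+\beta\Sig_k)^{-1}\big)$ is the correct object, but it is inconsistent with the reduced equation you just derived — you cannot have both.

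The paper avoids this trap by never applying a trace lemma to the random fixed point $\Gam_L$ itself. It writes $\S_L=(\Gam_L+\I_M)^{-1}$ as the fixed point of the map $\T_L$ in \eqref{eq:T}, shows (as in part \emph{\textbf{(iii)}} of the proof of Theorem~\ref{thm:deteq3}) that $\T_L$ is a contraction in the weighted Frobenius metric $\|\S_L^{-1/2}(\cdot)\|_F$, and thereby reduces the whole problem to proving $\tfrac{1}{M}\|\T_L(\bar\S_L)-\bar\S_L\|_F^2\to0$ for the \emph{deterministic} ansatz $\bar\S_L=(1+\gamma_L)^{-1}\I_M$. Because $\bar\S_L$ is a scalar matrix, each summand $\Ps_k(L\bar\S_L\Ps_k+\I_M)^{-1}$ depends on $\Sig_k$ alone, and the task becomes the concentration of $\tfrac{1}{K}\sum_k\big(\Ps_k(\Ps_k+\tfrac{1+\gamma_L}{L}\I_M)^{-1}-\ell_k\I_M\big)$, handled with the Marchenko--Pastur Stieltjes transform for each term and independence across $k$ for the cross terms. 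Separately, you would need to justify the leave-one-out step itself: $\Gam_L$ solves a self-consistent equation, not a resolvent identity, so deleting user $k$ perturbs every term of the solution and the rank-$r_k$ perturbation bound you invoke does not apply as stated. I recommend restructuring the second part along the contraction-plus-deterministic-ansatz route.
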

\vspace*{-8pt}
\noindent \emph{Proof.} See Appendix \ref{app:Gam_L2}.

\medskip

We are now in the position to present in the next theorem the deterministic equivalent for the MSE of user $0$ when the individual covariance matrices of the $K$ interfering users and all the $K+1$ pilot sequences are random and as $K$, $L$ and $M$ grow large with some fixed ratios.

\begin{theorem}\label{thm:deteq4}
Assume that the individual covariance matrices of the interfering users $\{\Sig_k\}$  satisfy Assumption~{\rm\ref{ass:covmodel}} and that all $K+1$ length-$L$  pilot sequences are random pilots  satisfying Assumption~{\rm\ref{ass:pilotmodel}}. Let $\lambda_{0,1}\geq \cdots \geq \lambda_{0,r_0}>0$ denote the non-zero eigenvalues of $\Sig_0$ and define
\begin{align} \label{eq:xi_0_deteq4}
\bar{\xi}_{0}^{(\mathsf{ii})}\big(\Sig_0, \{\snr_k\} ; \gamma_L \big) \triangleq \frac{1}{M}\sum_{i=1}^{r_0} \frac{\beta_0 \lambda_{i,0}  }{1+ (1+\gamma_L)^{-1} L  \snr_0\lambda_{i,0}}
\end{align}
with $\gamma_L$ being the unique solution to the fixed point equation in \eqref{eq:gamma}. 
Then, under the conditions of Theorem {\rm \ref{thm:deteq3_uni}} and in the asymptotic regime of Proposition {\rm \ref{prop:Gam_L2}}, we have that
\begin{equation}
L\Big|\mse_{0}^{(\mathsf{ii})}\big(\P, \{\Sig_k \}, \{\snr_k\} \big)  - \bar{\xi}_{0}^{(\mathsf{ii})}\big(\Sig_0, \{\snr_k\}; \gamma_L \big)\Big|
  \xrightarrow[K,L,M\rightarrow\infty]{\mathrm{a.s.}} 0.
\end{equation}
\end{theorem}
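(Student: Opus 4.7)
The plan is to chain together the two deterministic equivalents already available: Theorem~\ref{thm:deteq3_uni} absorbs the randomness of the pilot matrix~$\P$, while Proposition~\ref{prop:Gam_L2} absorbs the randomness of the interfering covariances~$\{\Sig_k\}_{k=1}^K$. By the triangle inequality, the claim reduces to establishing \emph{(i)}~$L\big|\mse_{0}^{(\mathsf{ii})}(\P,\{\Sig_k\},\{\snr_k\})-\xi_{0}^{(\mathsf{ii})}(\{\Sig_k\},\{\snr_k\};\Gam_L)\big|\to 0$ and \emph{(ii)}~$L\big|\xi_{0}^{(\mathsf{ii})}(\{\Sig_k\},\{\snr_k\};\Gam_L)-\bar{\xi}_{0}^{(\mathsf{ii})}(\Sig_0,\{\snr_k\};\gamma_L)\big|\to 0$, almost surely.

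For \emph{(i)}, I would condition on a typical realization of $\{\Sig_k\}_{k=1}^K$ drawn from Assumption~\ref{ass:covmodel}. Standard spectral concentration for sample covariance matrices guarantees almost surely that Assumption~\ref{ass:enough_pilots} holds uniformly in~$M$ and that the lower bound on $\lambda_{\min}(\A_L)$ required by condition~\eqref{eq:lambda_max_Gam_L} is met with $\epsilon_M$ independent of~$M$---for instance via Proposition~\ref{prop:Gam_L}\textbf{(b)}, whose hypothesis $\tfrac{1}{K+1}\U^\dagger\tilde{\P}^\dagger\tilde{\P}\U\succeq c_M\I$ holds a.s.\ under Assumption~\ref{ass:pilotmodel} with $c_M$ independent of~$M$ whenever $\liminf L/K>1$. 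Theorem~\ref{thm:deteq3_uni} then yields~\emph{(i)} uniformly in~$M$, which legitimates the joint limit $K,L,M\to\infty$.

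The delicate part is step~\emph{(ii)}. Writing $x_i\triangleq\u_{0,i}^\dagger(\Gam_L+\I_M)^{-1}\u_{0,i}$, $y\triangleq(1+\gamma_L)^{-1}$, $a_i\triangleq L\snr_0\lambda_{0,i}$ and $A\triangleq(\Gam_L+\I_M)^{-1}-(1+\gamma_L)^{-1}\I_M$, one has
\begin{equation*}
\xi_{0}^{(\mathsf{ii})}-\bar{\xi}_{0}^{(\mathsf{ii})}=\frac{\beta_0}{M}\sum_{i=1}^{r_0}\frac{\lambda_{0,i}\,a_i\,(y-x_i)}{(1+a_ix_i)(1+a_iy)}.
\end{equation*}
Two uniform lower bounds are needed. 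The bound $y\ge 1-\nu_M>0$ follows immediately from~\eqref{eq:gamma_L_bounds} and Assumption~\ref{ass:enough_pilots}. For $x_i\ge c>0$, I would bound $\|\Gam_L\|$ from above using the fixed-point equation~\eqref{eq:gam_fp} together with the Hermitization identity $\Sig_k(\I_M+L\snr_k(\Gam_L+\I_M)^{-1}\Sig_k)^{-1}=\Sig_k^{1/2}(\I_M+L\snr_k\Sig_k^{1/2}(\Gam_L+\I_M)^{-1}\Sig_k^{1/2})^{-1}\Sig_k^{1/2}\preceq\Sig_k$, reducing the question to the Bai--Yin boundedness of $\|\Sig_k\|_{op}$ that holds under Assumption~\ref{ass:covmodel}. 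A case split on whether $a_ic\ge 1$ then yields the uniform estimate
\begin{equation*}
\frac{\lambda_{0,i}\,a_i}{(1+a_ic)^2}\le\frac{1}{L\snr_0c^2},
\end{equation*}
the essential point being that the $L$ hidden inside $a_i$ is absorbed by the quadratic growth of the denominator. Combining with $\sum_i(x_i-y)^2\le\|A\|_F^2$ (from orthonormality of $\{\u_{0,i}\}$), $r_0\le M$, and Cauchy--Schwarz gives
\begin{equation*}
L\big|\xi_{0}^{(\mathsf{ii})}-\bar{\xi}_{0}^{(\mathsf{ii})}\big|\le\frac{\beta_0}{\snr_0c^2}\sqrt{\frac{\|A\|_F^2}{M}}\xrightarrow[K,L,M\to\infty]{\mathrm{a.s.}}0
\end{equation*}
by Proposition~\ref{prop:Gam_L2}.

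The main obstacle is the cancellation of~$L$ in step~\emph{(ii)}: a naive Lipschitz bound on $t\mapsto 1/(1+a_it)$ would leave an uncancelled factor $L$ that the mere Frobenius convergence of~$A$ could not absorb, so the argument hinges on a \emph{simultaneous} uniform lower bound on both $x_i$ and~$y$, which ultimately requires control of $\|\Gam_L\|$ via the fixed-point equation. A secondary technical point is checking that the hypotheses of Theorem~\ref{thm:deteq3_uni} hold almost surely under the joint random covariance/random pilot model; this should follow from standard concentration of Wishart spectra in the asymptotic regime $\liminf ML^{-\delta}>0$ with $\delta\in[2/3,1)$.
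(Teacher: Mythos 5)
Your overall strategy is the same as the paper's: split off the pilot randomness with Theorem~\ref{thm:deteq3_uni} (whose hypotheses are simply assumed in the statement of Theorem~\ref{thm:deteq4}, so your side discussion of verifying them via Proposition~\ref{prop:Gam_L}\textbf{(b)} and $\liminf L/K>1$ is unnecessary and not valid in general, e.g.\ the paper's simulations use $\alpha_L=3/4$), then show $L\big|\xi_{0}^{(\mathsf{ii})}(\{\Sig_k\},\{\snr_k\};\Gam_L)-\bar{\xi}_{0}^{(\mathsf{ii})}(\Sig_0,\{\snr_k\};\gamma_L)\big|\to 0$ by an exact difference identity, uniform bounds that cancel the factor $L$, and Cauchy--Schwarz against $\frac{1}{M}\|(\Gam_L+\I_M)^{-1}-(1+\gamma_L)^{-1}\I_M\|_F^2\to 0$ from Proposition~\ref{prop:Gam_L2}. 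The paper carries this out with the resolvent identity at the matrix level, bounding $\Sig_0(\I_M+L\snr_0\S_L\Sig_0)^{-1}$ by $\tfrac{1}{L\snr_0}\S_L^{-1}$ and likewise for $\bar{\S}_L$; your scalar version in the eigenbasis of $\Sig_0$, with the case split giving $\lambda_{0,i}a_i/(1+a_ic)^2\le 1/(L\snr_0 c^2)$, is the same estimate in different clothing. You correctly identified that the whole argument hinges on uniform lower bounds on $\u_{0,i}^\dagger(\Gam_L+\I_M)^{-1}\u_{0,i}$ and $(1+\gamma_L)^{-1}$.

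The one step that fails as written is your justification of the lower bound on $x_i$. The majorization $\Sig_k(\I_M+L\snr_k(\Gam_L+\I_M)^{-1}\Sig_k)^{-1}\preceq\Sig_k$ only yields $\Gam_L\preceq\sum_{k=1}^K\snr_k\Sig_k$, and that sum has $K\to\infty$ terms: unless the $\snr_k$ are summable (which is exactly the degenerate case \textbf{(a)} of Corollary~\ref{cor:deteq4}), its operator norm diverges, so boundedness of each $\|\Sig_k\|$ buys you nothing. The correct argument must keep the $L$ in the denominator: writing $\S_L=(\Gam_L+\I_M)^{-1}$, the fixed point equation gives $\Ps_k^{1/2}(L\Ps_k^{1/2}\S_L\Ps_k^{1/2}+\I_M)^{-1}\Ps_k^{1/2}\preceq\|\S_L^{-1}\|\,\tfrac{1}{L}\U_k\U_k^\dagger$ (up to the $\La_k^{-1}/\snr_k$ term), whence the self-consistent inequality $\|\S_L^{-1}\|\le\|\S_L^{-1}\|\,\big\|\tfrac{1}{L}\sum_k\U_k\U_k^\dagger\big\|+1\le\nu_M\|\S_L^{-1}\|+1$ and thus $\|\Gam_L+\I_M\|\le(1-\nu_M)^{-1}$ under Assumption~\ref{ass:enough_pilots}; this is precisely \eqref{eq:sum_uk} in the proof of Theorem~\ref{thm:deteq3}. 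With that substitution (take $c=1-\nu_M$, which also serves for $y$ via \eqref{eq:gamma_L_bounds} and \eqref{eq:enough_pilots2}), your step \emph{(ii)} closes and the proof is complete.
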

\vspace*{-8pt}
\noindent \emph{Proof.} See Appendix \ref{app:deteq4}.

\medskip


Observe that the deterministic equivalent for the MSE of user $0$ in \eqref{eq:xi_0_deteq4} has a very similar expression to the MSE given in Lemma \ref{lem:mse_ii_orthogonal}. More precisely, the LMMSE estimator with non-orthogonal  pilots of length $L$ in the large-system regime becomes equivalent to the LMMSE estimator with orthogonal pilots of length $(1+\gamma_L)^{-1} L$ $($see Lemma~$\ref{lem:mse_ii_orthogonal})$, where $\gamma_L$ is the solution of \eqref{eq:gamma}. Thus, the effect of pilot contamination can be understood as an effective reduction of the estimation SINRs (see definition in \eqref{eq:SINR}) from $L\snr_0\lambda_{0,i}$ to $(1+\gamma_L)^{-1} L\snr_0\lambda_{0,i}$, where  $(1+\gamma_L)^{-1}$ satisfies the bounds in \eqref{eq:gamma_L_bounds}, or, following the discussion in \cite[Sec.~3.2]{BjornsonHoydisSanguinetti} as an effective reduction of the pilot processing gain. Accordingly, we can define the equivalent pilot processing gain/length  of random non-orthogonal pilots as $(1+\gamma_L)^{-1} L$.    

It is very interesting to investigate under which conditions on the system parameters, more exactly, on the received SNRs $\{\snr_k\}$ and the covariance ranks $\{r_k\}$ of the interfering users, these limiting values are attained. 

\medskip

\begin{corollary}\label{cor:deteq4}
Define
\begin{equation} \label{eq:de}
\bar{\xi}_{0}^{(\mathsf{ii})}\big(\Sig_0, \snr_0; \gamma_\infty \big) \triangleq \frac{1}{M}\sum_{i=1}^{r_0} \frac{\beta_0 \lambda_{i,0} }{1+(1+\gamma_\infty)^{-1}  L  \snr_0\lambda_{i,0}}.
\end{equation}  
Then, under the conditions of Theorem $\ref{thm:deteq4}$, it holds that
\begin{equation}
L\Big|\mse_{0}^{(\mathsf{ii})}\big(\P, \{\Sig_k \}, \{\snr_k\}\big) -  \bar{\xi}_{0}^{(\mathsf{ii})}\big(\Sig_0, \snr_0; \gamma_\infty \big) \Big|
\xrightarrow[K,L,M\rightarrow\infty]{\mathrm{a.s.}} 0
\end{equation} 
where constant $\gamma_{\infty}$ takes the following values:
\begin{itemize}
\item[\emph{\textbf{(a)}}] $\gamma_{\infty} = 0$, whenever
\begin{equation} \label{eq:beta_far_ues}
\frac{1}{K} \sum_{k=1}^K \Big(\frac{1}{1 - \frac{\bar{\tau}_M}{\alpha_L} +\frac{1}{ L\snr_k}}\Big)\xrightarrow[K,L\rightarrow\infty]{} 0;
\end{equation}
\item[\emph{\textbf{(b)}}] $\gamma_{\infty} = \frac{\bar{\tau}_M}{\alpha_L-\bar{\tau}_M}$ with $\bar{\tau}_M =\frac{1}{K}\sum_{k=1}^K \tau_{M,k}$, whenever
\begin{equation} \label{eq:beta_close_ues}
\frac{1}{K}\sum_{k=1}^K \Big(\frac{1}{L \snr_k}\Big)^{1/2}\xrightarrow[K,L\rightarrow\infty]{} 0.
\end{equation}
\end{itemize}
\end{corollary}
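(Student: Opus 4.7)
The plan is to apply Theorem \ref{thm:deteq4} and then replace the fixed-point $\gamma_L$ by the explicit constant $\gamma_\infty$ given in each case, controlling the incurred error by analyzing the fixed-point equation \eqref{eq:gamma}. Writing
\[
\bar\xi_0^{(\mathsf{ii})}(\Sig_0,\{\snr_k\};\gamma) = \frac{1}{M}\sum_{i=1}^{r_0}\frac{\beta_0\lambda_{0,i}(1+\gamma)}{(1+\gamma)+L\snr_0\lambda_{0,i}},
\]
the deterministic equivalent depends on $\{\snr_k\}_{k\geq 1}$ only through $\gamma$. Differentiating and using $\tfrac{x}{(c+x)^2}\leq \tfrac{1}{c+x}$, one gets $\partial_\gamma \bar\xi_0^{(\mathsf{ii})}\leq \tfrac{\beta_0}{M}\sum_i \tfrac{\lambda_{0,i}}{(1+\gamma)+L\snr_0\lambda_{0,i}} = O(1/L)$ in the asymptotic regime covered by Theorem \ref{thm:deteq4}. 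Hence, by the mean value theorem,
\[
L\,\big|\bar\xi_0^{(\mathsf{ii})}(\Sig_0,\{\snr_k\};\gamma_L) - \bar\xi_0^{(\mathsf{ii})}(\Sig_0,\snr_0;\gamma_\infty)\big| \leq C\,|\gamma_L-\gamma_\infty|,
\]
reducing the problem to showing $|\gamma_L-\gamma_\infty|\to 0$ almost surely.

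Denote the summand of \eqref{eq:gamma} by $g_{M,k,L}(\gamma)$ and set $u_k=\tau_{M,k}(1+\gamma)/(L\snr_k)$, so that $A_k=1+\tau_{M,k}+u_k$. The identity $A_k^2-4\tau_{M,k}=(1-\tau_{M,k}+u_k)^2+4\tau_{M,k}u_k$ gives $\sqrt{A_k^2-4\tau_{M,k}}\geq 1-\tau_{M,k}+u_k$, which yields the sharp upper bound
\[
g_{M,k,L}(\gamma) \leq \frac{\tau_{M,k}(1+\gamma)}{1+u_k} = \frac{\tau_{M,k}(1+\gamma)L\snr_k}{L\snr_k+\tau_{M,k}(1+\gamma)} \leq \tau_{M,k}(1+\gamma).
\]
The last inequality alone yields $\gamma_L \leq \bar\tau_M/(\alpha_L-\bar\tau_M)$, so that $\gamma_L$ is bounded above, and $\tau_{M,k}(1+\gamma_L)\geq c>0$ uniformly (using $\liminf\tau_{M,k}>0$).

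\textbf{Case (a).} Combining the tighter upper bound above with the elementary inequality $\tfrac{1}{1+c/y}\leq \max(1,1/c)\cdot \tfrac{1}{1+1/y}$,
\[
\gamma_L \leq \frac{C(1+\gamma_L)}{L}\sum_{k=1}^K \frac{1}{1+1/(L\snr_k)}.
\]
Since $1-\bar\tau_M/\alpha_L\leq 1$, condition \eqref{eq:beta_far_ues} forces $\tfrac{1}{K}\sum_k \tfrac{1}{1+1/(L\snr_k)}\to 0$, whence $\tfrac{1}{L}\sum_k\tfrac{1}{1+1/(L\snr_k)}=\alpha_L^{-1}\cdot\tfrac{1}{K}\sum \to 0$, and by boundedness of $\gamma_L$ we conclude $\gamma_L\to 0 = \gamma_\infty$.

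\textbf{Case (b).} The upper bound of Step 2 already gives $\gamma_L\leq \gamma_\infty=\bar\tau_M/(\alpha_L-\bar\tau_M)$. For the matching lower bound, I would establish the uniform H\"older-type estimate
\[
0\leq \tau_{M,k}(1+\gamma)-g_{M,k,L}(\gamma) \leq C\sqrt{u_k}, \qquad u_k\geq 0,
\]
by patching the first-order Taylor expansion of $g_{M,k,L}$ near $u_k=0$ (with Lipschitz constant controlled by $1/(1-\tau_{M,k})$, finite under $\limsup\tau_{M,k}<1$) in the regime $u_k\leq 1$, and the crude bound $|\tau_{M,k}(1+\gamma)-g_{M,k,L}(\gamma)|\leq \tau_{M,k}(1+\gamma)\leq C \leq C\sqrt{u_k}$ in the regime $u_k\geq 1$. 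Summing and invoking \eqref{eq:beta_close_ues},
\[
\gamma_L \geq \frac{\bar\tau_M(1+\gamma_L)}{\alpha_L}-\frac{C\sqrt{1+\gamma_L}}{\alpha_L}\cdot\frac{1}{K}\sum_{k=1}^K(1/L\snr_k)^{1/2} = \frac{\bar\tau_M(1+\gamma_L)}{\alpha_L}-o(1),
\]
so that $\gamma_L\geq \gamma_\infty-o(1)$, and the sandwich yields $|\gamma_L-\gamma_\infty|\to 0$. The main obstacle is precisely this uniform $\sqrt{u_k}$ estimate: it must be tight both for $u_k\ll 1$ (where only a linear bound applies) and for $u_k\gtrsim 1$ (where $g_{M,k,L}$ saturates near zero). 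The $1/2$-H\"older exponent is exactly what makes condition \eqref{eq:beta_close_ues}, with its square-root structure, the natural hypothesis; a sharper exponent would suggest (falsely) a weaker sufficient condition, while a weaker exponent would require a stronger hypothesis than the one given.
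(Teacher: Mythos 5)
Your proposal is correct and follows essentially the same route as the paper: both reduce the corollary to showing that the fixed point $\gamma_L$ of \eqref{eq:gamma} converges to $\gamma_\infty$, using the upper bound $\tau_{M,k}(1+\gamma)/(1+u_k)$ on each summand for case \textbf{(a)} and, for case \textbf{(b)}, sandwiching $\gamma_L$ between $\gamma_\infty$ and $\bar{\tau}_M(1+\gamma_L)/\alpha_L$ minus a square-root error term controlled by \eqref{eq:beta_close_ues}. The only cosmetic difference is that the paper obtains your H\"older-$1/2$ estimate in one line --- after substituting $x=\gamma_L/(1+\gamma_L)$ it writes each summand in the conjugate form $B_k-\sqrt{B_k^2-c_k}$ and applies $1-\sqrt{1-v}\leq\sqrt{v}$ --- rather than patching the Taylor and saturation regimes, and it concludes case \textbf{(b)} by a compactness/subsequence argument instead of your direct sandwich.
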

\vspace*{-8pt}
\noindent \emph{Proof.} See Appendix \ref{app:deteq4bis}.
\medskip

The conditions in Corollary \ref{cor:deteq4} can be interpreted as sufficient conditions for the system to be either \emph{\textbf{(a)}} interference-free or \emph{\textbf{(b)}} pilot-contaminated. Recall that the SNR of user $k$ is defined as $\snr_k = \beta_k P_k / \sigma^2$, where $\beta_k>0$ denotes the pathloss. As an example, let us consider that the cell size increases with the number of users and that the users are uniformly distributed over the cell. Then, under a suitable ordering of the users according to the pathloss, there exists some constants $C_{\beta}>0$ and $e_{\beta}>1$ such that $\beta_k \leq C_{\beta}k^{-e_{\beta}}$, and the conditions in Corollary \ref{cor:deteq4}.\emph{\textbf{(a)}} are satisfied. Hence, we are in the noise-limited scenario and the deterministic equivalent in \eqref{eq:xi_0_deteq4} takes the form of the MSE expression in Lemma \ref{lem:mse_ii_orthogonal} for the interference-free case.  To the contrary, when the cell size is fixed and the users are uniformly distributed in the cell, there exists some constant $\bar{\beta}>0$ such that $\beta_k > \bar{\beta}$ for $k=1, \ldots, K$. Thus, the conditions in Corollary \ref{cor:deteq4}.\emph{\textbf{(b)}} are fulfilled and the LMMSE estimator does not completely remove interference. In that case, the estimation SINRs gets reduced by a factor of  $(1-\tfrac{\bar{\tau}_M}{\alpha_L})^{-1}$ due to the effect of pilot contamination.

\subsection{Covariance-Aided CSI Acquisition for Training Overhead Reduction} \label{sec:training_overhead}

\noindent Let us now illustrate how the previous large-system analysis can be used to quantify the benefits of exploiting the knowledge of the user covariance matrices during CSI acquisition in order to reduce the training overhead beyond the extreme cases of mutually orthogonal pilots and/or covariance matrices. This can be more formally stated as follows.

\smallskip
\begin{problem*}[Pilot length optimization] \label{pr:pilot_length}
 Given a massive MIMO system with $M$ BS antennas serving $K+1$ users, we seek the minimum pilot length  $L^{\star}$ which guarantees for the covariance-aided CSI acquisition in  
 \eqref{eq:hat_h_k_ii} (case {\rm\textsf{(ii)}}) at least the same average channel estimation performance obtained by the conventional CSI acquisition scheme in \eqref{eq:hat_h_k_i}  with orthogonal pilots (case {\rm\textsf{(i)}}) of length $K+1$, that is, the minimum pilot length  $L^{\star}$ such that
\begin{align} \label{eq:design_condition}
\mse_0^{\mathsf{(ii)}} \big(\P, \{ \Sig_k\}, \{\snr_k\} \big) \leq \mse_0^{\mathsf{(i)}}(\snr_0)
\end{align}
where $\mse_0^{\mathsf{(i)}}(\snr_0)$ and $\mse_0^{\mathsf{(ii)}} \big(\P, \{ \Sig_k\}, \{\snr_k\} \big)$ are defined in \eqref{eq:sum_mse_i} and \eqref{eq:sum_mse_ii}, respectively.
\end{problem*}
 \smallskip

 The previous pilot length optimization problem is difficult to solve based on the MSE expression depending of the exact covariance matrices and pilots given in \eqref{eq:sum_mse_ii}. However, it can be approximated in closed form using the deterministic equivalent in Corollary \ref{cor:deteq4}.{\bf(\emph{b})} as follows. We upper-bound the deterministic equivalent using Jensen's inequality as in \eqref{eq:mse_ii_ub}:
\begin{align}
\bar{\xi}_{0}^{(\mathsf{ii})}\Big(\Sig_0, \snr_0; \frac{\bar{\tau}_M}{\alpha_L-\bar{\tau}_M} \Big) \leq  \frac{\beta_0 }{1 + \frac{1}{\tau_{M,0}} \big(1- \frac{\bar{\tau}_M}{\alpha_L}\big) L \snr_0  }
\end{align}
so that the performance guarantee condition in \eqref{eq:design_condition} can be simply approximated in the large-system limit as
\begin{align}
  \frac{1}{\tau_{M,0}} \big(1- \frac{\bar{\tau}_M}{\alpha_L}\big) L\snr_0  \geq (K+1)\snr_0
\end{align}
which implies that $L^{\star}$ can be approximated by $L^{\mathsf{(ii)}}$ given by
\begin{align} \label{eq:estimated_length}
L^{\mathsf{(ii)}} = \big\lceil (K+1) \tau_{M,0} + K \bar{\tau}_M   \big\rceil .
\end{align}
The length in \eqref{eq:estimated_length} can be interpreted as follows.
The first term,  $(K+1)\tau_{M,0}$, corresponds to the noise reduction obtained from the projection into the $r_0$-dimensional subspace spanned by the covariance matrix of user $0$ performed by the covariance-aided channel estimator in \eqref{eq:hat_h_k_ii}.
The second term, $K \bar{\tau}_M$, accounts for the loss due the interference of the other users as it is explicit in \eqref{eq:sum_mse_ii}.
The result suggests moreover that non-orthogonal pilots are useful only if
\begin{align} \label{eq:rank_cod_length}
  \tau_{M,0}+\bar{\tau}_M<1.
\end{align}
Indeed, if it is not the case, the approximated minimum length $L^{\mathsf{(ii)}}$ is necessarily greater than $K+1$, meaning that the use of orthogonal pilots will result in better channel estimation accuracy even if the individual covariance matrices are exploited. This can be explained by the fact that orthogonal pilots naturally completely remove interference between users while the interference induced by non-orthogonal pilots cannot be not compensated by the covariance matrix knowledge if \eqref{eq:rank_cod_length} does not hold.

To the contrary, whenever \eqref{eq:rank_cod_length} is satisfied, covariance-aided CSI acquisition allows to reduce the pilot length with respect to conventional CSI acquisition with orthogonal pilots of length $K+1$ by a factor $\Delta^{\star} = (K+1)/\Exp\{L^{\star}\}$. 
This ratio can be approximated by $\Delta^{\mathsf{(ii)}} = (K+1)/L^{\mathsf{(ii)}}$ satisfying 
\begin{equation} \label{eq:limitDelta}
\big|\Delta^{\mathsf{(ii)}}-\bar{\Delta}\big| \xrightarrow[K,L,M\rightarrow\infty]{} 0\text{\ \ \ \ \ \ with \ \ \ \ \ \ } \bar{\Delta} =  \frac{1}{\tau_{M,0} +\bar{\tau}_M} .
\end{equation}

The accuracy of the approximation of $L^{\star}$ by $L^{\mathsf{(ii)}}$ and that of $\Delta^{\star}$ by $\Delta^{\mathsf{(ii)}}$  is numerically investigated in Section \ref{sec:numerical}.

\section{Numerical Evaluation} \label{sec:numerical}

\noindent In this section we illustrate numerically the accuracy of the deterministic equivalents presented in Theorems~\ref{thm:deteq1}-\ref{thm:deteq4} and Corollary~\ref{cor:deteq4}. For doing so, we consider the following the random pilots and covariance matrix models. 
\smallskip
\begin{pilot_model}
Given the number of users $K+1$, we generate the pilot sequences $\p_k = \big( p_k(1), \ldots, p_k(L)\big)^{\tras}$ of length $L = \lfloor \alpha_L K\rfloor$ for $k=0, \ldots, K$ as
\begin{align} \label{eq:sim_pilot_model}
p_{k}(\ell) & = e^{j\psi_{k,\ell}}, & \qquad  \ell = 1, \ldots, L
\end{align}
where $\psi_{k,\ell}$ are i.i.d. random variables uniformly distributed in  $[0,2\pi)$ (this model satisfies Assumption~{\rm\ref{ass:pilotmodel}}) and we generate the covariance matrices using one of the following models.
\end{pilot_model}
\smallskip

\begin{cov_model}[Maximum Entropy]
Under the maximum entropy principle, the individual spatial covariance matrices are modeled as 
\begin{align} \label{eq:maxentr_cov_model}
\Sig_{k} & = \frac{1}{ r_k } \X_{k}\X_{k}^\dagger, & \qquad k=0,\ldots, K
\end{align} 
where $r_k = \lfloor \tau_{k,M} M \rfloor$ denotes the rank, the entries of $\X_{k}\in\Compl^{M \times r_k}$ are i.i.d.~complex Gaussian zero-mean, unit variance random variables. This model satisfies Assumption~{\rm\ref{ass:covmodel}}. 
\end{cov_model}

\smallskip

Alternatively, we also consider a one-ring correlation model \cite{Adhikary_JSDM_IT2013},\cite[Sec.~2.6]{BjornsonHoydisSanguinetti}, which assumes that a user located at azimuth angle $\phi$ is surrounded by a cluster of scatterers creating multipath components with angles of arrival uniformly distributed in $[\phi-\sqrt{3}\sigma_{\phi}, \phi+\sqrt{3}\sigma_{\phi}]$, where $\sigma_{\phi}$ denotes the angular spread. In particular we generate the random spatial covariance matrices as follows. 

\smallskip

\begin{cov_model}[One-Ring with UCA]
 Assume that the BS is equipped with uniform circular array (UCA) with $M$ antenna elements with half-wavelength spacing {\rm \cite{UCA2005}}. Under the one-ring model, the individual spatial covariance matrices are obtained as 
  \begin{align} \label{eq:uca_cov_model}
    \big[ \Sig_{k}\big]_{n,m} = \frac{\beta_k}{2\sqrt{3}\sigma_{\phi}}\int_{-\sqrt{3}\sigma_{\phi}}^{\sqrt{3}\sigma_{\phi}} e^{-j\frac{M}{2} 
    \big(\cos(\phi_k+\varphi-\theta_n)-\cos(\phi_k+\varphi-\theta_m)\big)} d\varphi, & \qquad n,m=1,\ldots, M, \qquad k=0,\ldots, K
    \end{align}
  where $\theta_m = (m-1) 2\pi/M$, the azimuth angle $\phi_k$ is uniformly distributed in  $[0,2\pi)$, the angular spread $\sigma_{\phi}$ is set to $10^\circ$.
\end{cov_model}
\smallskip
Observe that this model does not satisfy Assumptions~{\rm\ref{ass:covmodel}} since the ranks $\{r_k\}$ are random and cannot be explicitly controlled. We illustrate the difference by plotting the respective average normalized rank versus $M$ for both models in Figure~\ref{fig:rank_covariances}.

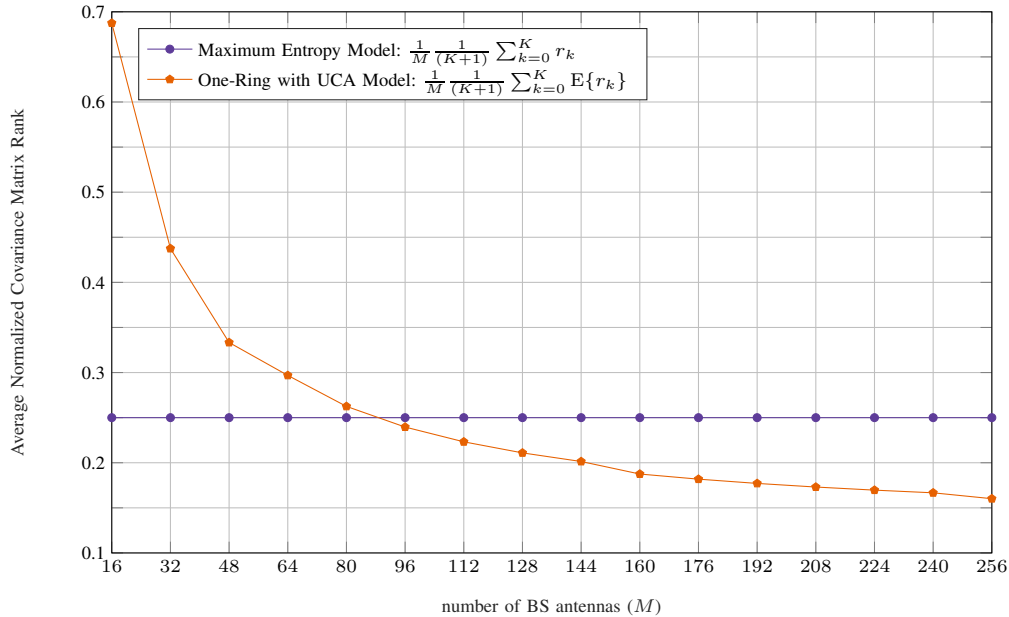
\begin{figure}[t!]
\centering
{\scriptsize 
%
%
\definecolor{mycolor1}{rgb}{0.36860,0.23530,0.60000}%
\definecolor{mycolor2}{rgb}{0.90200,0.38040,0.00390}%
\begin{tikzpicture}

\begin{axis}[%
width=0.65\textwidth,
height=0.4\textwidth,
at={(0\textwidth,0\textwidth)},
scale only axis,
xmin=16,
xmax=256,
xtick={16,32,48,64,80,96,112,128,144,160,176,192,208,224,240,256},
xlabel style={font=\color{white!15!black}},
xlabel={number of BS antennas ($M$)},
ymin=0.1,
ymax=0.7,
ytick={0.1,0.15,0.2,0.25,0.3,0.35,0.4,0.45,0.5,0.55,0.6,0.65,0.7},
yticklabels={{0.1},{},{0.2},{},{0.3},{},{0.4},{},{0.5},{},{0.6},{},{0.7}},
ylabel style={font=\color{white!15!black}},
ylabel={Average Normalized Covariance Matrix Rank},
axis background/.style={fill=white},
xmajorgrids,
ymajorgrids,
legend style={at={(0.03,0.97)}, anchor=north west, legend cell align=left, align=left, draw=white!15!black}
]
\addplot [color=mycolor1, mark size=1.5pt, mark=*, mark options={solid, fill=mycolor1, mycolor1}]
  table[row sep=crcr]{%
16	0.25\\
32	0.25\\
48	0.25\\
64	0.25\\
80	0.25\\
96	0.25\\
112	0.25\\
128	0.25\\
144	0.25\\
160	0.25\\
176	0.25\\
192	0.25\\
208	0.25\\
224	0.25\\
240	0.25\\
256	0.25\\
};
\addlegendentry{Maximum Entropy Model: $\frac{1}{M}\frac{1}{(K+1)}\sum_{k=0}^K r_k$ }

\addplot [color=mycolor2, mark size=1.6pt, mark=pentagon*, mark options={solid, fill=mycolor2, mycolor2}]
  table[row sep=crcr]{%
16	0.6875\\
32	0.4375\\
48	0.333333333333333\\
64	0.296875\\
80	0.2625\\
96	0.239583333333333\\
112	0.223214285714286\\
128	0.2109375\\
144	0.201388888888889\\
160	0.1875\\
176	0.181818181818182\\
192	0.177083333333333\\
208	0.173076923076923\\
224	0.169642857142857\\
240	0.166666666666667\\
256	0.16015625\\
};
\addlegendentry{One-Ring with UCA  Model: $\frac{1}{M} \frac{1}{(K+1)}\sum_{k=0}^K \Exp\{ r_k \}$ }

\addplot [color=black, forget plot]
  table[row sep=crcr]{%
32	1\\
48	1\\
64	1\\
80	1\\
96	1\\
112	1\\
128	1\\
144	1\\
160	1\\
176	1\\
192	1\\
208	1\\
224	1\\
240	1\\
256	1\\
};
\end{axis}
\end{tikzpicture}
\caption{ \small Averaged normalized rank for the maximum entropy covariance model with $r_k = \lfloor M/4 \rfloor$ and the one-ring covariance models with $10^\circ$ angular spread.}
\label{fig:rank_covariances}
\vspace*{-5mm}
\end{figure}

First, in Figure \ref{fig:mse_max_entropy} we plot the exact MSE for the covariance-aided CSI acquisition strategy (computed using \eqref{eq:sum_mse_ii}) and the approximations obtained from the deterministic equivalents in Theorems~\ref{thm:deteq1},~\ref{thm:deteq3},~and~\ref{thm:deteq4} averaged over 100 realizations of the random pilots and covariance matrices. For each value of the BS antenna number $M$, we set the number of users to $K+1$, with $K=\lfloor M/4 \rfloor-1$, we generate the individual covariance matrices according to the maximum entropy model in \eqref{eq:maxentr_cov_model} with $r_k = \lfloor M/4 \rfloor$ and pathloss $\beta_k=1$, and we allocate pilots of length $L = \lfloor K/4 \rfloor$ following the model in \eqref{eq:sim_pilot_model}.  All the obtained deterministic equivalents provide a very good accuracy  in approximating  the actual MSE and this accuracy increases with the number of antennas as expected. This observation is further confirmed in Figure \ref{fig:mse_error_max_entropy}, where we plot the normalized approximation error incurred by the different deterministic equivalents in order to measure the convergence of the approximation in the large-system limit and defined as 
\begin{equation}
L\Exp\big\{\big|\mse_0^{(\mathsf{ii})} (\P,\{\Sig_k\},\{\snr_k\})-\xi_{0}^{(\mathsf{ii})}(\cdot)\big|\big\}
\end{equation}
where $\xi_{0}^{(\mathsf{ii})}(\cdot)$ denotes the considered deterministic equivalent.

In this simulation setup, the effect of using non-orthogonal pilots is more important in the CSI estimation MSE than the relative orthogonality of the user covariance matrices. Indeed, we observe a higher accuracy achieved by the deterministic equivalent from Theorem~\ref{thm:deteq1}, where the pilots are assumed to be deterministic and, hence, the actual pilot set is used to compute $\xi_{0}^{(\mathsf{ii})}\big(\P, \Sig_0, \{\snr_k\} \big)$. Furthermore, when the pilots are assumed to be random as done in Theorems~\ref{thm:deteq3}~and~\ref{thm:deteq4}, there is no appreciable accuracy improvement from using the exact covariance matrices (as in Theorem~\ref{thm:deteq3}) with respect to modeling them as random (as in Theorem~\ref{thm:deteq4}).  In both figures, we have omitted the deterministic equivalent in Corollary~\ref{cor:deteq4}, since under case {\bf(\emph{b})}  $\gamma_\infty = \frac{\bar{\tau}_M}{\alpha_L-\bar{\tau}_M}$, it provides the same result as using the fixed point equation for $\gamma_L$ in Proposition~\ref{prop:Gam_L2}. 

Let us now consider the one-ring correlation model with a BS equipped with a UCA as defined in Covariance Matrix Model~2 with unit pathloss for all users. For each covariance matrix, we compute its rank by using the Matlab function, defining then the rank as the number of eigenvalues whose ratio with the strongest eigenvalue is above the numerical precision (see Figure \ref{fig:rank_covariances}). We repeat the previous simulations and plot the results in Figure~\ref{fig:channel_perf_uca_10}. Conversely to the maximum entropy model, the one-ring UCA covariance matrix model does not satisfy Assumption \ref{ass:covmodel} and, hence, Theorems~\ref{thm:deteq1}~and~\ref{thm:deteq4} do not hold in this case. Still, we can see in Figure~\ref{fig:channel_perf_uca_10} that the corresponding deterministic equivalents provide a fairly good approximation of the actual MSE but, as expected, Theorem~\ref{thm:deteq3} results in a higher accuracy, since it explicitly takes into account all individual covariance matrices. 

Finally, in order to quantify the benefits of exploiting the knowledge of the user covariance matrices during CSI acquisition, we focus on the CSI training overhead reduction problem stated in Section \ref{sec:training_overhead}. For each value of the BS antenna number $M$, we set the number of users to $K+1$, with $K=\lfloor M/4 \rfloor-1$, and we compute (using exhaustive search) the minimum pilot length $L^{\star}$ required by covariance-aided CSI acquisition using random non-orthogonal pilots following the model in \eqref{eq:sim_pilot_model}, which guarantees the same MSE as the conventional CSI acquisition strategy with orthogonal pilots of length $L^{(\mathsf{i})}=K+1$. In Figures \ref{fig:pilot_perf_max_entropy}.(a) and \ref{fig:pilot_perf_uca_10}.(a) we compare $L^{\star}$ with the approximated minimum length $L^{\mathsf{(ii)}}=\big\lceil (K+1) \tau_{M,0} + K \bar{\tau}_M \big\rceil$  for both the maximum entropy model with $r_k = \lfloor M/4 \rfloor$ and for the one-ring UCA model with an angular spread of $10^\circ$, respectively, averaged over 100 random realizations of the pilots and the covariance matrices. In both cases $L^{\mathsf{(ii)}}$ gives a very accurate approximation of $\Exp\big\{ L^{\star}\big\}$ which improves with increasing number of BS antennas. This is further confirmed in Figures \ref{fig:pilot_perf_max_entropy}.(b) and \ref{fig:pilot_perf_uca_10}.(b), where we plot the average pilot reduction with respect to orthogonal pilots, defined as $\Delta =  L^{\mathsf{(i)}}/\Exp\big\{ L \big\} = (K+1)/\Exp\big\{ L \big\}$, where $L$ either denotes $ L^{\star}$ or $L^{\mathsf{(ii)}}$. In the case of maximum entropy covariance matrices, for which we can explicitly control the rank so that $\tau_{M,k} = 1/4$, we also include the large-system pilot-length reduction in \eqref{eq:limitDelta},  $\bar{\Delta} = \frac{1}{\tau_{M,0} +\bar{\tau}_M}= 2$. This confirms the benefits of using covariance-aided CSI acquisition for significantly reducing the training overhead in massive MIMO systems.

\section{Conclusions}

\noindent We have applied a large-system analysis to characterize the performance of covariance-aided multi-user CSI estimation in the uplink of a massive MIMO system. Deterministic equivalents of the achieved estimation MSE were obtained under several assumptions related to the stochastic nature of the spatial covariance matrices and/or the pilot sequences.
When the covariance matrices and the pilots sequences are assumed to be drawn from some i.i.d. random distributions, our results indicate that the performance of covariance-aided CSI acquisition can be interpreted as that of a system using orthogonal pilot sequences of certain equivalent pilot length, for which a closed-form expression enables an intuitive interpretation of the achieved MSE.  Numerical results demonstrate that the covariance-based strategy allows to significantly reduce the training overhead with respect to conventional CSI acquisition. Finally, we contributed to random matrix analysis by extending the trace-lemma from \cite{Bai98} to block matrices.

\clearpage
\begin{figure}
\centering
\subfloat[MSE and deterministic equivalents]{{\scriptsize 
%
%
\definecolor{mycolor1}{rgb}{0.36860,0.23530,0.60000}%
\definecolor{mycolor2}{rgb}{0.90200,0.38040,0.00390}%
\definecolor{mycolor3}{rgb}{0.99220,0.72160,0.38820}%
\begin{tikzpicture}

\begin{axis}[%
width=0.65\textwidth,
height=0.466\textwidth,
at={(0\textwidth,0\textwidth)},
scale only axis,
xmin=16,
xmax=256,
xtick={16,32,48,64,80,96,112,128,144,160,176,192,208,224,240,256},
xticklabels={{16},{32},{48},{64},{80},{96},{112},{128},{144},{160},{176},{192},{208},{224},{240},256},
xlabel style={font=\color{white!15!black}},
xlabel={number of BS antennas ($M$)},
ymode=log,
ymin=0.0001,
ymax=0.01,
yminorticks=true,
ylabel style={font=\color{white!15!black}},
ylabel={MSE},
axis background/.style={fill=white},
xmajorgrids,
ymajorgrids,
yminorgrids,
legend style={legend cell align=left, align=left, draw=white!15!black}
]
\addplot [color=white, line width=0.0pt, mark size=1.5pt, mark=*, mark options={solid, fill=white, black}]
  table[row sep=crcr]{%
16	0.00769744535012391\\
32	0.00257752567518054\\
48	0.00156204247238608\\
64	0.00111923651256605\\
80	0.000870906779501302\\
96	0.000713587645575558\\
112	0.000604384276103901\\
128	0.000524080529821591\\
144	0.000462960542303109\\
160	0.000414701212262275\\
176	0.000374852144376643\\
192	0.000342696232168508\\
208	0.000315317873769934\\
224	0.000291547204399278\\
240	0.00027178559633031\\
256	0.000254188654859204\\
};
\addlegendentry{$\Exp\big\{\mse_0^{(\mathsf{ii})} (\P,\{\Sig_k\}, \{\snr_k\})\big\}$}

\addplot [color=mycolor1]
  table[row sep=crcr]{%
16	0.00654465792147914\\
32	0.00246483421177591\\
48	0.00152050958433111\\
64	0.00109777237891869\\
80	0.000858935186213502\\
96	0.000705623125399121\\
112	0.00059887604470694\\
128	0.000519877309386732\\
144	0.000459752019288268\\
160	0.000412084302412833\\
176	0.000372743353802665\\
192	0.000340937125650642\\
208	0.000313856889988744\\
224	0.000290295006544456\\
240	0.000270697259716612\\
256	0.000253242003131544\\
};
\addlegendentry{Thm.~1: $\Exp\big\{\xi_{0}^{(\mathsf{ii})}(\P, \Sig_0, \{\snr_k\})\big\}$}

\addplot [color=mycolor2]
  table[row sep=crcr]{%
16	0.00655402110405728\\
32	0.00245582877498416\\
48	0.00151771547127928\\
64	0.00109725914634971\\
80	0.000858137404808993\\
96	0.000705056487983958\\
112	0.00059819425628976\\
128	0.000519636546021396\\
144	0.0004591959109514\\
160	0.000411456418464184\\
176	0.000372640726538432\\
192	0.000340537233027674\\
208	0.000313502929615228\\
224	0.000290481454200408\\
240	0.000270592497171297\\
256	0.000253248911611012\\
};
\addlegendentry{Thm.~2: $\Exp\big\{\xi_{0}^{(\mathsf{ii})}(\{\Sig_k\}, \{\snr_k\}; \Gam_L)\big\}$}

\addplot [color=mycolor3]
  table[row sep=crcr]{%
16	0.00624237311467101\\
32	0.00242043095979283\\
48	0.00150127208381352\\
64	0.00108804907211711\\
80	0.000853198377192894\\
96	0.000701736092800897\\
112	0.000595940883637248\\
128	0.000517867265329213\\
144	0.00045787947671559\\
160	0.000410346881229778\\
176	0.000371755042238937\\
192	0.000339798035444458\\
208	0.000312900078800639\\
224	0.000289948168833928\\
240	0.0002701335761087\\
256	0.000252853723651495\\
};
\addlegendentry{Thm.~4: $\Exp\big\{\bar{\xi}_{0}^{(\mathsf{ii})}(\Sig_0, \{\snr_k\} ; \gamma_L )\big\}$}

\end{axis}
\end{tikzpicture}
\vspace*{5mm}
\subfloat[Normalized approximation error]{{\scriptsize 
%
%
\definecolor{mycolor1}{rgb}{0.36860,0.23530,0.60000}%
\definecolor{mycolor2}{rgb}{0.90200,0.38040,0.00390}%
\definecolor{mycolor3}{rgb}{0.99220,0.72160,0.38820}%
\begin{tikzpicture}

\begin{axis}[%
width=0.65\textwidth,
height=0.466\textwidth,
at={(0\textwidth,0\textwidth)},
scale only axis,
xmin=16,
xmax=256,
xtick={16,32,48,64,80,96,112,128,144,160,176,192,208,224,240,256},
xticklabels={{16},{32},{48},{64},{80},{96},{112},{128},{144},{160},{176},{192},{208},{224},{240},{256}},
xlabel style={font=\color{white!15!black}},
xlabel={number of BS antennas ($M$)},
ymode=log,
ymin=4e-05,
ymax=0.003,
yminorticks=true,
ylabel style={font=\color{white!15!black}},
ylabel={Normalized Approximation Error},
axis background/.style={fill=white},
xmajorgrids,
ymajorgrids,
yminorgrids,
legend style={legend cell align=left, align=left, draw=white!15!black}
]
\addplot [color=mycolor1]
  table[row sep=crcr]{%
16	0.0023222195363525\\
32	0.000563457317023135\\
48	0.000332263104439758\\
64	0.000236105470120908\\
80	0.000167602306029202\\
96	0.000135396842999432\\
112	0.000110164627939219\\
128	9.66740700017615e-05\\
144	8.34215983858572e-05\\
160	7.58903856338039e-05\\
176	6.7481298367322e-05\\
192	6.15687281253289e-05\\
208	5.55173836851964e-05\\
224	5.13401120476841e-05\\
240	4.78868110027286e-05\\
256	4.44926311999874e-05\\
};
\addlegendentry{Thm.~1}

\addplot [color=mycolor2]
  table[row sep=crcr]{%
16	0.00228684849213328\\
32	0.000609482059863556\\
48	0.00035511604413078\\
64	0.000243154354789002\\
80	0.000181952979585238\\
96	0.000147648188340335\\
112	0.000129312190045205\\
128	0.00010905910185601\\
144	0.000104998158055729\\
160	9.93538638447722e-05\\
176	8.01243022214223e-05\\
192	8.27698626483345e-05\\
208	7.96132153669635e-05\\
224	5.91348071212616e-05\\
240	5.99008540487078e-05\\
256	5.38933074260936e-05\\
};
\addlegendentry{Thm.~2}

\addplot [color=mycolor3]
  table[row sep=crcr]{%
16	0.00291047522568832\\
32	0.000785473576938545\\
48	0.00048616310858048\\
64	0.000343637680207253\\
80	0.000248202747305933\\
96	0.000201694169886084\\
112	0.000169882648811182\\
128	0.000146576187701347\\
144	0.000134433110722944\\
160	0.00012784103559717\\
176	0.000103303319854274\\
192	0.000104848460418519\\
208	9.84964062318456e-05\\
224	7.32035417048732e-05\\
240	7.64623495131695e-05\\
256	6.7077887217259e-05\\
};
\addlegendentry{Thm.~4}

\end{axis}
\end{tikzpicture}
\caption{ \small CSI MSE metrics as a function of $M$ with $K=\lfloor M/4 \rfloor -1$, $\alpha_L=3/4$, $\beta_k =1$, and $\snr_k = 15$ dB averaged over 100 realizations of the random pilots and the covariance matrices following the maximum entropy model with $r_k = M/4$.}
\label{fig:channel_perf_max_entropy}
\end{figure}
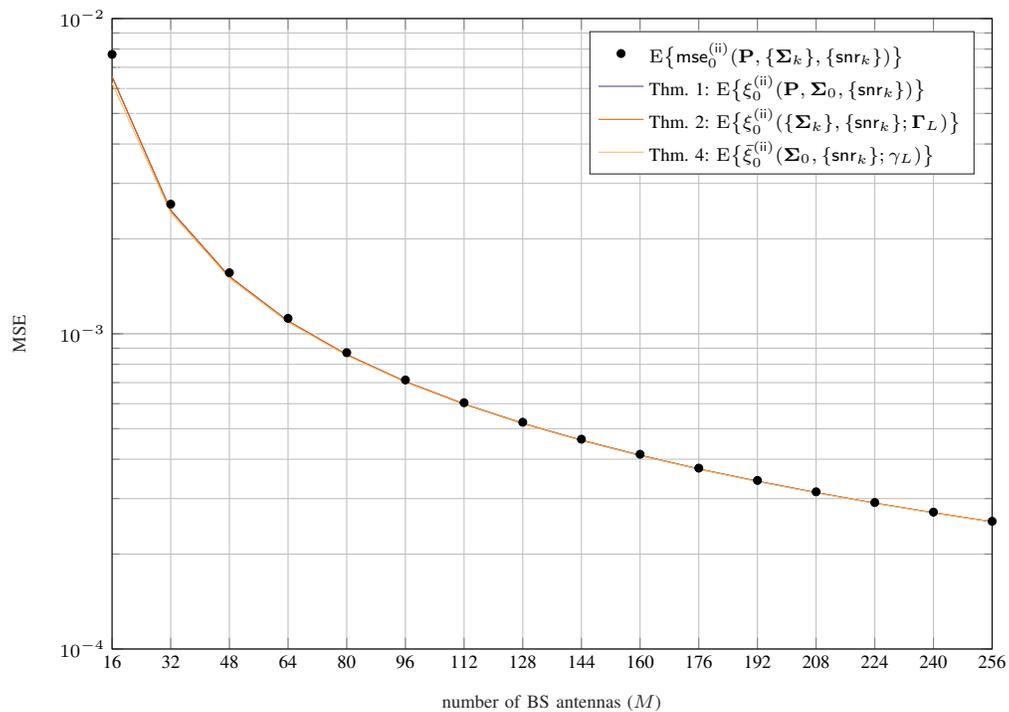
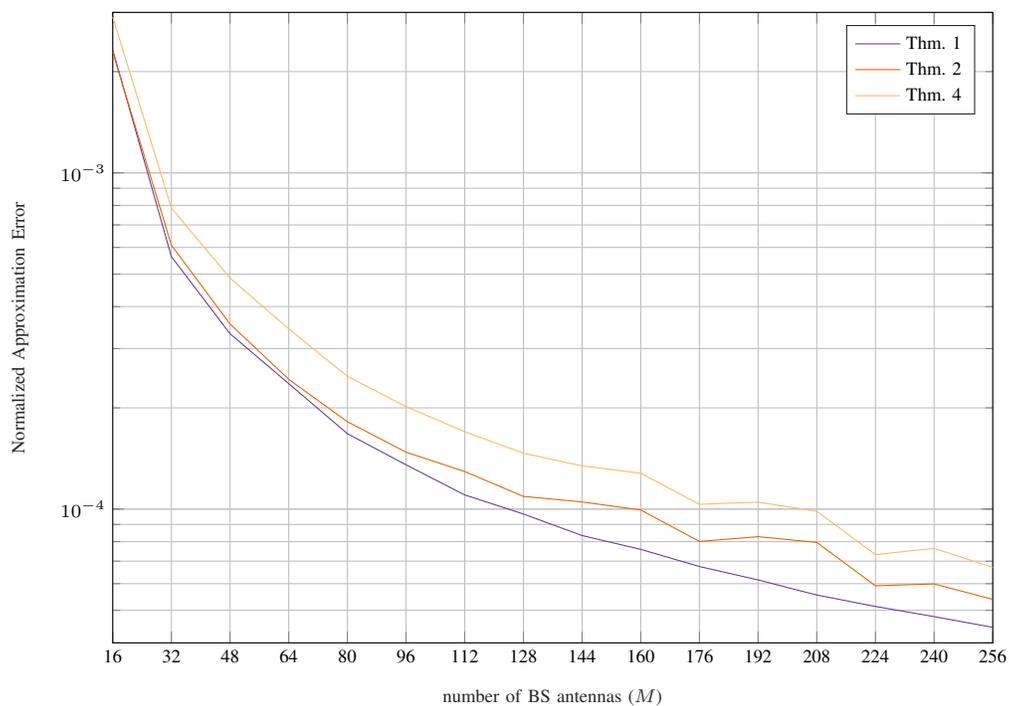

%
\clearpage

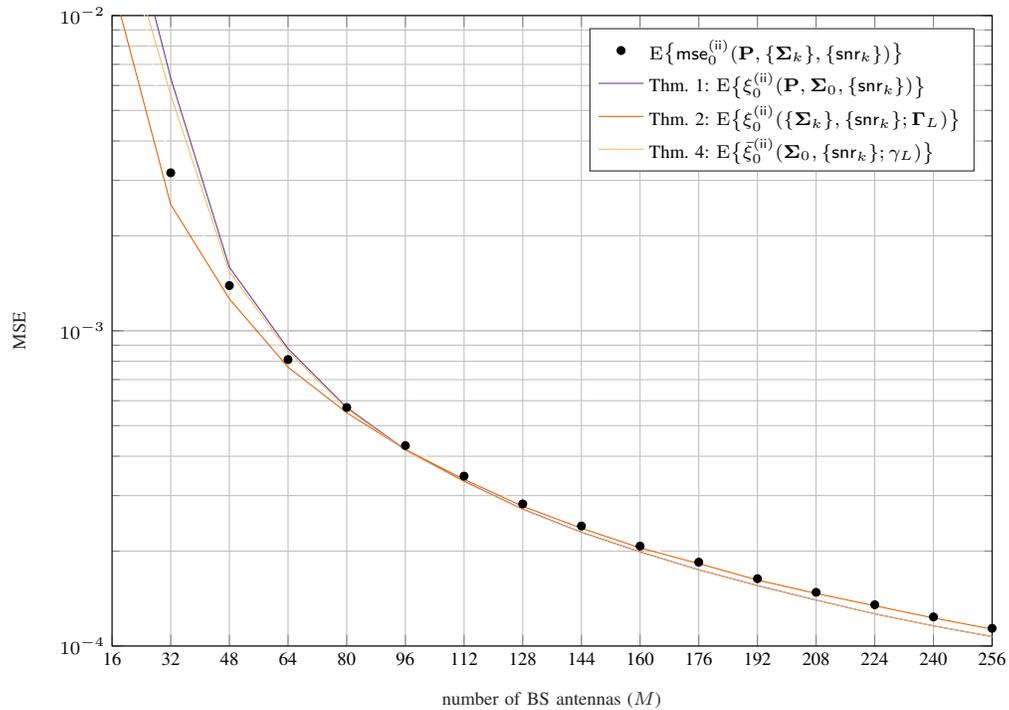
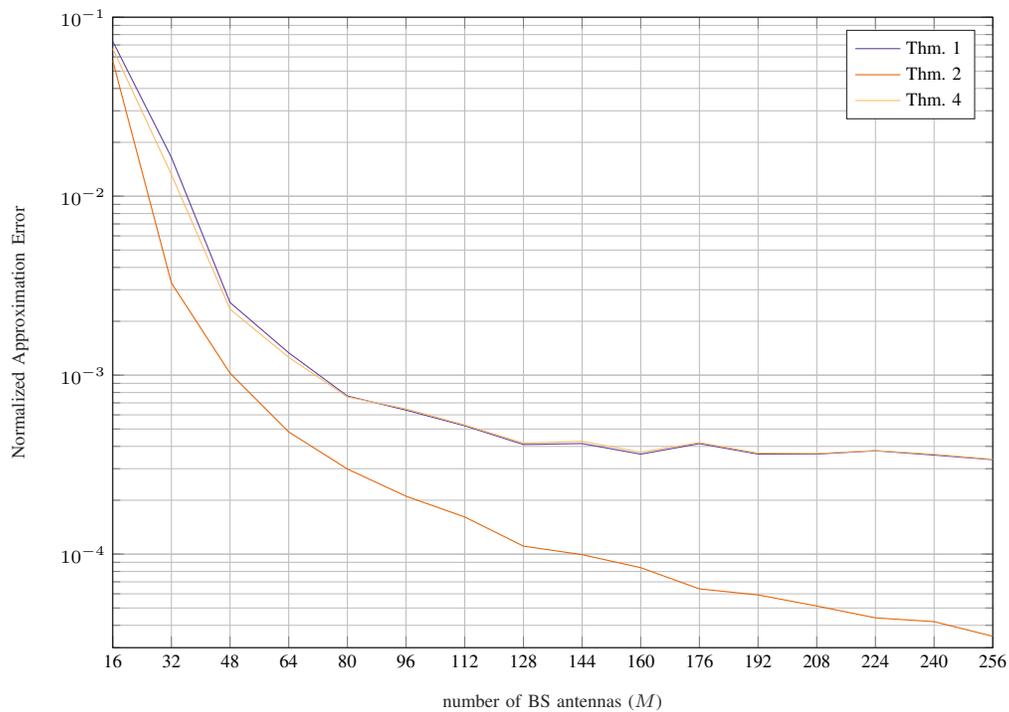
\begin{figure}
\centering
\subfloat[MSE and deterministic equivalents]{{\scriptsize 
%
%
\definecolor{mycolor1}{rgb}{0.36860,0.23530,0.60000}%
\definecolor{mycolor2}{rgb}{0.90200,0.38040,0.00390}%
\definecolor{mycolor3}{rgb}{0.99220,0.72160,0.38820}%
\begin{tikzpicture}

\begin{axis}[%
width=0.65\textwidth,
height=0.466\textwidth,
at={(0\textwidth,0\textwidth)},
scale only axis,
xmin=16,
xmax=256,
xtick={16,32,48,64,80,96,112,128,144,160,176,192,208,224,240,256},
xticklabels={{16},{32},{48},{64},{80},{96},{112},{128},{144},{160},{176},{192},{208},{224},{240},{256}},
xlabel style={font=\color{white!15!black}},
xlabel={number of BS antennas ($M$)},
ymode=log,
ymin=0.0001,
ymax=0.01,
yminorticks=true,
ylabel style={font=\color{white!15!black}},
ylabel={MSE},
axis background/.style={fill=white},
xmajorgrids,
ymajorgrids,
yminorgrids,
legend style={legend cell align=left, align=left, draw=white!15!black}
]
\addplot [color=white, line width=0.0pt, mark size=1.5pt, mark=*, mark options={solid, fill=white, black}]
  table[row sep=crcr]{%
16	0.0413303105468588\\
32	0.00317134363407415\\
48	0.00139106359224089\\
64	0.000810385638182852\\
80	0.000570961605447893\\
96	0.000432210579371609\\
112	0.000345904892492119\\
128	0.000282246904569246\\
144	0.000240229621224276\\
160	0.000207528026099835\\
176	0.000184492567154699\\
192	0.000163526977587906\\
208	0.000148020936383583\\
224	0.000135160104223514\\
240	0.000123651854519914\\
256	0.000113770511041507\\
};
\addlegendentry{$\Exp\big\{\mse_0^{(\mathsf{ii})} (\P,\{\Sig_k\}, \{\snr_k\})\big\}$}

\addplot [color=mycolor1]
  table[row sep=crcr]{%
16	0.0337526150895571\\
32	0.0063370243368715\\
48	0.00159134682171455\\
64	0.000877745627617655\\
80	0.000570926498587459\\
96	0.000419528692086751\\
112	0.00033324767785002\\
128	0.000271646164475984\\
144	0.000229517471754642\\
160	0.000198853864781833\\
176	0.000174472056456381\\
192	0.000155487948540779\\
208	0.0001398247958145\\
224	0.000126451399436909\\
240	0.000115864179267092\\
256	0.000107089745774163\\
};
\addlegendentry{Thm.~1: $\Exp\big\{\xi_{0}^{(\mathsf{ii})}(\P, \Sig_0, \{\snr_k\})\big\}$}

\addplot [color=mycolor2]
  table[row sep=crcr]{%
16	0.0127622408407351\\
32	0.00251597898156607\\
48	0.00126296935556774\\
64	0.000766632959359\\
80	0.000549613917955604\\
96	0.000419820057759122\\
112	0.000337850031806992\\
128	0.000277428899198298\\
144	0.000236467480915959\\
160	0.000204769995179059\\
176	0.000182577955126303\\
192	0.000161936217877098\\
208	0.000146772925623304\\
224	0.00013417771263995\\
240	0.000122735158134497\\
256	0.000113100832796456\\
};
\addlegendentry{Thm.~2: $\Exp\big\{\xi_{0}^{(\mathsf{ii})}(\{\Sig_k\}, \{\snr_k\}; \Gam_L)\big\}$}

\addplot [color=mycolor3]
  table[row sep=crcr]{%
16	0.0245181184484801\\
32	0.00559658324277874\\
48	0.00153657318838369\\
64	0.000862895353016997\\
80	0.000565990296480626\\
96	0.000417229912422374\\
112	0.000331832847285346\\
128	0.000270897262784834\\
144	0.000228900683586774\\
160	0.000198317522153423\\
176	0.000174231824570598\\
192	0.000155192371456666\\
208	0.000139605320633664\\
224	0.00012641323439324\\
240	0.000115765506401658\\
256	0.000107033996220265\\
};
\addlegendentry{Thm.~4: $\Exp\big\{\bar{\xi}_{0}^{(\mathsf{ii})}(\Sig_0, \{\snr_k\} ; \gamma_L )\big\}$}

\end{axis}
\end{tikzpicture}
\vspace*{5mm}
\subfloat[Normalized approximation error]{{\scriptsize 
%
%
\definecolor{mycolor1}{rgb}{0.36860,0.23530,0.60000}%
\definecolor{mycolor2}{rgb}{0.90200,0.38040,0.00390}%
\definecolor{mycolor3}{rgb}{0.99220,0.72160,0.38820}%
\begin{tikzpicture}

\begin{axis}[%
width=0.65\textwidth,
height=0.466\textwidth,
at={(0\textwidth,0\textwidth)},
scale only axis,
xmin=16,
xmax=256,
xtick={16,32,48,64,80,96,112,128,144,160,176,192,208,224,240,256},
xticklabels={{16},{32},{48},{64},{80},{96},{112},{128},{144},{160},{176},{192},{208},{224},{240},{256}},
xlabel style={font=\color{white!15!black}},
xlabel={number of BS antennas ($M$)},
ymode=log,
ymin=3e-05,
ymax=0.1,
yminorticks=true,
ylabel style={font=\color{white!15!black}},
ylabel={Normalized Approximation Error},
axis background/.style={fill=white},
xmajorgrids,
ymajorgrids,
yminorgrids,
legend style={legend cell align=left, align=left, draw=white!15!black}
]
\addplot [color=mycolor1]
  table[row sep=crcr]{%
16	0.0733905384762734\\
32	0.0164883395546703\\
48	0.00254517974581039\\
64	0.00133325278987756\\
80	0.000764797522250228\\
96	0.000637322511681636\\
112	0.000521140473023835\\
128	0.000409694867973287\\
144	0.000414606960406931\\
160	0.000361950797104811\\
176	0.00041433321479274\\
192	0.000362158108239217\\
208	0.000362259108737908\\
224	0.000377961133315567\\
240	0.000357678896645036\\
256	0.00033722584984074\\
};
\addlegendentry{Thm.~1}

\addplot [color=mycolor2]
  table[row sep=crcr]{%
16	0.0571361394122474\\
32	0.00327682326254037\\
48	0.00102475389338521\\
64	0.000481555533338129\\
80	0.000298867624892042\\
96	0.000210638867412277\\
112	0.000161264166546221\\
128	0.000110814123531808\\
144	9.93055580674058e-05\\
160	8.39924171249765e-05\\
176	6.39335457404488e-05\\
192	5.91543601732952e-05\\
208	5.12776395209026e-05\\
224	4.40397406894047e-05\\
240	4.19402203838007e-05\\
256	3.47432406273148e-05\\
};
\addlegendentry{Thm.~2}

\addplot [color=mycolor3]
  table[row sep=crcr]{%
16	0.0668209140583361\\
32	0.0132200492835451\\
48	0.00233533087159379\\
64	0.00125849061905865\\
80	0.000755690939883784\\
96	0.000648267541872237\\
112	0.000527010299918649\\
128	0.000417243519762521\\
144	0.000426846215264344\\
160	0.000370936997274353\\
176	0.000420308560985494\\
192	0.000366935513314976\\
208	0.000365308296922071\\
224	0.000379223533651923\\
240	0.000361587190986576\\
256	0.00033907359515971\\
};
\addlegendentry{Thm.~4}

\end{axis}
\end{tikzpicture}
\caption{\small CSI MSE metrics as a function of $M$ with $K=\lfloor M/4 \rfloor -1$, $\alpha_L=3/4$, $\beta_k =1$, and $\snr_k = 15$ dB averaged over 100 realizations of the random pilots and the covariance matrices following the one-ring UCA model with an angular spread of $10^\circ$.}
\label{fig:channel_perf_uca_10}
\end{figure}

%
\clearpage

\begin{figure}
\centering
\subfloat[Minimum pilot length required by the conventional and the covariance-aided CSI acquisition strategies]{{\scriptsize 
%
%
\definecolor{mycolor1}{rgb}{0.36860,0.23530,0.60000}%
\definecolor{mycolor2}{rgb}{0.90200,0.38040,0.00390}%
\begin{tikzpicture}

\begin{axis}[%
width=0.65\textwidth,
height=0.466\textwidth,
at={(0\textwidth,0\textwidth)},
scale only axis,
xmin=16,
xmax=256,
xtick={16,32,48,64,80,96,112,128,144,160,176,192,208,224,240,256},
xlabel style={font=\color{white!15!black}},
xlabel={number of BS antennas},
ymin=1,
ymax=65,
ytick={1,10,20,30,40,50,60,65},
ylabel style={font=\color{white!15!black}},
ylabel={Average Pilot Length},
axis background/.style={fill=white},
xmajorgrids,
ymajorgrids,
legend pos=north west,
legend style={anchor=north west, legend cell align=left, align=left, draw=white!15!black}
]

\addplot [color=white, line width=0.0pt, mark size=1.5pt, mark=*, mark options={solid, fill=mycolor1, mycolor1}]
  table[row sep=crcr]{%
16	2.3425\\
32	4.4096\\
48	6.4363\\
64	8.4282\\
80	10.4417\\
96	12.4462\\
112	14.4247\\
128	16.4364\\
144	18.4184\\
160	20.4574\\
176	22.426\\
192	24.4175\\
208	26.425\\
224	28.4325\\
240	30.4895\\
256	32.396\\
};
\addlegendentry{$\Exp\big\{ L^{\star} \big\}$}

\addplot [color=mycolor2, mark size=1.6pt, mark=pentagon*, mark options={solid, fill=mycolor2, mycolor2}]
  table[row sep=crcr]{%
16	3\\
};
\addlegendentry{$L^{(\mathsf{ii})}$}

\addplot [color=white, mark size=1.3pt, mark=square*, mark options={solid, fill=black, black}, forget plot]
  table[row sep=crcr]{%
16	4\\
32	8\\
48	12\\
64	16\\
80	20\\
96	24\\
112	28\\
128	32\\
144	36\\
160	40\\
176	44\\
192	48\\
208	52\\
224	56\\
240	60\\
256	64\\
};

\addplot [color=black, mark size=1.3pt, mark=square*, mark options={solid, fill=black, black}]
  table[row sep=crcr]{%
16	4\\
};
\addlegendentry{$L^{(\mathsf{i})}$}

\addplot [color=white, mark size=1.6pt, mark=pentagon*, mark options={solid, fill=mycolor2, mycolor2}, forget plot]
  table[row sep=crcr]{%
16	3\\
32	5\\
48	7\\
64	9\\
80	11\\
96	13\\
112	15\\
128	17\\
144	19\\
160	21\\
176	23\\
192	25\\
208	27\\
224	29\\
240	31\\
256	33\\
};

\addplot [color=black, forget plot]
  table[row sep=crcr]{%
16	4\\
32	4\\
};
\addplot [color=black, forget plot]
  table[row sep=crcr]{%
32	8\\
48	8\\
};
\addplot [color=black, forget plot]
  table[row sep=crcr]{%
48	12\\
64	12\\
};
\addplot [color=black, forget plot]
  table[row sep=crcr]{%
64	16\\
80	16\\
};
\addplot [color=black, forget plot]
  table[row sep=crcr]{%
80	20\\
96	20\\
};
\addplot [color=black, forget plot]
  table[row sep=crcr]{%
96	24\\
112	24\\
};
\addplot [color=black, forget plot]
  table[row sep=crcr]{%
112	28\\
128	28\\
};
\addplot [color=black, forget plot]
  table[row sep=crcr]{%
128	32\\
144	32\\
};
\addplot [color=black, forget plot]
  table[row sep=crcr]{%
144	36\\
160	36\\
};
\addplot [color=black, forget plot]
  table[row sep=crcr]{%
160	40\\
176	40\\
};
\addplot [color=black, forget plot]
  table[row sep=crcr]{%
176	44\\
192	44\\
};
\addplot [color=black, forget plot]
  table[row sep=crcr]{%
192	48\\
208	48\\
};
\addplot [color=black, forget plot]
  table[row sep=crcr]{%
208	52\\
224	52\\
};
\addplot [color=black, forget plot]
  table[row sep=crcr]{%
224	56\\
240	56\\
};
\addplot [color=black, forget plot]
  table[row sep=crcr]{%
240	60\\
256	60\\
};
\addplot [color=mycolor2, forget plot]
  table[row sep=crcr]{%
16	3\\
32	3\\
};
\addplot [color=mycolor2, forget plot]
  table[row sep=crcr]{%
32	5\\
48	5\\
};
\addplot [color=mycolor2, forget plot]
  table[row sep=crcr]{%
48	7\\
64	7\\
};
\addplot [color=mycolor2, forget plot]
  table[row sep=crcr]{%
64	9\\
80	9\\
};
\addplot [color=mycolor2, forget plot]
  table[row sep=crcr]{%
80	11\\
96	11\\
};
\addplot [color=mycolor2, forget plot]
  table[row sep=crcr]{%
96	13\\
112	13\\
};
\addplot [color=mycolor2, forget plot]
  table[row sep=crcr]{%
112	15\\
128	15\\
};
\addplot [color=mycolor2, forget plot]
  table[row sep=crcr]{%
128	17\\
144	17\\
};
\addplot [color=mycolor2, forget plot]
  table[row sep=crcr]{%
144	19\\
160	19\\
};
\addplot [color=mycolor2, forget plot]
  table[row sep=crcr]{%
160	21\\
176	21\\
};
\addplot [color=mycolor2, forget plot]
  table[row sep=crcr]{%
176	23\\
192	23\\
};
\addplot [color=mycolor2, forget plot]
  table[row sep=crcr]{%
192	25\\
208	25\\
};
\addplot [color=mycolor2, forget plot]
  table[row sep=crcr]{%
208	27\\
224	27\\
};
\addplot [color=mycolor2, forget plot]
  table[row sep=crcr]{%
224	29\\
240	29\\
};
\addplot [color=mycolor2, forget plot]
  table[row sep=crcr]{%
240	31\\
256	31\\
};

\end{axis}
\end{tikzpicture}
\subfloat[Pilot length reduction with respect to conventional CSI acquisition]{{\scriptsize 
%
%
\definecolor{mycolor1}{rgb}{0.36860,0.23530,0.60000}%
\definecolor{mycolor2}{rgb}{0.90200,0.38040,0.00390}%
\begin{tikzpicture}

\begin{axis}[%
width=0.65\textwidth,
height=0.466\textwidth,
at={(0\textwidth,0\textwidth)},
scale only axis,
xmin=16,
xmax=256,
xtick={16,32,48,64,80,96,112,128,144,160,176,192,208,224,240,256},
xlabel style={font=\color{white!15!black}},
xlabel={number of BS antennas},
ymin=0.9,
ymax=2.1,
ytick={  1, 1.1, 1.2, 1.3, 1.4, 1.5, 1.6, 1.7, 1.8, 1.9,   2},
ylabel style={font=\color{white!15!black}},
ylabel={Average Pilot Length Reduction},
axis background/.style={fill=white},
xmajorgrids,
ymajorgrids,
legend style={at={(0.838,0.13)}, anchor=south west, legend cell align=left, align=left, draw=white!15!black}
]
\addplot [color=mycolor1, mark size=1.5pt, mark=*, mark options={solid, fill=mycolor1, mycolor1}]
  table[row sep=crcr]{%
16	1.70757737459979\\
32	1.81422351233672\\
48	1.86442521324363\\
64	1.89838874255476\\
80	1.91539691812636\\
96	1.92829940062027\\
112	1.94111489320401\\
128	1.94689834757003\\
144	1.95456717195848\\
160	1.95528268499418\\
176	1.96200838312673\\
192	1.96580321490734\\
208	1.9678334910123\\
224	1.96957706849556\\
240	1.96789058528346\\
256	1.97555253735029\\
};
\addlegendentry{$\Delta^{\star}$}

\addplot [color=mycolor2, mark size=1.6pt, mark=pentagon*, mark options={solid, fill=mycolor2, mycolor2}]
  table[row sep=crcr]{%
16	1.33333333333333\\
32	1.6\\
48	1.71428571428571\\
64	1.77777777777778\\
80	1.81818181818182\\
96	1.84615384615385\\
112	1.86666666666667\\
128	1.88235294117647\\
144	1.89473684210526\\
160	1.9047619047619\\
176	1.91304347826087\\
192	1.92\\
208	1.92592592592593\\
224	1.93103448275862\\
240	1.93548387096774\\
256	1.93939393939394\\
};
\addlegendentry{$\Delta^{(\mathsf{ii})}$}

\addplot [color=black, forget plot]
  table[row sep=crcr]{%
16	1\\
24	1\\
32	1\\
40	1\\
48	1\\
56	1\\
64	1\\
72	1\\
80	1\\
88	1\\
96	1\\
104	1\\
112	1\\
120	1\\
128	1\\
136	1\\
144	1\\
152	1\\
160	1\\
168	1\\
176	1\\
184	1\\
192	1\\
200	1\\
208	1\\
216	1\\
224	1\\
232	1\\
240	1\\
248	1\\
256	1\\
};
\addplot [color=black, forget plot]
  table[row sep=crcr]{%
16	2\\
24	2\\
32	2\\
40	2\\
48	2\\
56	2\\
64	2\\
72	2\\
80	2\\
88	2\\
96	2\\
104	2\\
112	2\\
120	2\\
128	2\\
136	2\\
144	2\\
152	2\\
160	2\\
168	2\\
176	2\\
184	2\\
192	2\\
200	2\\
208	2\\
216	2\\
224	2\\
232	2\\
240	2\\
248	2\\
256	2\\
};
\end{axis}

\begin{axis}[%
width=0.707\textwidth,
height=0.53\textwidth,
at={(-0.042\textwidth,-0.053\textwidth)},
scale only axis,
xmin=0,
xmax=1,
ymin=0,
ymax=1,
axis line style={draw=none},
ticks=none,
axis x line*=bottom,
axis y line*=left
]
\node[below right, align=left, draw=none]
at (rel axis cs:0.08,0.95) {covariance-aided CSI acquisition large-system limit, $\bar{\Delta}$};
\node[below right, align=left, draw=none]
at (rel axis cs:0.08,0.218) {conventional CSI acquisition};
\end{axis}
\end{tikzpicture}
\caption{\small Pilot overhead metrics as a function of $M$ with $K=\lfloor M/4 \rfloor -1$, $\beta_k =1$, and $\snr_k = 15$ dB averaged over 100 realizations of the random pilots and the covariance matrices following the maximum entropy model with $r_k = \lfloor M/4 \rfloor$.}
\label{fig:pilot_perf_max_entropy}
\end{figure}
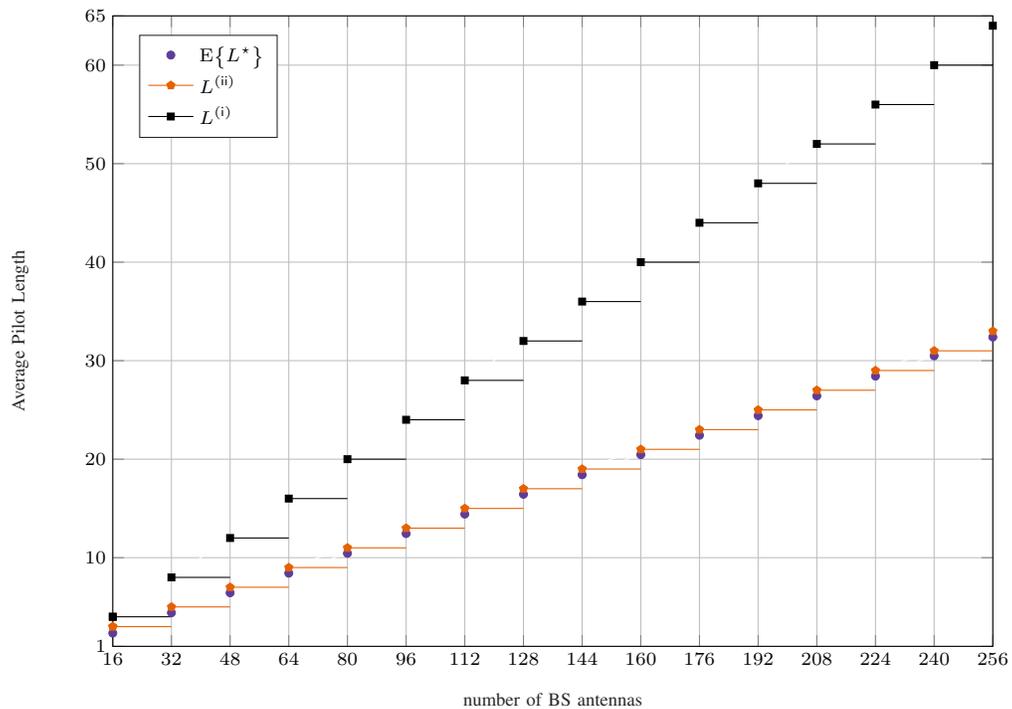
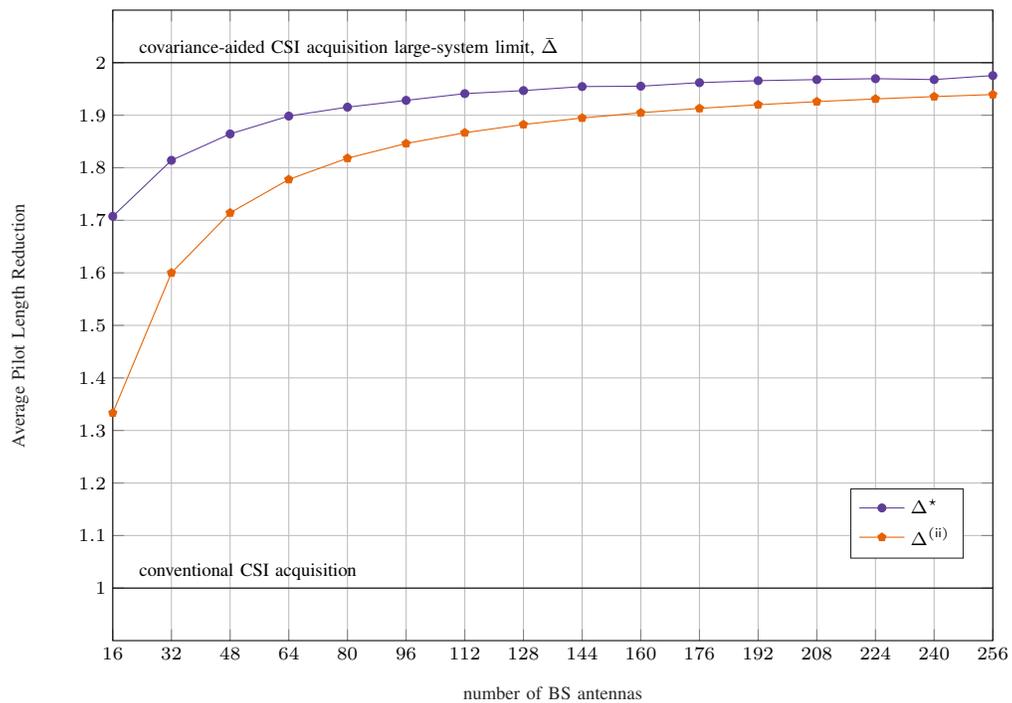

%
\clearpage

\begin{figure}
\centering
\subfloat[Minimum pilot length required by the conventional and the covariance-aided CSI acquisition strategies]{{\scriptsize 
%
%
\definecolor{mycolor1}{rgb}{0.36860,0.23530,0.60000}%
\definecolor{mycolor2}{rgb}{0.90200,0.38040,0.00390}%
\begin{tikzpicture}

\begin{axis}[%
width=0.65\textwidth,
height=0.466\textwidth,
at={(0\textwidth,0\textwidth)},
scale only axis,
xmin=16,
xmax=256,
xtick={16,32,48,64,80,96,112,128,144,160,176,192,208,224,240,256},
xlabel style={font=\color{white!15!black}},
xlabel={number of BS antennas},
ymin=1,
ymax=65,
ytick={1,10,20,30,40,50,60,65},
ylabel style={font=\color{white!15!black}},
ylabel={Average Pilot Length},
axis background/.style={fill=white},
xmajorgrids,
ymajorgrids,
legend pos=north west,
legend style={anchor=north west, legend cell align=left, align=left, draw=white!15!black}
]

\addplot [color=white, line width=0.0pt, mark size=1.5pt, mark=*, mark options={solid, fill=mycolor1, mycolor1}]
  table[row sep=crcr]{%
16	2.7986\\
32	4.6225\\
48	6.2507\\
64	7.4312\\
80	8.86699999999999\\
96	10.1488\\
112	11.4497\\
128	12.5249\\
144	13.7291\\
160	14.8866\\
176	16.2975\\
192	17.197\\
208	18.457\\
224	19.759\\
240	20.8505\\
256	22.0545\\
};
\addlegendentry{$\Exp\big\{ L^{\star} \big\}$}

\addplot [color=mycolor2, mark size=1.6pt, mark=pentagon*, mark options={solid, fill=mycolor2, mycolor2}]
  table[row sep=crcr]{%
16	4\\
};
\addlegendentry{$\Exp\big\{L^{(\mathsf{ii})} \big\}$}

\addplot [color=white, mark size=1.3pt, mark=square*, mark options={solid, fill=black, black}, forget plot]
  table[row sep=crcr]{%
16	4\\
32	8\\
48	12\\
64	16\\
80	20\\
96	24\\
112	28\\
128	32\\
144	36\\
160	40\\
176	44\\
192	48\\
208	52\\
224	56\\
240	60\\
256	64\\
};

\addplot [color=black, mark size=1.3pt, mark=square*, mark options={solid, fill=black, black}]
  table[row sep=crcr]{%
16	4\\
};
\addlegendentry{$L^{(\mathsf{i})}$}

\addplot [color=white, mark size=1.6pt, mark=pentagon*, mark options={solid, fill=mycolor2, mycolor2}, forget plot]
  table[row sep=crcr]{%
16	4\\
32	8\\
48	10.82\\
64	11.84\\
80	12.39\\
96	13.06\\
112	14.24\\
128	15.03\\
144	16.02\\
160	17\\
176	18.01\\
192	18.99\\
208	19.78\\
224	20.4\\
240	21.31\\
256	22.47\\
};

\addplot [color=black, forget plot]
  table[row sep=crcr]{%
16	4\\
32	4\\
};
\addplot [color=black, forget plot]
  table[row sep=crcr]{%
32	8\\
48	8\\
};
\addplot [color=black, forget plot]
  table[row sep=crcr]{%
48	12\\
64	12\\
};
\addplot [color=black, forget plot]
  table[row sep=crcr]{%
64	16\\
80	16\\
};
\addplot [color=black, forget plot]
  table[row sep=crcr]{%
80	20\\
96	20\\
};
\addplot [color=black, forget plot]
  table[row sep=crcr]{%
96	24\\
112	24\\
};
\addplot [color=black, forget plot]
  table[row sep=crcr]{%
112	28\\
128	28\\
};
\addplot [color=black, forget plot]
  table[row sep=crcr]{%
128	32\\
144	32\\
};
\addplot [color=black, forget plot]
  table[row sep=crcr]{%
144	36\\
160	36\\
};
\addplot [color=black, forget plot]
  table[row sep=crcr]{%
160	40\\
176	40\\
};
\addplot [color=black, forget plot]
  table[row sep=crcr]{%
176	44\\
192	44\\
};
\addplot [color=black, forget plot]
  table[row sep=crcr]{%
192	48\\
208	48\\
};
\addplot [color=black, forget plot]
  table[row sep=crcr]{%
208	52\\
224	52\\
};
\addplot [color=black, forget plot]
  table[row sep=crcr]{%
224	56\\
240	56\\
};
\addplot [color=black, forget plot]
  table[row sep=crcr]{%
240	60\\
256	60\\
};
\addplot [color=mycolor2, forget plot]
  table[row sep=crcr]{%
16	4\\
32	4\\
};
\addplot [color=mycolor2, forget plot]
  table[row sep=crcr]{%
32	8\\
48	8\\
};
\addplot [color=mycolor2, forget plot]
  table[row sep=crcr]{%
48	10.82\\
64	10.82\\
};
\addplot [color=mycolor2, forget plot]
  table[row sep=crcr]{%
64	11.84\\
80	11.84\\
};
\addplot [color=mycolor2, forget plot]
  table[row sep=crcr]{%
80	12.39\\
96	12.39\\
};
\addplot [color=mycolor2, forget plot]
  table[row sep=crcr]{%
96	13.06\\
112	13.06\\
};
\addplot [color=mycolor2, forget plot]
  table[row sep=crcr]{%
112	14.24\\
128	14.24\\
};
\addplot [color=mycolor2, forget plot]
  table[row sep=crcr]{%
128	15.03\\
144	15.03\\
};
\addplot [color=mycolor2, forget plot]
  table[row sep=crcr]{%
144	16.02\\
160	16.02\\
};
\addplot [color=mycolor2, forget plot]
  table[row sep=crcr]{%
160	17\\
176	17\\
};
\addplot [color=mycolor2, forget plot]
  table[row sep=crcr]{%
176	18.01\\
192	18.01\\
};
\addplot [color=mycolor2, forget plot]
  table[row sep=crcr]{%
192	18.99\\
208	18.99\\
};
\addplot [color=mycolor2, forget plot]
  table[row sep=crcr]{%
208	19.78\\
224	19.78\\
};
\addplot [color=mycolor2, forget plot]
  table[row sep=crcr]{%
224	20.4\\
240	20.4\\
};
\addplot [color=mycolor2, forget plot]
  table[row sep=crcr]{%
240	21.31\\
256	21.31\\
};

\end{axis}
\end{tikzpicture}
\subfloat[Pilot length reduction with respect to conventional CSI acquisition]{{\scriptsize 
%
%
\definecolor{mycolor1}{rgb}{0.36860,0.23530,0.60000}%
\definecolor{mycolor2}{rgb}{0.90200,0.38040,0.00390}%
\begin{tikzpicture}

\begin{axis}[%
width=0.65\textwidth,
height=0.466\textwidth,
at={(0\textwidth,0\textwidth)},
scale only axis,
xmin=16,
xmax=256,
xtick={16,32,48,64,80,96,112,128,144,160,176,192,208,224,240,256},
xlabel style={font=\color{white!15!black}},
xlabel={number of BS antennas},
ymin=0.9,
ymax=3,
ytick={1,1.1,1.2,1.3,1.4,1.5,1.6,1.7,1.8,1.9,2,2.1,2.2,2.3,2.4,2.5,2.6,2.7,2.8,2.9,3},
yticklabels={{1},{},{1.2},{},{1.4},{},{1.6},{},{1.8},{},{2},{},{2.2},{},{2.4},{},{2.6},{},{2.8},{},{3}},
ylabel style={font=\color{white!15!black}},
ylabel={Average Pilot Length Reduction},
axis background/.style={fill=white},
xmajorgrids,
ymajorgrids,
legend style={at={(0.838,0.13)}, anchor=south west, legend cell align=left, align=left, draw=white!15!black}
]
\addplot [color=mycolor1, mark size=1.5pt, mark=*, mark options={solid, fill=mycolor1, mycolor1}]
  table[row sep=crcr]{%
16	1.42928607160723\\
32	1.73066522444565\\
48	1.91978498408178\\
64	2.15308429325008\\
80	2.25555430246983\\
96	2.36481160334227\\
112	2.44547892084509\\
128	2.55491061804885\\
144	2.62216751280127\\
160	2.68698023726035\\
176	2.69980058291149\\
192	2.79118450892598\\
208	2.81735926748659\\
224	2.83415152588694\\
240	2.87762883384092\\
256	2.90190210614614\\
};
\addlegendentry{$\Delta^{\star}$}

\addplot [color=mycolor2, mark size=1.5pt, mark=*, mark options={solid, fill=mycolor2, mycolor2}]
  table[row sep=crcr]{%
16	1\\
32	1\\
48	1.1090573012939\\
64	1.35135135135135\\
80	1.61420500403551\\
96	1.83767228177642\\
112	1.96629213483146\\
128	2.1290751829674\\
144	2.24719101123596\\
160	2.35294117647059\\
176	2.44308717379234\\
192	2.52764612954186\\
208	2.62891809908999\\
224	2.74509803921569\\
240	2.81557954012201\\
256	2.84824210057855\\
};
\addlegendentry{$\Delta^{(\mathsf{ii})}$}

\addplot [color=black, forget plot]
  table[row sep=crcr]{%
32	1\\
48	1\\
64	1\\
80	1\\
96	1\\
112	1\\
128	1\\
144	1\\
160	1\\
176	1\\
192	1\\
208	1\\
224	1\\
240	1\\
256	1\\
};
\end{axis}

\begin{axis}[%
width=0.707\textwidth,
height=0.53\textwidth,
at={(-0.042\textwidth,-0.053\textwidth)},
scale only axis,
xmin=0,
xmax=1,
ymin=0,
ymax=1,
axis line style={draw=none},
ticks=none,
axis x line*=bottom,
axis y line*=left
]
\node[below right, align=left, draw=none]
at (rel axis cs:0.14,0.18) {conventional CSI acquisition};
\end{axis}
\end{tikzpicture}
\caption{\small Pilot overhead metrics as a function of $M$ with $K=\lfloor M/4 \rfloor -1$, $\beta_k =1$, and $\snr_k = 15$ dB averaged over 100 realizations of the random pilots and the covariance matrices following the one-ring UCA model with an angular spread of $10^\circ$.}
\label{fig:pilot_perf_uca_10}
\end{figure}

%
\clearpage

\begin{appendices}

\section{Preliminary Results} \label{app:pre_results}

\begin{lemma}[Woodbury identity] \label{lem:inv}
Let $\A, \U, \C, \V$ be respectively $N \times N$, $N\times M$, $M\times M$ and $M\times N$ complex matrices such that $\A, \C, \A+\U\C\V$ are invertible. Then
\begin{equation}
(\A+\U\C\V)^{-1} = \A^{-1} - \A^{-1}\U(\C^{-1}+\V\A^{-1}\U)^{-1}\V\A^{-1}.
\end{equation}
Consider also $\x \in \Compl^{N}$, $c\in\Compl$ for which $\A+c\x\x^{\dagger}$ is invertible. Then,
\begin{align}
\big( \A+c\x\x^{\dagger}\big)^{-1} &= \A^{-1} - \frac{c\A^{-1}\x\x^{\dagger}\A^{-1}}{1+c \x^{\dagger}\A^{-1}\x }\\
\intertext{and}
\x^{\dagger}\big( \A+c\x\x^{\dagger}\big)^{-1} &= \frac{\x^{\dagger}\A^{-1}}{1+c \x^{\dagger}\A^{-1}\x }.
\end{align}
\end{lemma}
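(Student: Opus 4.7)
\vspace*{-8pt}
\noindent \emph{Proof sketch.} The plan is to verify the three identities by direct algebraic manipulation, which is the standard route for such matrix-inversion lemmas.

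For the general identity, I would compute the product $(\A+\U\C\V)\big(\A^{-1} - \A^{-1}\U(\C^{-1}+\V\A^{-1}\U)^{-1}\V\A^{-1}\big)$ and show it equals $\I_N$. Expanding, this gives $\I_N + \U\C\V\A^{-1} - \U(\C^{-1}+\V\A^{-1}\U)^{-1}\V\A^{-1} - \U\C\V\A^{-1}\U(\C^{-1}+\V\A^{-1}\U)^{-1}\V\A^{-1}$. Factoring $\U$ on the left and $\V\A^{-1}$ on the right, the middle terms become $\U\bigl[\C - (\C^{-1}+\V\A^{-1}\U)^{-1} - \C\V\A^{-1}\U(\C^{-1}+\V\A^{-1}\U)^{-1}\bigr]\V\A^{-1}$. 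Writing $\C = \C(\C^{-1}+\V\A^{-1}\U)(\C^{-1}+\V\A^{-1}\U)^{-1} = (\I + \C\V\A^{-1}\U)(\C^{-1}+\V\A^{-1}\U)^{-1}$, the bracket collapses to zero, which proves the identity. Invertibility of the pieces $\A$, $\C$, and $\A+\U\C\V$ guarantees all the inverses above are well defined.

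For the rank-1 specialization, I would simply instantiate the general identity with $\U \leftarrow \x \in \Compl^N$, $\C \leftarrow c \in \Compl$ (viewed as a $1\times 1$ matrix), and $\V \leftarrow \x^{\dagger}$, noting that $\C^{-1}+\V\A^{-1}\U = c^{-1} + \x^{\dagger}\A^{-1}\x$ is a scalar whose inverse is $1/(c^{-1}+\x^{\dagger}\A^{-1}\x) = c/(1+c\x^{\dagger}\A^{-1}\x)$. Substituting yields the stated formula.

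For the final displayed identity, I would left-multiply the rank-1 expression by $\x^{\dagger}$ to obtain $\x^{\dagger}(\A+c\x\x^{\dagger})^{-1} = \x^{\dagger}\A^{-1} - \frac{c\,(\x^{\dagger}\A^{-1}\x)\,\x^{\dagger}\A^{-1}}{1+c\x^{\dagger}\A^{-1}\x}$, and then combine the two terms over the common denominator $1+c\x^{\dagger}\A^{-1}\x$; the numerator simplifies to $\x^{\dagger}\A^{-1}$, giving the claim. Since the result is a classical identity with a purely algebraic proof, there is no real obstacle — the only care needed is to ensure the scalar denominator $1+c\x^{\dagger}\A^{-1}\x$ is nonzero, which is guaranteed by the assumed invertibility of $\A+c\x\x^{\dagger}$.
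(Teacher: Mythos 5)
Your verification is correct: the direct expansion of $(\A+\U\C\V)\bigl(\A^{-1} - \A^{-1}\U(\C^{-1}+\V\A^{-1}\U)^{-1}\V\A^{-1}\bigr)$, the rank-one specialization with $\U\leftarrow\x$, $\C\leftarrow c$, $\V\leftarrow\x^{\dagger}$, and the final left-multiplication by $\x^{\dagger}$ all go through exactly as you describe, and the nonvanishing of $1+c\,\x^{\dagger}\A^{-1}\x$ indeed follows from $\det(\A+c\x\x^{\dagger})=\det(\A)(1+c\,\x^{\dagger}\A^{-1}\x)\neq 0$. The paper states this lemma as a classical preliminary without any proof, so there is no alternative argument to compare against; your standard algebraic derivation is complete and correct.
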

\smallskip
\begin{lemma}[{Resolvent identity}]\label{lem:resolvent}
Let $\A$ and $\B$ be two invertible complex matrices of size $N \times N$. Then,
\begin{align}
\A^{-1}-\B^{-1}=-\A^{-1}(\A-\B)\B^{-1}.
\end{align}
\end{lemma}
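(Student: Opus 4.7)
The plan is a one-line algebraic verification exploiting the invertibility of both matrices. Starting from the left-hand side, I would insert identity matrices in two strategic places: write $\A^{-1} = \A^{-1}\B\B^{-1}$ and $\B^{-1} = \A^{-1}\A\B^{-1}$, both of which are legitimate because $\A$ and $\B$ are assumed invertible. Subtracting the two expressions factors cleanly as $\A^{-1} - \B^{-1} = \A^{-1}(\B - \A)\B^{-1}$, which is precisely $-\A^{-1}(\A - \B)\B^{-1}$ after pulling out the minus sign. An equivalent route would be to start from the right-hand side and distribute: $-\A^{-1}(\A - \B)\B^{-1} = -\A^{-1}\A\B^{-1} + \A^{-1}\B\B^{-1} = -\B^{-1} + \A^{-1}$, which returns the left-hand side.

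There is no genuine obstacle: the identity holds in any (non-commutative) unital ring whenever the two elements in question are invertible, and its only role here is as a standard perturbation tool used later in the paper to bound differences of resolvents $\A^{-1} - \B^{-1}$ in terms of the perturbation $\A - \B$ together with operator-norm bounds on the individual resolvents. The write-up in the appendix should therefore be a single displayed equation, with at most a brief sentence observing that invertibility is the only hypothesis used.
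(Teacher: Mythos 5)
Your verification is correct and is the standard one-line expansion; the paper itself states this lemma without proof in Appendix~A, treating it as the well-known resolvent identity, so your argument matches the (implicit) intended justification exactly.
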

\smallskip
\begin{lemma} [{\hspace{-.1mm}\cite{Bai98}}] \label{lem:trace}
Let $\A_1,\A_2,\ldots,$ with $\A_N \in \Compl^{N \times N}$ be a series of random matrices with uniformly bounded spectral norm on $N$. Let $\x_1,\x_2,\ldots,$ with $\x_N \in \Compl^N$, be random vectors of i.i.d. entries with zero mean, unit variance, and finite eighth order moment, independent of $A_N$. Then,
\begin{align}
\frac{1}{N}\x_N^{\dagger} \A_N \x_N - \frac{1}{N} \Tr (\A_N) \xrightarrow[N\rightarrow \infty]{\mathrm{a.s.}} 0.
\end{align}
\end{lemma}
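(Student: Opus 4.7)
The natural approach is the standard second/fourth-moment-plus-Borel--Cantelli argument that underlies the trace lemma of \cite{Bai98}. First, I would write $Y_N \triangleq \x_N^{\dagger}\A_N\x_N - \Tr(\A_N)$ and condition on $\A_N$; by independence of $\x_N$ and $\A_N$, together with $\Exp[\bar x_i x_j] = \delta_{ij}$, the conditional expectation of $\x_N^{\dagger}\A_N\x_N$ equals $\Tr(\A_N)$, so $\Exp[Y_N \mid \A_N] = 0$. Splitting $Y_N = \sum_i ( |x_i|^2 - 1 ) [\A_N]_{ii} + \sum_{i \neq j} \bar{x}_i x_j [\A_N]_{ij}$ already makes this obvious and separates the diagonal and off-diagonal contributions that will drive the moment bound.

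Next, the main technical step is a conditional fourth-moment estimate of the form
\begin{equation}
\Exp\big[ |Y_N|^4 \,\big|\, \A_N \big] \;\leq\; C \,\Tr(\A_N \A_N^{\dagger})^{2} \;\leq\; C\, N^{2}\, \|\A_N\|^{4},
\end{equation}
valid for some constant $C$ that depends only on the first eight moments of the entries of $\x_N$. This is a standard (if somewhat tedious) index-counting calculation: expanding $|Y_N|^{4}$ produces a sum of monomials $\prod x_{i_k}^{\pm}$ whose expectations vanish unless the indices pair up, and each surviving pairing contributes at most a constant times a product of entries of $\A_N$ that is dominated by $\Tr(\A_N\A_N^\dagger)^2$. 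The finiteness of the eighth moment is exactly what is needed to control the worst monomials, while unit variance and zero mean ensure the Gaussian-like pairing structure that gives the $N^{2}$ bound.

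Once this moment estimate is in hand, the uniform spectral-norm bound $\|\A_N\| \leq C'$ yields
\begin{equation}
\Exp\!\left[\, \left| \tfrac{1}{N} Y_N \right|^{4} \right] \;\leq\; \frac{C\, C'^{4}}{N^{2}},
\end{equation}
so by Markov's inequality $\Pr\!\big( |Y_N/N| > \varepsilon \big) \leq C'' \varepsilon^{-4} N^{-2}$, which is summable in $N$. The Borel--Cantelli lemma then delivers $\tfrac{1}{N} Y_N \to 0$ almost surely, which is the claimed convergence.

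\textbf{Main obstacle.} The only non-routine piece is the fourth-moment bound: bookkeeping the index pairings without losing a factor of $N$ must be done carefully, since a naive Cauchy--Schwarz would give only a $o(1)$ bound rather than the summable $O(N^{-2})$ needed for almost-sure convergence. The eighth-moment hypothesis is used precisely to absorb the four-fold-coincidence terms of the type $\Exp[|x_i|^{8}]$ that appear in the diagonal part of the expansion; this is where any weakening of the moment assumption would break the argument.
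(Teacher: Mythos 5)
Your proposal is correct and follows the standard route: the paper does not reprove this lemma (it is imported from \cite{Bai98}), but the conditional fourth-moment estimate you identify as the key step is exactly the bound the paper quotes as Lemma~\ref{lem:proof_trace} (i.e.\ \cite[Lem.~2.7]{Bai98}) specialized to $q=4$, after which Markov's inequality and Borel--Cantelli give the almost-sure convergence just as you describe. Your remark on where the eighth-moment hypothesis enters (controlling the $\nu_{2q}$ term with $q=4$) is also consistent with the form of that bound.
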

\smallskip
\begin{lemma}[{\hspace{-.1mm}\cite[Lem.~2.7]{Bai98}}] \label{lem:proof_trace}
Under the conditions of Lemma  $\ref{lem:trace}$, if for all $N$ and $k$, $\Exp\big\{|x_{M,k}|^m\big\}\leq \nu_m$, then for all $q\geq 1$,
\begin{align}
\Exp \Big\{ \big| \frac{1}{N}\x_N^{\dagger} \A_N \x_N - \frac{1}{N} \Tr (\A_N)\Big|^q\Big\} & \leq 
\frac{C_q}{N^q} \big(\nu_4^{q/2} + \nu_{2q}\big)\Tr\big(\A_N \A_N^{\dagger}\big)^{q/2} 
\end{align}
for some constant $C_q$ depending only on $q$.
\end{lemma}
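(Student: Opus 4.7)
The plan is to follow the classical martingale-difference approach of \cite{Bai98}. Write $Z=\x_N^\dagger\A_N\x_N-\Tr(\A_N)$ and decompose $Z=Z_{\mathrm{d}}+Z_{\mathrm{o}}$ into a diagonal part $Z_{\mathrm{d}}=\sum_i A_{ii}(|x_{N,i}|^2-1)$ and an off-diagonal part $Z_{\mathrm{o}}=\sum_{i\neq j}A_{ij}\bar x_{N,i}x_{N,j}$. Using the convexity inequality $|Z|^q\leq 2^{q-1}(|Z_{\mathrm{d}}|^q+|Z_{\mathrm{o}}|^q)$, it suffices to bound each piece separately; dividing by $N^q$ at the end recovers the estimate claimed for $Y=Z/N$.

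For $q\in[1,2]$ the bound follows directly from Jensen's inequality, $\Exp|Z|^q\leq(\Exp|Z|^2)^{q/2}$, together with the one-line computation $\Exp|Z|^2=(\nu_4-1)\sum_i|A_{ii}|^2+\sum_{i\neq j}|A_{ij}|^2\leq \nu_4\Tr(\A_N\A_N^\dagger)$; the $\nu_{2q}$ term in the claim is automatically absorbed since $\nu_{2q}\leq\nu_4^{q/2}$ by Jensen applied to $\Exp|x_{N,k}|^{2q}$.

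For $q>2$, I would handle $Z_{\mathrm{d}}$ by Rosenthal's inequality for sums of independent centered random variables, obtaining $\Exp|Z_{\mathrm{d}}|^q\leq C_q\bigl(\nu_4^{q/2}(\sum_i|A_{ii}|^2)^{q/2}+\nu_{2q}\sum_i|A_{ii}|^q\bigr)$; both terms are dominated by $\Tr(\A_N\A_N^\dagger)^{q/2}$ via $\sum_i|A_{ii}|^2\leq\Tr(\A_N\A_N^\dagger)$ and $\sum_i|A_{ii}|^q\leq(\sum_i|A_{ii}|^2)^{q/2}$. For $Z_{\mathrm{o}}$, relative to the filtration $\mathcal{F}_k=\sigma(x_{N,1},\dots,x_{N,k})$, I construct the martingale differences
\begin{equation*}
d_k=\bar x_{N,k}\alpha_k+x_{N,k}\beta_k,\qquad \alpha_k=\sum_{j<k}A_{kj}x_{N,j},\quad \beta_k=\sum_{i<k}\bar A_{ik}\bar x_{N,i},
\end{equation*}
so that $Z_{\mathrm{o}}=\sum_k d_k$ with $\alpha_k,\beta_k$ both $\mathcal{F}_{k-1}$-measurable, and apply Burkholder's inequality:
\begin{equation*}
\Exp|Z_{\mathrm{o}}|^q\leq C_q\,\Exp\Big[\Big(\sum_k\Exp[|d_k|^2\mid\mathcal{F}_{k-1}]\Big)^{q/2}\Big]+C_q\sum_k\Exp|d_k|^q.
\end{equation*}
The conditional second moment is bounded by a constant multiple of $|\alpha_k|^2+|\beta_k|^2$ (using $\Exp|x_{N,k}|^2=1$ and $|\Exp x_{N,k}^2|\leq 1$), and the sum of these quantities has expectation equal to $\Tr(\A_N\A_N^\dagger)$, which yields the $\nu_4^{q/2}\,\Tr(\A_N\A_N^\dagger)^{q/2}$ contribution after an application of Jensen. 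The residual $\sum_k\Exp|d_k|^q$ is handled by a further Rosenthal step on the linear forms $\alpha_k,\beta_k$, producing the $\nu_{2q}$ factor through $\Exp|x_{N,k}|^{2q}\leq\nu_{2q}$.

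The main obstacle is closing the argument for the off-diagonal contribution so that the accumulated constant depends only on $q$ and not on $N$: the inner Rosenthal step produces further weighted sums of $|A_{kj}|^q$ that must be re-aggregated into $\Tr(\A_N\A_N^\dagger)^{q/2}$ via Hölder and a careful induction on $q$. This delicate book-keeping is exactly the content of \cite[Lem.~2.7]{Bai98}, which I would invoke as a black box rather than re-derive in full here.
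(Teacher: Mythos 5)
The paper does not actually prove Lemma~\ref{lem:proof_trace}: it is imported verbatim as \cite[Lem.~2.7]{Bai98} and used as a black box, so there is no internal proof to compare against. Your sketch reproduces the standard argument behind that reference --- splitting $\x_N^\dagger\A_N\x_N-\Tr(\A_N)$ into a diagonal part handled by Rosenthal's inequality and an off-diagonal part written as a sum of martingale differences $d_k=\bar x_{N,k}\alpha_k+x_{N,k}\beta_k$ handled by the Burkholder--Rosenthal inequality --- and the outline is sound: the identity $\sum_k(|\alpha_k|^2+|\beta_k|^2)$ having expectation controlled by $\Tr(\A_N\A_N^\dagger)$, the Lyapunov bound $\nu_{2q}\le\nu_4^{q/2}$ for $q\le 2$, and the domination $\sum_i|A_{ii}|^q\le(\sum_i|A_{ii}|^2)^{q/2}$ are all correct.

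The one genuine defect is the circularity you concede in your final paragraph: you close the off-diagonal estimate by ``invoking \cite[Lem.~2.7]{Bai98} as a black box,'' i.e.\ by citing the statement being proved. As a self-contained proof this does not stand; the residual term $\sum_k\Exp|d_k|^q$ requires an inner Rosenthal/H\"older step on the linear forms $\alpha_k,\beta_k$ followed by re-aggregation into $\Tr(\A_N\A_N^\dagger)^{q/2}$, and that book-keeping (an induction reducing the exponent, with constants depending only on $q$) must actually be carried out. Since the paper itself treats the lemma as an external citation, the honest resolution is either to do the same --- state it with the citation and no proof, as the authors do --- or to complete the final aggregation step explicitly; the halfway position of sketching most of the argument and then citing the target lemma for the hardest part is the only thing I would ask you to fix.
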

\smallskip
\begin{lemma}[{\hspace{-.1mm}\cite[Thm.~3.7]{debbahcouillet}}] \label{lem:trace2}
Let $\A_1,\A_2,\ldots,$ with $\A_N \in \Compl^{N \times N}$ be a series of random matrices with uniformly bounded spectral norm on $N$. Let $\x_1,\x_2,\ldots,$ and $\y_1,\y_2,\ldots,$  , $\x_N \in \Compl^N$ and $\y_N \in \Compl^N$, two series of random vectors with i.i.d.~entries such that
 have  zero mean, unit variance, and finite fourth order moment, independent of $\A_N$. Then,
\begin{align}
\frac{1}{N}\x_N^{\dagger} \A_N \y_N   \xrightarrow[N\rightarrow \infty]{\mathrm{a.s.}} 0.
\end{align}
\end{lemma}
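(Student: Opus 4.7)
The plan is to establish almost-sure convergence via a fourth-moment bound combined with Markov's inequality and the first Borel--Cantelli lemma, exploiting the full independence between $\x_N$, $\y_N$, and $\A_N$, together with the zero-mean assumption that ensures the object is already centered. Indeed, $\Exp[\x_N^{\dagger}\A_N\y_N]=\Exp[\x_N]^{\dagger}\A_N\Exp[\y_N]=0$ by independence and zero mean, so no deterministic correction is needed and the task reduces to controlling fluctuations around zero.

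The key estimate to prove is $\Exp\big[|\x_N^{\dagger}\A_N\y_N|^4\big] = O(N^2)$, which I would establish by a two-stage conditioning argument. Conditioning first on $\A_N$ and $\y_N$ and setting $\v = \A_N\y_N$, the quantity $Z = \x_N^{\dagger}\v$ is a linear form in the i.i.d.\ entries of $\x_N$. Expanding $|Z|^4$ and keeping only the index pairings that survive the zero-mean assumption on $\x_N$, one obtains $\Exp\big[|Z|^4 \,\big|\, \A_N,\y_N\big] \leq C\|\v\|^4 = C\big(\y_N^{\dagger}\A_N^{\dagger}\A_N\y_N\big)^2$, where the constant $C$ depends only on the fourth moment of the entries of $\x_N$. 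Taking the expectation over $\y_N$ and invoking Lemma \ref{lem:proof_trace} with $q=2$ applied to the Hermitian matrix $\B = \A_N^{\dagger}\A_N$ gives
\begin{equation}
\Exp\big[(\y_N^{\dagger}\B\y_N)^2 \,\big|\, \A_N\big] \leq 2\big(\Tr(\B)\big)^2 + C'\,\Tr(\B^2).
\end{equation}
Because $\|\A_N\|$ is uniformly bounded, both $\Tr(\B)\leq N\|\A_N\|^2$ and $\Tr(\B^2)\leq N\|\A_N\|^4$ are of order $N$, so the right-hand side is $O(N^2)$, and composing the two inequalities yields the desired $\Exp\big[|\x_N^{\dagger}\A_N\y_N|^4\big] = O(N^2)$.

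Dividing through by $N^4$ produces $\Exp\big[\big|\tfrac{1}{N}\x_N^{\dagger}\A_N\y_N\big|^4\big] = O(N^{-2})$, which is summable in $N$. Markov's inequality then shows that $\sum_N \Pr\big(\big|\tfrac{1}{N}\x_N^{\dagger}\A_N\y_N\big|>\epsilon\big) < \infty$ for every $\epsilon>0$, and the first Borel--Cantelli lemma delivers the claimed almost-sure convergence $\tfrac{1}{N}\x_N^{\dagger}\A_N\y_N\xrightarrow[N\to\infty]{\mathrm{a.s.}}0$.

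The main technical obstacle is the first-stage fourth-moment computation without a circular-symmetry assumption on the entries of $\x_N$: non-circular entries produce pseudo-covariance terms of the form $|\Exp[x^2]|^2\sum_{i\neq j}v_i^2\bar v_j^2$ in the expansion, which must be controlled by $|\v^{T}\v|^2 \leq \|\v\|^4$ (Cauchy--Schwarz) and $\sum_i|v_i|^4\leq \|\v\|^4$. These crude bounds suffice since only the correct order in $N$ is needed, not sharp constants. A cleaner alternative would be to invoke a Hanson--Wright-type bilinear concentration inequality directly, bypassing the combinatorial bookkeeping altogether.
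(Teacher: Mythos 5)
Your proof is correct. Note that the paper does not actually prove this lemma: it is imported verbatim from \cite[Thm.~3.7]{debbahcouillet}, so there is no in-paper argument to compare against. Your route --- conditioning on $\A_N$ and $\y_N$ to reduce to a centered linear form, bounding $\Exp\big[|\x_N^{\dagger}\v|^4\,\big|\,\v\big]\leq C\|\v\|^4$ (including the pseudo-covariance terms for non-circular entries), then controlling $\Exp\big[(\y_N^{\dagger}\A_N^{\dagger}\A_N\y_N)^2\big]=O(N^2)$ via the uniform spectral-norm bound, and closing with Markov and Borel--Cantelli --- is precisely the standard argument behind such trace lemmas, and is the same moment-plus-summability machinery the paper itself deploys through Lemma~\ref{lem:proof_trace} for the quadratic-form version; the only point worth making explicit is that the constant in your $O(N^{-2})$ bound is uniform in $N$ because the spectral-norm bound on $\A_N$ and the fourth moments are fixed, which you have implicitly ensured.
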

\smallskip
\begin{lemma} \label{lem:tr_AB}
Let $\A,\B$ be two $N\times N$ Hermitian matrices such that $\A$ is positive semidefinite. Then,
\begin{align}
|\Tr(\A\B)|\leq \|\B\|\Tr(\A).
\end{align}
\end{lemma}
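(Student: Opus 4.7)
The plan is to use the positive semidefiniteness of $\A$ to sandwich $\A\B$ between two positive multiples of $\A$, and then take trace. Specifically, since $\B$ is Hermitian, its spectral decomposition yields the operator inequalities $-\|\B\|\I_N \preceq \B \preceq \|\B\|\I_N$. Because $\A$ admits a Hermitian square root $\A^{1/2}$, conjugating by $\A^{1/2}$ preserves the ordering and gives
\begin{equation}
-\|\B\|\,\A \;\preceq\; \A^{1/2}\B\A^{1/2} \;\preceq\; \|\B\|\,\A.
\end{equation}

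Taking the trace (which is monotone on the positive semidefinite cone) of this chain, and using the cyclic property $\Tr(\A^{1/2}\B\A^{1/2}) = \Tr(\A\B)$, one obtains
\begin{equation}
-\|\B\|\,\Tr(\A) \;\leq\; \Tr(\A\B) \;\leq\; \|\B\|\,\Tr(\A),
\end{equation}
which is exactly the claim $|\Tr(\A\B)|\leq \|\B\|\Tr(\A)$. Note that $\Tr(\A\B)$ is real here because $\A^{1/2}\B\A^{1/2}$ is Hermitian, which is what justifies writing an absolute value rather than a modulus argument.

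As an alternative (and equivalent) route that might be preferred for its transparency, one can diagonalize $\A = \sum_{i=1}^N \lambda_i \u_i \u_i^\dagger$ with $\lambda_i \geq 0$, expand
\begin{equation}
\Tr(\A\B) = \sum_{i=1}^N \lambda_i \, \u_i^\dagger \B \u_i,
\end{equation}
and apply the triangle inequality together with the bound $|\u_i^\dagger \B \u_i| \leq \|\B\|$ (since $\|\u_i\|=1$) to conclude $|\Tr(\A\B)| \leq \|\B\|\sum_i \lambda_i = \|\B\|\Tr(\A)$. There is no real obstacle here: both approaches are a few lines and the result is standard; the only mild subtlety is ensuring the reader sees that positive semidefiniteness of $\A$ is exactly what allows the ordering to be pushed through conjugation (or, in the second route, ensures that the weights $\lambda_i$ are nonnegative so the triangle inequality is tight up to $\|\B\|$).
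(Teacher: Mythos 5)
Your proof is correct, and your second (eigendecomposition) route is essentially the paper's own argument with the roles of the two matrices swapped: the paper diagonalizes $\B$ and uses positive semidefiniteness of $\A$ to get $\v_i^\dagger\A\v_i\geq 0$, whereas you diagonalize $\A$ and bound $|\u_i^\dagger\B\u_i|\leq\|\B\|$; both hinge on exactly the same two facts. Your first route via $-\|\B\|\I_N\preceq\B\preceq\|\B\|\I_N$, conjugation by $\A^{1/2}$, and trace monotonicity is an equally valid minor variant, so there is nothing to fix.
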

\begin{proof}
Let  $\lambda_1,\dots,\lambda_N$  denote the eigenvalues of $\B$ and $\v_1,\dots,\v_N$ the corresponding eigenvectors. Then,
\begin{align}
|\Tr(\A\B)|
& = \Big|\Tr\Big(\sum_{i=1}^N \lambda_i\v_i\v_i^\dagger\A\Big)\Big| = \Big|\sum_{i=1}^N \lambda_i\v_i^\dagger\A\v_i\Big| \leq \sum_{i=1}^N |\lambda_i||\v_i^\dagger\A\v_i|
 \leq \|\B\| \sum_{i=1}^N \v_i^\dagger\A\v_i = \|\B\| \Tr(\A)
\end{align}
where the last inequality holds since $\A$ is positive semidefinite.
\end{proof}
\smallskip
\begin{lemma} \label{lem:sum_ABC}
Let $\A_k,\B_k,\C_k$ be three series of matrices of size $N\times M$, $N\times N$ and $N\times M$, respectively. Then, for $N,M\geq 1$
\begin{align}
\Big\|\sum_{k=1}^K \A_k^\dagger\B_k\C_k\Big\|\leq \max_{1\leq k\leq K}\|\B_k\| \Big\|\sum_{k=1}^K \A_k^\dagger\A_k\Big\|^{1/2} \Big\|\sum_{k=1}^K \C_k^\dagger\C_k\Big\|^{1/2}
\end{align}
and
\begin{align}
\Big\|\sum_{k=1}^K \A_k^\dagger\C_k\Big\|_F^2 \leq \sum_{k=1}^K\|\A_k\|_F^2 \Big\|\sum_{k=1}^K \C_k^\dagger\C_k\Big\|.
\end{align}
\end{lemma}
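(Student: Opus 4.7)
The plan is to prove the two bounds by reducing each to a standard matrix-norm inequality, treating the sums as matrix products of block-stacked operators.

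For the first inequality, I would use the variational definition of the operator norm: for any unit vectors $\x,\y\in\Compl^{M}$, I bound
\[
\Big|\x^{\dagger}\Big(\sum_{k=1}^K\A_k^{\dagger}\B_k\C_k\Big)\y\Big|
=\Big|\sum_{k=1}^K(\A_k\x)^{\dagger}\B_k(\C_k\y)\Big|
\leq\sum_{k=1}^K\|\A_k\x\|\,\|\B_k\|\,\|\C_k\y\|.
\]
After pulling $\max_k\|\B_k\|$ out of the sum and applying Cauchy--Schwarz in the index $k$, the right-hand side is dominated by $\max_k\|\B_k\|\sqrt{\sum_k\|\A_k\x\|^{2}}\sqrt{\sum_k\|\C_k\y\|^{2}}$. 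The identities $\sum_k\|\A_k\x\|^{2}=\x^{\dagger}\bigl(\sum_k\A_k^{\dagger}\A_k\bigr)\x\leq\bigl\|\sum_k\A_k^{\dagger}\A_k\bigr\|$ for $\|\x\|=1$ (and similarly for $\C_k$) finish the first claim upon taking the supremum over unit $\x,\y$.

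For the second inequality, I would stack the matrices vertically: set
\[
\mathbf{A}=\begin{pmatrix}\A_1\\\vdots\\\A_K\end{pmatrix}\in\Compl^{KN\times M},\qquad
\mathbf{C}=\begin{pmatrix}\C_1\\\vdots\\\C_K\end{pmatrix}\in\Compl^{KN\times M},
\]
so that $\sum_{k=1}^{K}\A_k^{\dagger}\C_k=\mathbf{A}^{\dagger}\mathbf{C}$. Invoking the standard submultiplicativity relation $\|XY\|_F\leq\|X\|\,\|Y\|_F$ (which follows from $\|XY\|_F^{2}=\Tr(Y^{\dagger}X^{\dagger}XY)\leq\|X^{\dagger}X\|\Tr(Y^{\dagger}Y)$ and is essentially Lemma~\ref{lem:tr_AB}), I obtain
\[
\|\mathbf{A}^{\dagger}\mathbf{C}\|_F=\|\mathbf{C}^{\dagger}\mathbf{A}\|_F\leq\|\mathbf{C}^{\dagger}\|\,\|\mathbf{A}\|_F=\|\mathbf{C}\|\,\|\mathbf{A}\|_F.
\]
Finally, $\|\mathbf{A}\|_F^{2}=\sum_k\|\A_k\|_F^{2}$ by definition of the Frobenius norm on a block-stacked matrix, while $\|\mathbf{C}\|^{2}=\|\mathbf{C}^{\dagger}\mathbf{C}\|=\bigl\|\sum_k\C_k^{\dagger}\C_k\bigr\|$ because $\mathbf{C}^{\dagger}\mathbf{C}=\sum_k\C_k^{\dagger}\C_k$. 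Combining gives the desired bound.

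There is no real obstacle: both parts are elementary manipulations. The only conceptual step is recognising that the two sums should be viewed as matrix products of block-stacked matrices, so that classical matrix-norm inequalities apply directly; once this framing is adopted, the Cauchy--Schwarz argument for the first bound and the $\|XY\|_F\leq\|X\|\,\|Y\|_F$ inequality for the second bound close out the proof in a few lines each.
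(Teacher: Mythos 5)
Your proof is correct and follows essentially the same route as the paper: the paper also proves the first bound by bounding $|\u^\dagger(\sum_k\A_k^\dagger\B_k\C_k)\v|$ via Cauchy--Schwarz on the block-stacked matrices $\A$, $\C$ and the block-diagonal $\D_{\B}$ (your termwise Cauchy--Schwarz in $k$ is the same computation written out), and proves the second bound by stacking and applying the trace inequality $\Tr(\A^\dagger\C\C^\dagger\A)\leq\Tr(\A\A^\dagger)\|\C\C^\dagger\|$, which is exactly your $\|XY\|_F\leq\|X\|\,\|Y\|_F$ step. No gaps.
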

\begin{proof}
Define matrices $\A=(\A_1^T,\A_2^T,\dots,\A_K^T)^T$, $\C=(\C_1^T,\C_2^T,\dots,\C_K^T)^T$ of size $KN \times M$, and the block diagonal matrix $\D_{\B} = \text{diag}(\B_1,\dots,\B_K)$ of size $KN \times KN$. Then, using Cauchy-Schwarz inequality, it holds
\begin{align}
\Big\|\sum_{k=1}^K \A_k^\dagger\B_k\C_k\Big\| & = \|\A^\dagger\D_{\B}\C\|  = \max_{\u,\v\in\mathbb{C}^M,\|\u\|=\|\v\|=1}|\u^\dagger\A^\dagger\D_{\B}\C\v|\\
 &\leq \max_{\u,\v\in\mathbb{C}^M,\|\u\|=\|\v\|=1} \|\D_{\B}\| \|\A\u\|_2 \|\C\v\|_2\\
 &= \max_{1\leq k\leq K}\|\B_k\|  \max_{\u,\in\mathbb{C}^M,\|\u\|=1}\Big|\u^\dagger\Big(\sum_{k=1}^K \A_k^\dagger\A_k\Big)\u\Big|^{1/2} \max_{\v,\in\mathbb{C}^M,\|\v\|=1}\Big|\v^\dagger\Big(\sum_{k=1}^K \C_k^\dagger\C_k\Big)\v\Big|^{1/2} .
\end{align}
Additionally, we prove the second part using Lemma~\ref{lem:tr_AB} 
\begin{align}
\Big\|\sum_{k=1}^K \A_k^\dagger\C_k\Big\|_F^2 & = \Tr\big(\A^\dagger\C\C^\dagger\A\big) \leq \Tr\big(\A\A^\dagger\big) \big\|\C\C^\dagger\big\| = \Tr\big(\A^\dagger\A\big)\big\|\C^\dagger\C\big\|.
\end{align}
\end{proof}
\smallskip

Let us now provide a blockified version generalizing the convergence of the trace lemma (see Lemma \ref{lem:trace}) for block-matrices with a convergence in spectral norm on blocks.
\smallskip
\begin{proposition} \label{prop:block_trace}
Let $\A^{(i,j)}_1,\A^{(i,j)}_2, \ldots,$ with $\A^{(i,j)}_M \in \Compl^{M \times M}$ and  $\A^{(i,j)}_M = \big(\A^{(j,i)}_M\big)^\dagger$, for $i,j=1,\ldots,L$, be a series of matrices. Let $\A_{M,1},\A_{M,2},\ldots,$ with  $\A_{M,L} \in \Compl^{ML\times ML}$ be a series of Hermitian matrices with uniformly bounded spectral norm gathering the blocks $\A^{(i,j)}_M$ as
\begin{align}
[\A_{M,L}]_{i,j} &= \A^{(i,j)}_M, &\qquad  i,j=1,\ldots,L.
\end{align}
Let $\x_1,\x_2,\ldots,$ with $\x_L \in \Compl^L$, be random vectors of i.i.d. entries with zero mean, variance $1$, and finite eighth order moment, independent of $\A_N$, and let $\X_{M,L} = \x_{L} \otimes \I_M\in\mathbb{C}^{ML\times M}$.
Then, considering the block-trace operator defined in \eqref{def:blk_trace}, 
\begin{align}\label{eq:blocktrace1}
\frac{1}{L} \big\|  \X_{M,L}^{\dagger} \A_{M,L} \X_{M,L} - \mathrm{blktr} [\A_{M,L}] \big\| \xrightarrow[L\rightarrow \infty]{\mathrm{a.s.}} 0.
\end{align}

Assume furthermore that the entries of $\x_L = \big(x_1, \ldots, x_L\big)^{\tras}$ are bounded almost surely, i.e., there exists $\chi>0$ such that $|x_i|\leq \chi$ a.s. for $i = 1, \ldots,L$. Then, if $\lim\sup_{L,M} ML^{-\delta}<\infty$ for some $0<\delta<1$, it holds that 
\begin{align}\label{eq:blocktrace2}
\frac{1}{L} \big\|  \X_{M,L}^{\dagger} \A_{M,L} \X_{M,L} - \mathrm{blktr} [\A_{M,L}] \big\| \xrightarrow[L,M\rightarrow \infty]{\mathrm{a.s.}} 0.
\end{align}
\end{proposition}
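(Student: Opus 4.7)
The strategy is to reduce the matrix-valued statement to a uniform (over the unit sphere of $\Compl^M$) convergence of scalar quadratic forms, to which classical scalar trace arguments can be applied. The key reduction: since $\A_M^{(i,j)}=(\A_M^{(j,i)})^\dagger$ implies that both $\X_{M,L}^\dagger \A_{M,L}\X_{M,L}$ and $\mathrm{blktr}[\A_{M,L}]$ are Hermitian $M\times M$ matrices, their spectral-norm difference equals $\sup_{\|\v\|=1}|\v^\dagger(\X^\dagger\A\X-\mathrm{blktr}[\A])\v|$. Using $\X_{M,L}\v=\x_L\otimes\v$ one checks directly that
\begin{equation}
\v^\dagger(\X_{M,L}^\dagger\A_{M,L}\X_{M,L}-\mathrm{blktr}[\A_{M,L}])\v
= \x_L^\dagger\tilde{\A}_L(\v)\x_L-\Tr\tilde{\A}_L(\v),
\end{equation}
where $\tilde{\A}_L(\v)\in\Compl^{L\times L}$ has entries $\v^\dagger\A_M^{(i,j)}\v$, is Hermitian, and satisfies the crucial uniform bound $\|\tilde{\A}_L(\v)\|\le\|\A_{M,L}\|$.

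For the first statement \eqref{eq:blocktrace1} ($M$ fixed, $L\to\infty$), the scalar trace lemma (Lemma \ref{lem:trace}) immediately gives pointwise a.s.\ convergence for each $\v$. To upgrade to uniform convergence over the unit sphere, I would combine Lemma \ref{lem:proof_trace} (with $q=4$, which only requires finite eighth-order moment) and Markov's inequality to obtain $\Pr(\frac{1}{L}|\x^\dagger\tilde{\A}_L(\v)\x-\Tr\tilde{\A}_L(\v)|>\epsilon)\le C L^{-2}\epsilon^{-4}$, which is summable in $L$. A Borel--Cantelli union bound over a finite $\epsilon$-net $\mathcal{N}_\epsilon$ of the (compact, since $M$ is fixed) unit sphere of $\Compl^M$, followed by the standard Hermitian interpolation inequality $\|\M\|\le(1-2\epsilon)^{-1}\max_{\v\in\mathcal{N}_\epsilon}|\v^\dagger\M\v|$, yields \eqref{eq:blocktrace1} after letting $\epsilon\to 0$.

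For the second statement \eqref{eq:blocktrace2} (joint regime $L,M\to\infty$ with $M=O(L^\delta)$, $\delta<1$), the difficulty is that the cardinality of an $\epsilon$-net of the unit sphere in $\Compl^M$ grows like $(3/\epsilon)^{2M}$, so a polynomial-moment bound is no longer enough to survive the union bound. The remedy, made possible by the newly assumed a.s.\ boundedness $|x_i|\le\chi$, is to replace the polynomial concentration by an exponential one (Hanson--Wright inequality), giving
\begin{equation}
\Pr\bigl(|\x_L^\dagger\tilde{\A}_L(\v)\x_L-\Tr\tilde{\A}_L(\v)|>L\epsilon\bigr)
\le 2\exp\bigl(-c L\min(\epsilon^2,\epsilon)\bigr)
\end{equation}
with $c$ depending only on $\chi$ and $\sup_{L,M}\|\A_{M,L}\|$. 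The union bound over the net then yields a probability bounded by $2\exp(2M\log(3/\epsilon)-cL\min(\epsilon^2,\epsilon))$, and the assumption $\limsup ML^{-\delta}<\infty$ with $\delta<1$ makes the exponent $\to-\infty$ at a rate summable in $L$. Borel--Cantelli and the same interpolation step close the argument. The main obstacle, and the reason the assumption $\delta<1$ appears, is precisely the balance between the exponential growth of the net cardinality in $M$ and the exponential concentration in $L$; if $M$ were allowed to grow linearly with $L$, this argument would break down, which is consistent with the conjectural remark in the paper that the constraint $\delta<1$ could likely be relaxed with a finer technique (e.g.\ a chaining or moment-method on $\Tr(\M^{2p})$).
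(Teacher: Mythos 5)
Your proof is correct, but it follows a genuinely different route from the paper's. Your key move is the reduction $\v^{\dagger}\big(\X_{M,L}^{\dagger}\A_{M,L}\X_{M,L}-\mathrm{blktr}[\A_{M,L}]\big)\v=\x_L^{\dagger}\tilde{\A}_L(\v)\x_L-\Tr\big(\tilde{\A}_L(\v)\big)$ with $[\tilde{\A}_L(\v)]_{i,j}=\v^{\dagger}\A_M^{(i,j)}\v$ and the uniform bound $\|\tilde{\A}_L(\v)\|\leq\|\A_{M,L}\|$ (which follows since $\u^{\dagger}\tilde{\A}_L(\v)\u=(\u\otimes\v)^{\dagger}\A_{M,L}(\u\otimes\v)$); this turns the matrix statement into scalar trace-lemma statements uniform over an $\epsilon$-net of the sphere. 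The paper instead bounds the spectral norm by the Frobenius norm and applies the scalar moment bound of Lemma~\ref{lem:proof_trace} entrywise for part one, and for part two splits the quadratic form into a diagonal part (handled by the matrix Bernstein inequality) and an off-diagonal part (handled by a matrix martingale construction with the matrix Freedman inequality). What each approach buys: the paper's matrix-concentration route produces a dimension factor $M$ rather than $e^{cM}$ in front of the exponential tails, but its martingale variance-proxy step reintroduces an $M$-dependent bound that forces $\delta<1$ anyway; your net-plus-Hanson--Wright route is shorter, needs only one concentration inequality, and makes transparent that the operative constraint is $M\log(1/\epsilon)=o(L)$, i.e.\ $M=o(L)$, which is marginally weaker than $\lim\sup_{L,M}ML^{-\delta}<\infty$ and consistent with the paper's conjecture that the exponent condition can be relaxed. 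Two small points to tighten: (a) you should invoke a version of the Hanson--Wright inequality valid for independent \emph{complex} sub-Gaussian entries whose real and imaginary parts need not be independent (this exists, e.g.\ via decoupling, but is worth citing explicitly, and a.s.\ boundedness does give the required uniform sub-Gaussian norm); (b) the mesh of the net and the deviation level should be decoupled --- a fixed net of mesh $1/4$ with deviation level $\epsilon_k\downarrow 0$ along a countable sequence suffices, and there is no need to let the mesh tend to zero in part one either.
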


\begin{proof}
We divide the proof in two parts using different concentration inequalities. First, we prove \emph{\textbf{(i)}} the non-uniform convergence result in \eqref{eq:blocktrace1} and, then, \emph{\textbf{(ii)}} the uniform convergence in $M$ so that \eqref{eq:blocktrace2} holds.

\begin{proof}[Proof of \textbf{(i)}] Similarly to the proof of the trace lemma in \cite{Bai98}, we need to find an integer $q$ such that
\begin{align}
\frac{1}{L^q}\Exp \Big\{ \Big\|  \X_{M,L}^{\dagger} \A_{M,L} \X_{M,L} -  \mathrm{blktr} [\A_{M,L}] \Big\|^q \Big\} \leq f_{M,L}
\end{align}
where $f_{M,L}$ is a constant independent of $\X_{M,L}$, such that $\sum_{L} f_{M,L} < \infty$. Let us first introduce the $M\times M$ matrix $\De_M=\frac{1}{L} \big(\X_{M,L}^{\dagger} \A_{M,L} \X_{M,L} -  \mathrm{blktr} [\A_{M,L}]\big) $ with elements given by
\begin{align}
\big[\De_M\big]_{n,m} &=  \frac{1}{L}\sum_{i = 1}^L \big[\A^{(i,i)}_M\big]_{n,m} \big( |x_i|^2 - 1\big) 
+ \frac{1}{L} \sum_{i=1}^L \sum_{j=1, j\neq i}^L \big[\A^{(i,j)}_M\big]_{n,m} x_i^{\ast} x_j, & n,m = 1, \ldots M
\end{align}
which can be rewritten as 
\begin{align}
\big[\De_M\big]_{n,m} = \frac{1}{L}\x_L^{\dagger} \bar{\A}_L^{(n,m)} \x_L - \frac{1}{L} \Tr(\bar{\A}_L^{(n,m)})
\end{align}
by introducing the $L \times L$ matrices $\bar{\A}_L^{(n,m)}$ with elements given by $\big[\bar{\A}_L^{(n,m)}]_{i,j} = [\A_M^{(i,j)}]_{n,m}$. Then, from Lemma \ref{lem:proof_trace}, we know that for $q\geq 1$,
\begin{align}\label{eq:bound_delta_elements}
\Exp\big\{ \big|\big[\De_M\big]_{n,m} \big|^q \big\} & \leq \frac{C_q}{L^q} \Big(\sum_{i,j =1}^L \big|[\A_M^{(i,j)}]_{n,m}\big|^2\Big)^{q/2}
\end{align}
with $C_q$ being a constant depending only on $q$. 
Furthermore, using that $\big\| \De_M \big\|\leq \big\| \De_M \big\|_F$ , we can bound $\Exp\big\{ \big\| \De_M \big\|^q  \big\}$ for any $q\geq 1$ as
\begin{align} \label{eq:bound_delta}
\Exp\big\{ \big\| \De_M \big\|^q  \big\} & \leq \Exp  \Big\{ \Big( \sum_{m,n=1}^M \big|[\De_M]_{n,m}\big|^2 \Big)^{q/2} \Big\} 
\leq \Exp  \Big\{ M^q\max_{m,n} \big|[\De_M]_{n,m}\big|^q \Big\}\leq M^q\sum_{m,n=1}^M\Exp\Big\{\big|[\De_M]_{n,m}\big|^q \Big\}.
\end{align}
Then,  substituting \eqref{eq:bound_delta_elements} back in \eqref{eq:bound_delta}, it follows for $q\geq 2$ that
\begin{align}
\Exp\big\{ \big\| \De_M \big\|^q  \big\} &\leq 
 \frac{C_qM^q}{L^q}\sum_{m,n=1}^M \Big( \sum_{i,j =1}^L \big|[\A_M^{(i,j)}]_{n,m}\big|^2 \Big)^{q/2}
 \leq  \frac{C_qM^q}{L^q} \Big( \sum_{m,n=1}^M\sum_{i,j =1}^L \big|[\A_M^{(i,j)}]_{n,m}\big|^2 \Big)^{q/2}.
 \end{align}
Using that $\|\A_{M,L}\|_F \leq \sqrt{ML} \|\A_{M,L}\|$, we finally have that
\begin{align}\label{eq:block_trace_main_ineq}
\Exp\big\{ \big\| \De_M \big\|^q  \big\} &\leq 
\frac{M^{3q/2}}{L^{q/2}}C_q  \|\A_{M,L}\|^{q}
 \end{align}
and by considering the case $q=4$ we conclude the first part of the proof.
\end{proof}


\begin{proof}[Proof of \textbf{(ii)}] 
We decompose the convergence result into two parts: 
\begin{equation} \label{eq:unif_converg_2terms}
\frac{1}{L}\Big\| \X_{M,L}^\dagger\A_{M,L}\X_{M,L} - \mathrm{blktr} [\A_{M,L}] \Big\|  \leq \Big\|\frac{1}{L}\sum_{i = 1}^L  \big( |x_i|^2 - 1\big)\A^{(i,i)}_M\Big\| + \Big\|\frac{1}{L}\sum_{i  \neq j}^L  x_ix_j^* \A^{(i,i)}_M\Big\|
\end{equation}
and show first that 
\begin{equation} \label{eq:conv_x_L}
\Big\|\frac{1}{L}\sum_{i = 1}^L  \big( |x_i|^2 - 1\big)\A^{(i,i)}_M\Big\|  \xrightarrow[L,M\rightarrow \infty]{\mathrm{a.s.}} 0.
\end{equation}
Let $y_i=|x_i|^2-1$ for $i =1, \ldots, L$, which satisfies $|y_i|\leq \chi^2+1$ a.s., given the assumption that $|x_i|\leq \chi$ a.s. Therefore, for each $i$ we have that $\|y_i\A^{(i,i)}_M\|<(\chi^2+1)\|\A_{M,L}\|\triangleq D$ a.s.
Then, from the matrix Bernstein inequality (see \cite[Th.~6.1]{{Trop12}}), it holds 
\begin{equation}\label{eq:concentration1}
\Pr\Big(\Big\|\sum_{i = 1}^L  y_i\A^{(i,i)}_M\Big\|\geq L\varepsilon\Big)\leq M\exp\Big(\frac{-L\varepsilon^2}{2(D^2+D\varepsilon/3)}\Big).
\end{equation}
Since $ ML^{-\delta}$ is finite by assumption and $L^{\delta}\exp(-CL)$ is summable, we can apply Borel-Cantelli lemma \cite[Thm.~4.3]{Bil95} and conclude that \eqref{eq:conv_x_L} holds.

Now we focus on the second term in \eqref{eq:unif_converg_2terms} and show that 
\begin{equation} \label{eq:conv_x_L2}
\Big\|\frac{1}{L}\sum_{i \neq j}^L  x_ix_j^*\A^{(i,j)}_M\Big\|  \xrightarrow[L,M\rightarrow \infty]{\mathrm{a.s.}} 0
\end{equation}
by applying the results in \cite{Trop11} to the Hermitian matrix process
\begin{equation}
\Y_L \triangleq \sum_{i \neq j}  x_ix_j^*\A^{(i,j)}_M = \sum_{\ell=1}^L \X_\ell \text{\ \ with\ \ } \X_\ell = \sum_{i=1}^{\ell-1} x_ix_\ell^*\A_M^{(i,\ell)} + \sum_{i=1}^{\ell-1} x_\ell x_i^*\A_M^{(\ell,i)}.
\end{equation}
For doing so, we first check the conditions of  \cite[Th.~1.2]{{Trop11}}.
The matrix process $\{\Y_L\}_{L=1,2\dots}$ is indeed a martingale since $\mathbb{E}_{L-1}\{ \Y_L \}= \Y_{L-1}$,  where $\mathbb{E}_{L-1}$ denotes the expectation with respect to $x_L$ given $x_1,\dots, x_{L-1}$ (see \cite{Trop11}).
Furthermore, the sequence $\{\X_\ell\}_{\ell=1,2\dots}$ is uniformly bounded as follows
\begin{align}
\|\X_\ell\| &\leq 2 \Big\| \sum_{i=1}^{\ell-1} x_ix_k^*\A_M^{(i,\ell)}\Big\| 
	\leq 2\chi \Big\| \sum_{i=1}^{\ell-1} x_i\A_M^{(i,\ell)}\Big\|
	\leq \max_{\u,\v\in\mathbb{C}^M,\|\u\|=\|\v\|=1} 2\chi \Big| \u^\dagger\Big(\sum_{i=1}^{\ell-1} x_i\A_M^{(i,\ell)}\Big)\v\Big|\\
 	&\leq \max_{\u,\v\in\mathbb{C}^M,\|\u\|=\|\v\|=1} 2\chi \big(  x_1\u^\dagger, \cdots,  x_{\ell-1}\u^\dagger, \0_M,  \cdots, \0_M \big)\A_{M,L} \big(\0_M, \cdots, \0_M, \v^{\dagger}, \0_M  \cdots, \0_M  \big)^{\dagger} \\
 &\leq 2\chi^2\sqrt{\ell-1}  \|\A_{M,L}\| \max_{\u,\in\mathbb{C}^M,\|\u\|=1}\|\u\| \max_{\v\in\mathbb{C}^M,\|\v\|=1}\|\v\| \leq 2\chi^2\sqrt{L} \|\A_{M,L}\|.
\end{align}
Moreover, we have that 
\begin{align}
\X_\ell^2 = x_\ell^2\sum_{i,j=1}^{\ell-1} x_i^*x_j^*\A_M^{(\ell,i)}\A_M^{(\ell,j)} + (x_\ell^*)^2\sum_{i,j=1}^{\ell-1} x_ix_j\A_M^{(i,\ell)}\A_M^{(j,\ell)} + |x_\ell|^2\sum_{i,j=1}^{\ell-1} \big(x_ix_j^*\A_M^{(i,\ell)}\A_M^{(\ell,j)} +x_i^*x_j\A_M^{(\ell,i)}\A_M^{(j,\ell)}\big)
\end{align}
and we define 
\begin{multline}
\W_L \triangleq \sum_{\ell=1}^L \mathbb{E}_{\ell-1} \big\{\X_\ell^2\big\} = \mathbb{E}\big\{(x_1^*)^2\big\}\sum_{\ell=1}^L \sum_{i,j=1}^{\ell-1}  x_i^*x_j^*\A_M^{(\ell,i)}\A_M^{(\ell,j)} + \mathbb{E}\big\{x_1^2\big\}\sum_{k=1}^L \sum_{i,j=1}^{\ell-1}  x_ix_j\A_M^{(i,\ell)}\A_M^{(j,\ell)}\\
+ \mathbb{E}\big\{|x_1|^2\big\} \sum_{\ell=1}^L \sum_{i,j=1}^{\ell-1}  \big(x_ix_j^*\A_M^{(i,\ell)}\A_M^{(\ell,j)} +x_i^*x_j\A_M^{(\ell,i)}\A_M^{(j,\ell)}\big)
\end{multline}
satisfying the inequality
\begin{equation}
\label{eq:inequality_Wk}
\|\W_L\| = \Big\|\sum_{\ell=1}^L \mathbb{E}_{\ell-1} \big\{ \X_\ell^2 \big\} \Big\| \leq 4L\chi^2\max_{1\leq \ell \leq L}\Big\|\sum_{i=1}^{\ell-1}  x_i\A_M^{(i,\ell)}\Big\|^2.
\end{equation}

Finally, we are ready to apply matrix Freedman's inequality for $\varepsilon>0$ \cite[Th.~1.2]{{Trop11}} and obtain for some $0< \alpha<1$ that
\begin{align}
\Pr\big(\|\Y_L\|\geq L\varepsilon\big) &= \Pr\Big(\|\Y_L\|\geq L\varepsilon , \|\W_L\|\leq 4L^{1+\alpha}\chi^2\Big) + \Pr\Big(\|\Y_L\|\geq L\varepsilon, \|\W_L\|\geq  4L^{1+\alpha}\chi^2\Big) \\
&\leq M\exp\Big(\frac{-L^2\epsilon^2/2}{4L^{1+\alpha}\chi^2 + 2\epsilon/3\chi^2L^{1+\alpha} \|\A_{M,L}\|}\Big) + \Pr\Big(\|\W_L\|\geq  4L^{1+\alpha}\chi^2\Big) \\
&\leq M\exp\Big(\frac{-L^{1-\alpha}\epsilon^2/2}{4\chi^2 + 2\epsilon/3\chi^2 \|\A_{M,L}\|}\Big) + \Pr\Big( \max_{1\leq \ell\leq L}\Big\|\sum_{i=1}^{\ell-1}  x_i\A_M^{(i,\ell)}\Big\|\geq  L^{\alpha/2}\Big) \\
&\leq M\exp\Big(\frac{-L^{1-\alpha}\epsilon^2/2}{4\chi^2 + 2\epsilon/3\chi^2 \|\A_{M,L}\|}\Big) + L\max_{1\leq \ell \leq L}\Pr\Big(\Big\|\sum_{i=1}^{\ell-1}  x_i\A_M^{(i,\ell)}\Big\|\geq  L^{\alpha/2}\Big)\label{eq:bernstein_in}
\end{align}
where we have used \eqref{eq:inequality_Wk}. 

Next we want to apply the matrix Bernstein inequality in \cite[Th.~1.6]{Trop12} to bound the second term in \eqref{eq:bernstein_in}. Let us fix $1\leq \ell \leq L$. Then, for any $1\leq i\leq \ell-1$, we have that 
$\mathbb{E}\big\{ x_i\A_M^{(i,\ell)}\big\}=0$, $\| x_i\A_M^{(i,\ell)}\|\leq \chi\|\A_{M,L}\|$, and
\begin{align} 
\Big\|\sum_{i=1}^{\ell-1}\mathbb{E}\big\{ |x_i|^2\big\} &\A_M^{(i,\ell)}\big[\A_M^{(i,\ell)}\big]^\dagger\Big\| \nonumber \\
&\leq \chi^2 \Big\|\sum_{i=1}^{L}\A_M^{(i,\ell)}\A_M^{(\ell,i)}\Big\|
\leq \chi^2 \Tr\Big(\sum_{i=1}^{L}\A_M^{(i,\ell)}\A_M^{(\ell,i)}\Big)
\leq \chi^2 \Tr\Big(\big[\A_{M,L}^2\big]^{(\ell,\ell)}\Big)
\leq \chi^2 M\Big\|\A_{M,L}\Big\|^2
\end{align}
where $\big[\A \big]^{(\ell,\ell)}$ denotes the $(\ell,\ell)$-th $M\times M$ block of matrix $\A$. Similarly, it can be shown that 
\begin{equation}
\Big\|\sum_{i=1}^{\ell-1}\mathbb{E}\big\{ |x_i|^2\big\} \big[\A_M^\dagger\big]^{(i,\ell)}\A_M^{(i,\ell)}\Big\|  \leq \chi^2 M\Big\|\A_{M,L}\Big\|^2
\end{equation}
 and we can finally apply \cite[Th.~1.6]{Trop12}:
\begin{equation}\label{eq:bernstein_in2}
\Pr\Big(\Big\|\sum_{i=1}^{\ell-1}  x_i\A_M^{(i,\ell)}\Big\|\geq  L^{\alpha/2}\Big) \leq 2M\exp\Big(\frac{-L^{\alpha}/2}{\chi^2 M \|\A_{M,L}\|^2 +\chi\|\A_{M,L}\|L^{\alpha/2}/3 }\Big).
\end{equation}
Since there exists $1>\delta>0$ such that $\lim\sup_{M,L} ML^{-\delta}<\infty$, we take $\delta>\gamma>0$ and $1>\alpha>0$ so that $\alpha=\delta+\gamma$ and by introducing the constants $C_1 = \frac{\epsilon^2/2}{4\chi^2 + 2\epsilon/3\chi^2 \|\A_{M,L}\|}$ and $C_2=\frac{1/2}{\chi^2 \|\A_{M,L}\|^2\lim\sup_{M,L} ML^{-\delta} +\chi\|\A_{M,L}\|/3 }$, and substituting \eqref{eq:bernstein_in2} back in \eqref{eq:bernstein_in}, it yields
\begin{equation}\label{eq:concentration2}
\Pr\big(\|\Y_L\|\geq L\varepsilon\big) \leq M\exp\big(-C_1L^{1-\alpha}\big) + 2 ML\exp\big(-C_2L^{\gamma}\big).
\end{equation}
Since $ ML^{-\delta}$ is finite and $L^{\delta}\exp(-CL^{1-\alpha})$ and $L^{\delta}\exp(-CL^\gamma)$ are summable, we can apply Borel-Cantelli lemma \cite[Thm.~4.3]{Bil95} and conclude that \eqref{eq:conv_x_L2} holds.
\end{proof}
\phantom\qedhere
\end{proof}

\section{Proofs}
\subsection{Proof of Theorem {\rm \ref{thm:deteq1}}} \label{app:deteq1}
\begin{proof}
Under the user covariance matrix model in \eqref{ass:sig_k}, we can rewrite the MSE of user 0 using \eqref{eq:Ce_ii_k} together with \eqref{eq:sum_mse_ii_0} as 
\begin{equation}
\mse_0^{(\mathsf{ii})}\big(\P, \{\Sig_k\}, \{\snr_k\}\big)  =
\frac{\beta_0}{M}\Tr\big(\Sig_0 \big) - \beta_0\frac{\snr_0}{M} \Tr\Big( \tilde{\P}_0\Sig_0^2\tilde{\P}_0^\dagger\big( \snr_0\tilde{\P}_0\Sig_0\tilde{\P}_0^\dagger+\A_M^\dagger\A_M+\I_{ML}\big)^{-1}\Big)
\end{equation}
with 
\begin{equation}
\A_M^\dagger \triangleq \frac{1}{\sqrt{M}} \Big[\Big(\frac{\snr_1}{\tau_{M,1}}\Big)^{1/2} \tilde{\P}_1\x_{1,1},\ldots, \Big(\frac{\snr_1}{\tau_{M,1}}\Big)^{1/2} \tilde{\P}_1\x_{1,r_1},
\ldots,\Big(\frac{\snr_K}{\tau_{M,K}}\Big)^{1/2}\tilde{\P}_K\x_{K,r_k}\Big]\in\mathbb{C}^{ML\times \sum_k r_k}
\end{equation}
so that
\begin{align}
\A_M^\dagger\A_M =  \sum_{k=1}^{K} \Big(\frac{\snr_k}{r_k}\Big) \sum_{i=1}^{r_k} \x_{k,i} \tilde{\P}_k\tilde{\P}_k^{\dagger}\x_{k,i}^{\dagger}.
\end{align}
Then, since for $\tau_{M,k} = r_k/M$ such that  $0<\lim \sup_{M,r_k} \tau_{M,k} <\infty$, it holds that  $\{ \frac{\snr_k}{\tau_{M,k}} \tilde{\P}_k\tilde{\P}_k^{\dagger}\}$, $ \snr_0 \tilde{\P}_0\Sig_0^2\tilde{\P}_0^\dagger$, and $\snr_0 \tilde{\P}_0\Sig_0\tilde{\P}_0^\dagger$ have uniformly bounded spectral norm with respect to $M$, we can directly apply \cite[Thm.~1]{Wag12} and obtain the convergence result in \eqref{eq:convergence_deteq1} with
\begin{equation}
\xi_{0}^{(\mathsf{ii})}\big(\P, \Sig_0, \{\snr_k\}\big)  =
\frac{\beta_0}{M} \Tr\big(\Sig_0 \big) - \beta_0\frac{\snr_0}{M} \Tr\Big(\tilde{\P}_0\Sig_0^2\tilde{\P}_0^\dagger\Big( \snr_0 \tilde{\P}_0\Sig_0\tilde{\P}_0^\dagger+ \frac{1}{M} \sum_{k=1}^K \frac{\snr_k}{\tau_{M,k}} \frac{r_k \tilde{\P}_{k}\tilde{\P}_{k}^\dagger}{1+\iota_{M,k}}+\I_{ML}\Big)^{-1}\Big)
\end{equation}
and the constants $\iota_{1,M},\dots,\iota_{M,k}$ given by the following fixed point equations
\begin{align} \label{eq:fp_eq_proof}
\iota_{M,k} &=\frac{1}{M}   \Tr\Big( \frac{\snr_k}{\tau_{M,k}} \tilde{\P}_{k}\tilde{\P}_{k}^\dagger\Big( \snr_0 \tilde{\P}_0\Sig_0\tilde{\P}_0^\dagger+
\frac{1}{M} \sum_{j=1}^K \frac{\snr_j}{\tau_{M,j}} \frac{r_j  \tilde{\P}_{j}\tilde{\P}_{j}^\dagger}{1+\iota_{j,M}}+\I_{ML}\Big)^{-1}\Big) & k = 1,\ldots, K.
\end{align}
Finally, for $\S_L$ as defined in \eqref{eq:SM}, we simplify  $\xi_{0}^{(\mathsf{ii})}\big(\P, \Sig_0\big)$ and the fixed point equations by applying Lemma~\ref{lem:inv}:
\begin{align}
\xi_{0}^{(\mathsf{ii})}\big(\P, \Sig_0, \{\snr_k\}\big)  &=
\frac{\beta_0 }{M} \Tr\Big(\Sig_0 - 
\snr_0\Sig_0\tilde{\P}_0^\dagger\big( \snr_0 \tilde{\P}_0\Sig_0\tilde{\P}_0^\dagger+ \S_{L}^{-1}\otimes\I_M\big)\tilde{\P}_0\Sig_0\Big)\\
&= \frac{\beta_0}{M} \Tr\big(\Sig_0 \big(\I_M+\snr_0(\p_0^\dagger\S_{L}^{-1}\p_0) \Sig_0\big)^{-1}\big)
= \frac{1}{M} \sum_{i=1}^M \frac{\beta_0\lambda_{0,i}}{1+\snr_0\lambda_{0,i}(\p_0^\dagger\S_{L}^{-1}\p_0)}
\end{align}
and
\begin{align}
\iota_{M,k} &=\frac{1}{M} \frac{\snr_k}{\tau_{M,k}} \Big(M  \big(\p_{k}^\dagger\S_{L}^{-1}\p_{k}\big) 
- \Tr\Big( \big(\frac{1}{\snr_0}\Sig_0^{-1}+\tilde{\P}_0^\dagger\big(\S_{L}^{-1}\otimes\I_M\big)\tilde{\P}_0\big)^{-1}
\big| \p_{k}^\dagger\S_{L}^{-1}\p_0\big|^2\Big)\\
&=\frac{\snr_k}{\tau_{M,k}}  \Big( \p_{k}^\dagger\S_{L}^{-1}\p_{k} -\frac{1}{M}\sum_{i=1}^M \frac{\lambda_{0,i} \snr_0|\p_{k}^\dagger\S_{L}^{-1}\p_0|^2}{1+\lambda_{0,i} \snr_0\p_0^\dagger\S_{L}^{-1}\p_0}\Big).
\end{align}
This completes the proof.
\end{proof}

\subsection{Proof of Theorem {\rm\ref{thm:deteq3}}} \label{app:deteq3}
\begin{proof}
The idea behind the proof of Theorem \ref{thm:deteq3} is to blockify the result of Bai and Silverstein \cite{Sil95}. To this end, we first rewrite $\mse_0^{(\mathsf{ii})}\big(\P, \{\Sig_k\}\big)$ in \eqref{eq:sum_mse_ii}  using the Woodbury identity in Lemma~\ref{lem:inv} and 
\begin{equation} \label{eq:mse_thm3}
\mse_0^{(\mathsf{ii})}\big(\P, \{\Sig_k\}, \{\snr_k\}\big) =  
\frac{\beta_0}{M} \Tr\big(\Sig_0 \big(\I_M+\tilde{\P}_0^\dagger(\tilde{\P}_{(0)} \Ps_{(0)} \tilde{\P}_{(0)}^\dagger + \I_{ML})^{-1}\tilde{\P}_0\Ps_0\big)^{-1}\big)
\end{equation}
where $\tilde{\P}_{(0)} = \big(\tilde{\P}_1, \ldots, \tilde{\P}_K\big)$ with $\tilde{\P}_k = \p_k \otimes \I_M$ and $\Ps_{(0)} = \Diag \big( \Ps_1, \ldots, \Ps_K\big)$ with $\Ps_k = \snr_k \Sig_k$ gathers all interfering users, i.e.,  $k=1,\dots,K$ (without user 0). Then, for  $\xi_{0}^{(\mathsf{ii})}\big(\{\Sig_k\}, \{\snr_k\}; \Gam_L\big)$  in \eqref{eq:xi_0_deteq3} rewritten as
\begin{align}
\xi_{0}^{(\mathsf{ii})}\big(\{\Sig_k\}, \{\snr_k\}; \Gam_L\big)= \frac{\beta_0}{M} \Tr \big(\Sig_0 \big(\I_M + L \snr_0 (\Gam_L + \I_M)^{-1}\Sig_0 \big)^{-1}\big)
\end{align}
and for
\begin{align}
\A_0\triangleq\frac{1}{L}\tilde{\P}_0^\dagger(\tilde{\P}_{(0)}\Ps_{(0)}\tilde{\P}_{(0)}^\dagger +\I_{ML})^{-1}\tilde{\P}_0
\end{align}
it holds that
\begin{align}
L \big| \mse_{0}^{(\mathsf{ii})}\big(\P, \{\Sig_k\}, \{\snr_k\}\big) - \xi_{0}^{(\mathsf{ii})}&\big(\{\Sig_k\}, \{\snr_k\}; \Gam_L\big) \big|  \nonumber \\
& =  L \frac{\beta_0}{M}
\big|\Tr\big(\Sig_0 \big((L\A_0 \Ps_0+\I_M)^{-1} - (\I_M + L(\Gam_L + \I_M)^{-1}\Ps_0)^{-1}\big) \big) \big| \\
& \leq L^2  \xi_{0}^{(\mathsf{ii})}\big(\{\Sig_k\}, \{\snr_k\}; \Gam_L\big) \big\|\Ps_0 \big(L\A_0\Ps_0 + \I_M\big)^{-1} \big \|
\big \| \A_0 - (\Gam_L + \I_M)^{-1} \big \|
\end{align}
where the inequality comes from Lemma \ref{lem:tr_AB}. Observing that 
\begin{align} \label{eq:bound_xi_ii}
\xi_{0}^{(\mathsf{ii})}\big(\{\Sig_k\}, \{\snr_k\}; \Gam_L\big) &\leq  \frac{\beta_0}{\snr_0}  \big\|  \big(\I_M + L (\Gam_L + \I_M)^{-1}\Ps_0 \big)^{-1} \Ps_0 \big\| 
\leq   \frac{\beta_0}{\snr_0} \big\|  \big(\Ps_0^{\#}+ L (\Gam_L + \I_M)^{-1}\big)^{-1} \big\|  \\
& \leq  \frac{\beta_0}{\snr_0L}  (\|\Gam_L\|+1)
\intertext{with $\Ps_0^{\#}$ denoting the pseudo-inverse of $\Ps_0$, and}
\big\|\Ps_0 \big(\I_M+L\A_0\Ps_0\big)^{-1} \big \|
& = 
\big\| \big( L(\Gam_L + \I_M)^{-1}\Ps_0 + L\big( ( \A_0  - (\Gam_L + \I_M)^{-1}) \Ps_0 \big) +  \I_M\big)^{-1} \Ps_0\big\| \\
& \leq  \frac{1}{L} \big\|\big( (\Gam_L + \I_M)^{-1}\ - \| \A_0 - (\Gam_L + \I_M)^{-1} \| \I_M \big)^{-1} +\frac{1}{L}\Ps_0^{\#} \big\| \\
& \leq   \frac{1}{L} \big( (\|\Gam_L\|+1)^{-1} -  \| \A_0 - (\Gam_L + \I_M)^{-1}  \|\big)^{-1} 
\end{align}
we can apply Lemma \ref{lem:resolvent} to obtain 
\begin{equation}
L \big| \mse_{0}^{(\mathsf{ii})}\big(\P, \{\Sig_k\}, \{\snr_k\}\big) - \xi_{0}^{(\mathsf{ii})}\big(\{\Sig_k\}, \{\snr_k\}; \Gam_L\big) \big| 
 \leq   \frac{\beta_0}{\snr_0} \frac{  (\|\Gam_L\|+1) \big\| \A_0 - (\Gam_L + \I_M)^{-1} \big\| }{(\|\Gam_L\|+1)^{-1}- \| \A_0 - (\Gam_L + \I_M)^{-1} \|}.
\end{equation}
In consequence, proving the theorem reduces to showing that \emph{\textbf{(i)}}
\begin{align} \label{eq:statement_i}
\big \| \A_0-  (\Gam_L + \I_M)^{-1}  \big\| \xrightarrow[K,L\rightarrow \infty]{\mathrm{a.s.}} 0
\end{align}
if $\lim \sup_{K,L}\lambda_{\max}(\Gam_L)<\infty$, which is the case under Assumption \ref{ass:enough_pilots} in \eqref{eq:enough_pilots}. Indeed, let us define $\S_L \triangleq (\Gam_L+\I_M)^{-1}$ and use the fixed point equation for $\Gam_L$ in 
\eqref{eq:gam_fp} to state that
\begin{align}
\big\|\S_L^{-1}\big\|
 &=\Big\|\sum_{k=1}^K \Ps_k^{1/2}\big(L\Ps_k^{1/2}\S_L\Ps_k^{1/2} + \I_M\big)^{-1}\Ps_k^{1/2} + \I_M\Big\|
 \leq\Big\|\sum_{k=1}^K \Ps_k^{1/2}\Big(\frac{L}{\|\S_L^{-1}\|}\Ps_k + \I_M\Big)^{-1}\Ps_k^{1/2} + \I_M\Big\|\\
 &\leq\Big\|\sum_{k=1}^K \U_k\Big(\frac{L}{\|\S_L^{-1}\|}\I_{r_k} + \frac{1}{\snr_k}\La_k^{-1}\Big)^{-1}\U_k^\dagger + \I_M\Big\|
 \leq \big\|\S_L^{-1} \big \| \Big\|\frac{1}{L}\sum_{k=1}^K \U_k\U_k^\dagger \Big\|+1
\end{align}
where $\U_k \in \Compl^{M \times r_k}$ contains the eigenvectors associated with the $r_k$ non-zero eigenvalues $\La_k = \Diag\big(\lambda_{k,1}, \ldots, \lambda_{k,r_k}\big)$ of $\Sig_k$. Hence, we can conclude that
\begin{equation}\label{eq:sum_uk}
\big\|\S_L^{-1}\big\|\leq \frac{1}{1-\big\|\frac{1}{L}\sum_{k=1}^K \U_k\U_k^\dagger \big\|}<\infty
\end{equation}
which implies  $\lim \sup_{K,L}\lambda_{\max}(\Gam_L)<\infty$.


\begin{proof}[Proof of \textbf{(i)}] Let us first introduce the following definitions 
\begin{align}
\A_L &\triangleq \frac{1}{L}\mathrm{blktr}\big[ \big(\tilde{\P}_{(0)}\Ps_{(0)}\tilde{\P}_{(0)}^\dagger + \I_{ML}\big)^{-1}\big] = \frac{1}{L}\mathrm{blktr}\big[\tilde{\A}_L\big] \\
\T_L(\A_L) &\triangleq   \Big(\sum_{k=1}^K \Ps_k \big( L\A_L \Ps_k + \I_M \big)^{-1} + \I_M\Big)^{-1}\label{eq:T}
\end{align}
with $\tilde{\A}_L = (\tilde{\P}_{(0)}\Ps_{(0)}\tilde{\P}_{(0)}^\dagger + \I_{ML})^{-1}$. 
Then, we can establish that
\begin{align}
\big \| \A_0-  (\Gam_L + \I_M)^{-1}  \big \| & \leq \big \| \A_0-  \A_L \big \| + \big \| \A_L -  (\Gam_L + \I_M)^{-1}  \big \|. 
\end{align}
Furthermore, from the blockified version of the trace lemma given in Proposition \ref{prop:block_trace} we know  that
\begin{equation}
\big\| \A_0 - \A_L \big\| 
\xrightarrow[K,L\rightarrow \infty]{\mathrm{a.s.}} 0.
\end{equation}
Thus, in order to prove \eqref{eq:statement_i}, it only remains to show that 
\begin{equation} \label{eq:A_L_minus_Gam_L}
\big \| \A_L -  (\Gam_L + \I_M)^{-1} \big \| \xrightarrow[K,L\rightarrow \infty]{\mathrm{a.s.}} 0.
\end{equation}
For doing so, we use Lemma \ref{lem:resolvent} to observe that  
\begin{equation}
\big \| \A_L -  (\Gam_L + \I_M)^{-1} \big) \big \|\leq \big \|\A_L^{1/2}\big \| \big \| \A_L^{-1/2}(\A_L - \S_L)\S_L^{-1/2} \big) \big \| \big\|\S_L^{1/2}\big\|\leq  \big \| \A_L^{-1/2}(\A_L - \S_L)\S_L^{-1/2}  \big \|
\end{equation}
where
\begin{align}
 \big \| \A_L^{-1/2}(\A_L -  \S_L)\S_L^{-1/2}  \big \| &\leq \big \| \A_L^{-1/2}(\A_L - \T_L(\A_L))\S_L^{1/2} \big \| + \big\| \A_L^{-1/2}(\T_L(\A_L) - \S_L)\S_L^{-1/2}  \big \|\\
 &\leq \big \|\A_L^{-1/2}\big \|\big \| \A_L - \T_L(\A_L)\big \| \big \|\S_L^{-1/2}\big \| + \big\| \A_L^{-1/2}(\T_L(\A_L) - \S_L)\S_L^{-1/2}  \big \|.
\end{align}
Then, using  the fact that $\|\A_L^{-1}\|\leq \varepsilon_M^{-1}$ and  $\|\S_L^{-1}\|<\infty$, we can show \eqref{eq:A_L_minus_Gam_L} by proving that \emph{\textbf{(ii)}}
\begin{equation} \label{eq:A-B}
\big\| \A_L - \T_L(\A_L)\big\| \xrightarrow[K,L\rightarrow \infty]{\mathrm{a.s.}} 0
\end{equation}
and that \emph{\textbf{(iii)}} there exists $0<\eta_M<1$ independent from $L,K$ such that
\begin{align} \label{eq:statement_iii}
\big \|  \A_L^{-1/2}(\T_L(\A_L) - \S_L) \S_L^{-1/2}\big  \| \leq\eta_M \big \|   \A_L^{-1/2}(\A_L - \S_L) \S_L^{-1/2}\big \|.
\end{align}
\end{proof}
\begin{proof}[Proof of \textbf{(ii)}]  Let us first rewrite  $\A_L - \T_L(\A_L)$ as
\begin{align}
 \A_L - \T_L(\A_L)
& =\frac{1}{L} \mathrm{blktr}\big [ \tilde{\A}_L  - \T_L(\A_L) \otimes \I_L  \big] \\
& =\frac{1}{L}\mathrm{blktr}\Big[ \big(\T_L(\A_L) \otimes \I_L\big)\Big(\tilde{\P}_{(0)}\Ps_{(0)}\tilde{\P}_{(0)}^\dagger 
- \Big(\sum_{k=1}^K \Ps_k \big( L\A_L \Ps_k + \I_M \big)^{-1}\Big)\otimes \I_L\Big)\tilde{\A}_L \Big]\\
&  =\frac{1}{L} \T_L(\A_L) \sum_{k=1}^K \mathrm{blktr}\Big[\tilde{\P}_k\Ps_k\tilde{\P}_k^\dagger\tilde{\A}_L  
-\Big(\big(\Ps_k \big( L\A_L \Ps_k + \I_M \big)^{-1}\big) \otimes \I_L\Big)\tilde{\A}_L \Big] \label{eq:A0-B0}
\end{align}
where the second equality follows from Lemma \ref{lem:resolvent}. Using that
\begin{align}
\mathrm{blktr}\big[ \tilde{\P}_k\Ps_k\tilde{\P}_k^\dagger\tilde{\A}_L \big]
& = \Ps_k \tilde{\P}_k^\dagger \tilde{\A}_L \tilde{\P}_k
=   \Ps_k  \big( \bar{\A}_{(k)} \Ps_k +\tfrac{1}{L} \I_M\big)^{-1}  \bar{\A}_{(k)}   
\end{align}
with $\bar{\A}_{(k)}  \triangleq \frac{1}{L}\tilde{\P}_k^\dagger \big( \tilde{\P}_{(0)}\Ps_{(0)}\tilde{\P}_{(0)}^\dagger 
- \tilde{\P}_k\Ps_k\tilde{\P}_k^\dagger + \I_{ML}\big)^{-1}\tilde{\P}_k$ and applying Lemma \ref{lem:inv} several times and Lemma \ref{lem:resolvent} again, we have that
\begin{align}
\A_L -  \T_L(\A_L) 
&= 	\T_L(\A_L) \Big( \sum_{k=1}^K  \Ps_k\big(L\bar{\A}_{(k)}\Ps_k+\I_M\big)^{-1} \bar{\A}_{(k)}  - \sum_{k=1}^K \Ps_k \big( L\A_L \Ps_k + \I_M \big)^{-1}\A_L \Big) \label{eq:A-B_aux} \\
&= 	\T_L(\A_L) \sum_{k=1}^K  \big(L\Ps_k\A_L+\I_M\big)^{-1}\Ps_k\big(\A_L-\bar{\A}_{(k)}\big) \big(L\Ps_k\bar{\A}_{(k)}+\I_M\big)^{-1}.  \label{eq:critical}
\end{align}
Given that $\|\T_L(\A_L)\|\leq 1$ and  $\|\A_L^{-1}\|\leq \varepsilon_M^{-1}$, it holds that  
\begin{align}
\big\| \A_L -  \T_L(\A_L) \big\|
&\leq  
\sum_{k=1}^K  \big\|\big(L\Ps_k\A_L+\I_M\big)^{-1}\Ps_k \big\| \big\|\A_L-\bar{\A}_{(k)}\big\| \big\|\big(L\Ps_k\bar{\A}_{(k)}+\I_M\big)^{-1}\big\| 
\leq  
\frac{\varepsilon_M^{-1}}{\alpha_L}\max_{k} \|\A_L-\bar{\A}_{(k)} \|
\end{align}
where $\max_{k}\|\A_L-\bar{\A}_{(k)}\|$ satisfies
\begin{equation} \label{eq:triang_A_k}
\max_{k}\|\A_L-\A_{(k)}\|\leq \max_{k}\|\A_{(k)}-\bar{\A}_{(k)}\|+ \max_{k}\|\A_L-\A_{(k)}\| 
\end{equation}
with $\A_{(k)}  \triangleq \frac{1}{L}\mathrm{blktr} \big( ( \tilde{\P}_{(0)}\Ps_{(0)}\tilde{\P}_{(0)}^\dagger - \tilde{\P}_k\Ps_k\tilde{\P}_k^\dagger + \I_{ML})^{-1}\big)$.

For the first term in the right-hand side of  \eqref{eq:triang_A_k} we follow the same idea as in \cite{Sil95}. More exactly, we use the inequality \eqref{eq:block_trace_main_ineq} in the proof of Proposition \ref{prop:block_trace} in order to state that there exists a constant $C_M$ such that 
\begin{equation}
\Exp\big\{\big\| \A_{(k)} - \bar{\A}_{(k)} \big\|^6\big\} \leq \frac{C_M}{L^3}
\end{equation}
and, then, we apply Boole's inequality \cite[eq.~(2.10)]{Bil95} and Markov's inequality \cite[eq.~(5.31)]{Bil95} to obtain that for any $\varepsilon>0$ 
\begin{align}
\Pr\Big(\max_{1\leq k\leq K} \big\| \A_{(k)} - \bar{\A}_{(k)} \big\|\geq \varepsilon\Big) 
\leq \sum_{k=1}^K \Pr\Big(\big\| \A_{(k)} - \bar{\A}_{(k)} \big\| \geq \varepsilon\Big)
\leq \sum_{k=1}^K \frac{1}{\varepsilon^6} \Exp\Big\{\big\| \A_{(k)} - \bar{\A}_{(k)} \big\|^6 \Big\} 
\leq \frac{C_MK}{\varepsilon^6 L^3}\label{eq:borelboole}.
\end{align}
Since $\tfrac{K}{L^3}$ is summable, we can conclude by the Borel-Cantelli lemma \cite[Thm.~4.3]{Bil95} that $\max_k \big\| \A_{(k)} - \bar{\A}_{(k)} \big\|\xrightarrow{\mathrm{a.s.}} 0$ as $K,L\rightarrow\infty$.

For the second term in the right-hand side of \eqref{eq:triang_A_k}, we use Lemma~\ref{lem:inv} to get 
\begin{equation}
\A_L-\A_{(k)} = \frac{1}{L}\mathrm{blktr}\Big[\tilde{\A}_{(k)}\tilde{\P}_k\Ps_k^{1/2}\big(\I_M + \Ps_k^{1/2}\tilde{\P}_k^\dagger\tilde{\A}_{(k)}\tilde{\P}_k\Ps_k^{1/2}\big)^{-1}\Ps_k^{1/2}\tilde{\P}_k^\dagger\tilde{\A}_{(k)} \Big].
\end{equation}
with $\tilde{\A}_{(k)} = ( \tilde{\P}_{(0)}\Ps_{(0)}\tilde{\P}_{(0)}^\dagger - \tilde{\P}_k\Ps_k\tilde{\P}_k^\dagger + \I_{ML})^{-1}$.
Therefore, given that $\|\tilde{\A}_{(k)}\| \leq 1$, we can upper bound the spectral norm as follows
\begin{align}
\big\|\A_L-\A_{(k)}\big\| &\leq  \frac{1}{L}\Tr\Big(\tilde{\A}_{(k)}^{3/2}\tilde{\P}_k\Ps_k^{1/2}\big(\I_M + \Ps_k^{1/2}\tilde{\P}_k^\dagger\tilde{\A}_{(k)}\tilde{\P}_k\Ps_k^{1/2}\big)^{-1}\Ps_k^{1/2}\tilde{\P}_k^\dagger\tilde{\A}_{(k)}^{1/2}\Big)\\
&\leq  \frac{1}{L}\big\|\tilde{\A}_{(k)}\big\| \Tr\Big(\Ps_k^{1/2}\tilde{\P}_k^\dagger\tilde{\A}_{(k)}\tilde{\P}_k\Ps_k^{1/2}\big(\I_M + \Ps_k^{1/2}\tilde{\P}_k^\dagger\tilde{\A}_{(k)}\tilde{\P}_k\Ps_k^{1/2}\big)^{-1}\Big)\\
&\leq   \frac{M}{L}\Big\|\Ps_k^{1/2}\tilde{\P}_k^\dagger\tilde{\A}_{(k)}\tilde{\P}_k\Ps_k^{1/2}\big(\I_M + \Ps_k^{1/2}\tilde{\P}_k^\dagger\tilde{\A}_{(k)}\tilde{\P}_k\Ps_k^{1/2}\big)^{-1}\Big\|\leq \frac{M}{L}.\label{eq:MoverL}
\end{align}
And this, together with \eqref{eq:triang_A_k} and  \eqref{eq:borelboole}, allows us to conclude that statement \emph{\textbf{(ii)}} in \eqref{eq:A-B} holds. 
\end{proof}

\begin{proof}[Proof of \textbf{(iii)}]  
Recall that $\S_L$ is solution to the fixed point equation $\S_L =\T_L(\S_L)$ with $\T_L(\cdot)$ as defined in \eqref{eq:T}. 
Then, applying Lemma \ref{lem:resolvent}, we have that
\begin{multline}
\A_L^{-1/2}(\T_L(\A_L)  - \T_L(\S_L))\S_L^{-1/2}  \\ = L\sum_{k=1}^K \A_L^{-1/2}\T_L(\A_L) \Ps_k\big(L\A_L\Ps_k+\I_M\big)^{-1}\big(\A_L-\S_L\big)\Ps_k \big(L\S_L\Ps_k+\I_M\big)^{-1} \T_L(\S_L)\S_L^{-1/2}
\end{multline}
whose spectral norm can be bounded using Lemma~\ref{lem:sum_ABC}. Indeed, let us introduce for some Hermitian $M\times M$ matrix $\B$
\begin{align} \label{eq:Mk}
\M_k(\B)\triangleq \B^{1/2}\Ps_k^{1/2}\big(L\Ps_k^{1/2}\B\Ps_k^{1/2}+\I_M\big)^{-1}\Ps_k^{1/2}\B^{1/2}
\end{align}
and write
\begin{multline}
\big\|\A_L^{-1/2}(\T_L(\A_L)  - \T_L(\S_L))\S_L^{-1/2}\big\|  
\leq
L\big\|\A_L^{-1/2}(\A_L  - \S_L)\S_L^{-1/2}\big\| \\ \Big\|\A_L^{-1/2}\T_L(\A_L)\A_L^{-1/2}\sum_{k=1}^K\M_k(\A_L)^2\A_L^{-1/2}\T_L(\A_L)\A_L^{-1/2}\Big\|^{1/2}\Big\|\sum_{k=1}^K\M_k(\S_L)^2\Big\|^{1/2} .\label{eq:contraction}
\end{multline}
We can further bound \eqref{eq:contraction} using again Lemma~\ref{lem:sum_ABC} and the fact that $\M_k(\A_L)$ is positive definite and $\|\M_k(\A_L)\|\leq \frac{1}{L}$:
\begin{align}
\Big\|\A_L^{-1/2}\T_L(\A_L)\A_L^{-1/2}\sum_{k=1}^K&\M_k(\A_L)^2\A_L^{-1/2}\T_L(\A_L)\A_L^{-1/2}\Big\| \nonumber \\
&\leq \max_{1\leq k\leq K} \big\|\M_k(\A_L)\big\| \Big\|\A_L^{-1/2}\T_L(\A_L)\A_L^{-1/2}\sum_{k=1}^K\M_k(\A_L)\A_L^{-1/2}\T_L(\A_L)\A_L^{-1/2}\Big\|\\
&\leq \frac{1}{L} \big\|\A_L^{-1/2}\T_L(\A_L)\big(\T_L(\A_L)^{-1} - \I_M\big)\T_L(\A_L)\A_L^{-1/2}\big\|\\
&\leq \frac{1}{L} \big(\|\A_L^{-1/2}(\T_L(\A_L)-\A_L)\A_L^{-1/2} + \I_M\|\big) \big( \| \T_L(\A_L) - \A_L\| + \|\A_L-\I_M\|\big)\\
&\leq \frac{1}{L} \big(\|\A_L^{-1}\|\|\T_L(\A_L)-\A_L\| + 1\big) \big( \| \T_L(\A_L) - \A_L\| +1- \lambda_{\min}(\A_L) \big).
\end{align}
Recall that by assumption of the theorem  there exists $\epsilon_M>0$ such that $\lim\inf_{K,L} \lambda_{\min}(\A_L)>\epsilon_M$ and this 
proves that there exists $0<\eta_M<1$ such that 
\begin{align}
\Big\|\A_L^{-1/2}\T_L(\A_L)\A_L^{-1/2}\sum_{k=1}^K\M_k(\A_L)^2\A_L^{-1/2}\T_L(\A_L)\A_L^{-1/2}\Big\| <\frac{\eta_M}{L}
\end{align}
for $L$ large enough.
We can similarly prove that $\big\|\sum_{k=1}^K\M_k(\S_L)^2\big\|\leq\frac{1}{L}$, and hence, statement \emph{\textbf{(iii)}} in \eqref{eq:statement_iii} holds.

Finally, it only remains to show that $\S_L$ is the unique fixed point of $\T_L$. First observe that the mapping $\T_L$ is continuous and is defined from $\mathcal{B}$ into $\mathcal{B}$ where 
\begin{equation}
\mathcal{B} = \big\{\S\in\mathbb{C}^{M\times M} \text{\ \ s.t.\ \ } \I_M\succeq \S\succeq \0_M \big\}
\end{equation}
which is a compact convex set and, therefore, $\T_L$ admits a fixed point. Let us suppose that there exist two fixed points $\S_1^{\star}\neq \S_2^{\star} \in \mathcal{B}$. Observe now that $\T_L(\M)$ is invertible for any positive semidefinite matrix $\M \in \mathcal{B}$ and, hence, any fixed point of $\T_L(\M)$ is also invertible. In consequence, we necessarily have that  $\S_1^{\star}\succ 0$ and $\S_2^{\star}\succ 0$. Then, using inequality \eqref{eq:contraction} with $\S_1^{\star}$ and $\S_2^{\star}$ gives us 
\begin{align*}
\|(\S_1^{\star})^{-1/2}(\S_1^{\star}-\S_2^{\star})(\S_2^{\star})^{-1/2}\| &= \|(\S_1^{\star})^{-1/2}(\T_L(\S_1^{\star})-\T_L(\S_2^{\star}))(\S_2^{\star})^{-1/2}\|\\
&\leq \min(\|\I_M - \S_1^{\star}\|, \|\I_M - \S_2^{\star}\| )\|(\S_1^{\star})^{-1/2}(\S_1^{\star}-\S_2^{\star})(\S_2^{\star})^{-1/2}\|
\end{align*}
which for  $\S_1^{\star}\neq \S_2^{\star}$ only holds if $\min(\|\I_M - \S_1^{\star}\|,\|\I_M - \S_2^{\star}\|) = 1$. However, this implies that $\lambda_{\min}(\S_1^{\star})= \lambda_{\min}(\S_2^{\star})=  0$ and this contradicts the fact that both fixed points are positive definite. Then, the fixed point of $\T_L$ is necessarily unique which completes the proof of Theorem \ref{thm:deteq3}.
\end{proof}

\subsection{Proof of Theorem {\rm\ref{thm:deteq3_uni}}} \label{app:deteq3_uni}

\begin{proof}

Considering that constants $\epsilon_M$ and $\nu_M$ do not depend on $M$ and under the existence of some $0<\delta<1$ such that $\lim\sup_M ML^{-\delta}<\infty$ and the assumption that $\p_0$ is uniformly bounded,  we can prove the convergence result in Theorem \ref{thm:deteq3} uniformly in $M$ using similar arguments as in the previous proof.
In particular, we need to use \eqref{eq:concentration1} and \eqref{eq:concentration2} in order to show that $\big\| \A_{(k)} - \bar{\A}_{(k)} \big\|\xrightarrow{\mathrm{a.s.}} 0$ as $K,L\rightarrow\infty$ uniformly in $M$ and $\lim\sup_M ML^{-\delta}<\infty$ ensures that $\frac{M}{L}$ goes to $0$ as $M,L\rightarrow \infty$  in \eqref{eq:MoverL}. Furthermore, the fact that $\epsilon_M$ does not depend on $M$ makes $\eta_M$ in the proof of \emph{\textbf{(iii)}} also independent of $M$. This completes the proof of the uniform convergence in Theorem \ref{thm:deteq3_uni}.
\end{proof}
\phantom\qedhere
\end{proof}
\subsection{Proof of Proposition {\rm \ref{prop:Gam_L}}} \label{app:A_L}
\begin{proof}
We need to prove that there exists a constant $\epsilon_M>0$ such that $\lim\inf_{K,L} \lambda_{\min}(\A_L)\geq \epsilon_M$ a.s. for $\A_L = \frac{1}{L}\mathrm{blktr}\big[(\tilde{\P}_{(0)}\Ps_{(0)}\tilde{\P}_{(0)}^\dagger + \I_{ML})^{-1}\big]$, where $\tilde{\P}_{(0)} = \big(\tilde{\P}_1, \ldots, \tilde{\P}_K\big)$ with $\tilde{\P}_k = \p_k \otimes \I_M$ and $\Ps_{(0)} = \Diag \big( \Ps_1, \ldots, \Ps_K\big)$ with $\Ps_k = \snr_k \Sig_k$, when either the conditions \emph{\textbf{(a)}} or \emph{\textbf{(b)}} in the proposition are satisfied.

Let us define $\y_\ell =(p_0(\ell),\dots, p_K(\ell))^T$,  $\tilde{\Y}_{\ell}=\y_\ell \otimes \I_M$, $\tilde{\Y}=(\tilde{\Y}_1,\dots,\tilde{\Y}_L)$, and $\tilde{\Y}_{(\ell)}=(\tilde{\Y}_1,\dots,\tilde{\Y}_{\ell-1},\tilde{\Y}_{\ell+1},\dots, \tilde{\Y}_{L})$, so that
\begin{equation}
\tilde{\P}^\dagger\tilde{\P}  = \tilde{\Y}\tilde{\Y}^\dagger = \sum_{\ell=1}^L \tilde{\Y}_{\ell}\tilde{\Y}_{\ell}^\dagger.
\end{equation}
Then, including user $0$, we can state that
\begin{align}
\A_L \succeq \frac{1}{L}\mathrm{blktr}\big[(\tilde{\P}_{(0)}\Ps_{(0)}\tilde{\P}_{(0)}^\dagger + \tilde{\P}_{0}\Ps_{0}\tilde{\P}_{0}^\dagger+ \I_{ML})^{-1}\big]
= \frac{1}{L}\mathrm{blktr}\big[(\tilde{\Y}^\dagger\Ps\tilde{\Y}+ \I_{ML})^{-1}\big] %
\end{align}
where $\Ps = \Diag \big(\Ps_0, \Ps_{(0)}\big) = \Diag \big( \Ps_0, \ldots, \Ps_K\big)$, 
Introducing $\Z_{(\ell)} = \tilde{\Y}_{\ell}^\dagger\Ps\big(\I_{M(K+1)}+\tilde{\Y}_{(\ell)}\tilde{\Y}_{(\ell)}^\dagger\Ps\big)^{-1}\tilde{\Y}_{\ell}$ and using the Woodbury identity in Lemma~\ref{lem:inv}, it holds that
\begin{align}
\A_L \succeq \frac{1}{L}\mathrm{blktr}\big[(\Y^\dagger\Ps\Y+ \I_{ML})^{-1}\big]  
= \I_M -  \frac{1}{L}\sum_{\ell=1}^L \mathrm{blktr}\big[\tilde{\Y}_{\ell}\tilde{\Y}_{\ell}^\dagger\Ps(\Y\Y^\dagger\Ps+ \I_{M(K+1)})^{-1}\big]
&  =  \frac{1}{L}\sum_{\ell=1}^L \big(\I_M+\Z_{(\ell)}\big)^{-1} \\ & \succeq \frac{\I_M}{1+\frac{1}{L}\sum_{\ell=1}^L \|\Z_{(\ell)}\|} \label{eq:Z_l}
\end{align}
where we used the convexity of $x\mapsto \frac{1}{1+x}$. Now, for $\La = \Diag(\La_0,\dots,\La_K)$, $\U = \Diag (\U_0,\dots,\U_K)$, and $\tilde{\D}_{\snr} = \Diag(\snr_0, \ldots, \snr_K) \otimes \I_M$, it holds that $\Ps =   \U \tilde{\D}_{\snr} \La \U^{\dagger}$ and we can write 
\begin{equation}
\Ps\big(\I_{M(K+1)}+\tilde{\Y}_{(\ell)}\tilde{\Y}_{(\ell)}^\dagger\Ps\big)^{-1}= \U \big( \tilde{\D}_{\snr}^{-1} \La^{-1}+\U^\dagger\tilde{\Y}_{(\ell)}\tilde{\Y}_{(\ell)}^\dagger\U \big)^{-1}\U^\dagger. 
\end{equation}
Furthermore, under assumption \emph{\textbf{(a)}} in \eqref{eq:sum_snr}, it holds that
\begin{align}
\big( \tilde{\D}_{\snr}^{-1} \La^{-1}+\U^\dagger\tilde{\Y}_{(\ell)}\tilde{\Y}_{(\ell)}^\dagger\U \big)^{-1} \preceq \tilde{\D}_{\snr} \La \preceq  \max_{0\leq k \leq K} \| \Sig_k \|  \tilde{\D}_{\snr}.
\end{align}
Therefore, we have  
\begin{align}
\frac{1}{L}\sum_{\ell=1}^L \|\Z_{(\ell)}\| &\leq  \Big( \max_{0\leq k \leq K} \| \Sig_k \|\Big) \frac{1}{L}\sum_{\ell=1}^L \|\tilde{\Y}_\ell^\dagger\U\tilde{\D}_{\snr}\U^\dagger\tilde{\Y}_\ell\| \leq  \Big(\max_{0\leq k \leq K} \| \Sig_k \|\Big) \frac{1}{L}\sum_{k=0}^K\sum_{\ell=1}^L  |p_k(\ell)|^2 \snr_k\\
&\leq  \Big(\max_{0\leq k \leq K} \| \Sig_k \|\Big)\Big( \max_{1\leq k\leq K}\frac{1}{L}\sum_{\ell=1}^L  |p_k(\ell)|^2\Big) \sum_{k=0}^K\snr_k<\infty
\end{align}
where we have used that $\max_k\frac{1}{L}\sum_{\ell=1}^L  |p_k(\ell)|^2 \xrightarrow[K,L\rightarrow \infty]{\mathrm{a.s.}} 1$ and the assumptions in \emph{\textbf{(a)}}. This combined with \eqref{eq:Z_l} finishes the first part of the proof.

Introducing $\B_{(\ell)} = \tilde{\Y}_{\ell}^\dagger\Ps\big(\I_{M(K+1)}+\tilde{\Y}\tilde{\Y}^\dagger\Ps\big)^{-1}\tilde{\Y}_{\ell}$ and using Lemma~\ref{lem:inv} leads to 
$\Z_{(\ell)}  - \B_{(\ell)} = \B_{(\ell)}(\I_{M} + \B_{(\ell)})^{-1}\B_{(\ell)} \preceq \B_{(\ell)}$.
Thus, we get that
\begin{align}
\max_{1\leq \ell \leq L} \|\Z_{(\ell)}\| &\leq 2\max_{1\leq \ell \leq L} \|\B_{(\ell)}\| \leq 2\max_{1\leq \ell \leq L} \Big\|\tilde{\Y}_{\ell}^\dagger\U(  \tilde{\D}_{\snr}^{-1}  \tilde{\La}^{-1}+\U^\dagger\tilde{\P}^\dagger\tilde{\P}\U)^{-1}\U^\dagger\tilde{\Y}_{\ell}\Big\|\\
&\leq \frac{2}{c_M(K+1)}\max_{1\leq \ell \leq L} \Big\|\tilde{\Y}_{\ell}^\dagger\U\U^\dagger\tilde{\Y}_{\ell}^\dagger\Big\| \leq \frac{2}{c_M}\max_{1\leq \ell \leq L} \frac{\|\y_{\ell}\|^2}{K+1} <\infty
\end{align}
where we have used that $\max_{1\leq \ell \leq L} \frac{\|\y_{\ell}\|^2}{K+1}\xrightarrow[K,L\rightarrow \infty]{\mathrm{a.s.}} 1$ and the assumption in \emph{\textbf{(b)}}. This combined with \eqref{eq:Z_l} finishes the second part of the proof.
\end{proof}

\subsection{Proof of Proposition {\rm \ref{prop:Gam_L2}}} \label{app:Gam_L2}
\begin{proof}
Let $\Gam_L$ be the unique solution to the fixed point equation in \eqref{eq:gam_fp} and let $\gamma_L$ be the unique solution to the fixed point equation in \eqref{eq:gamma}. Then, we need to show that 
\begin{align}
\frac{1}{M}\Big\| (\Gam_L+\I_M)^{-1} - (1+\gamma_L)^{-1}\I_M\Big\|_F^2\xrightarrow[K,L,M\rightarrow \infty]{\mathrm{a.s.}} 0.
\end{align}
Let us introduce $\bar{\Gam}_L \triangleq \gamma_L\I_M$, $\S_L \triangleq (\Gam_L+\I_M)^{-1}$, and  $\bar{\S}_L \triangleq (1+\gamma_L)^{-1}\I_M$ and express the fixed point equation in \eqref{eq:gam_fp} as $\S_L = \T_L(\S_L)$ with $\T_L(\cdot)$ as defined in \eqref{eq:T}. Then, observe that the assumption that $\lim\sup_{L,K}\Big\|\frac{1}{L}\sum_{k=1}^K \U_k\U_k^\dagger\Big\|<1$ uniformly in $M$ made in the statement proposition implies, following the inequality in \eqref{eq:sum_uk}, that $\|\Gam_L\|<\infty$ for any $L$. In consequence, we have that $\|\S_L^{-1}\|<\infty$ uniformly in $M$.
Moreover, $\|\S_L\|\leq 1$ so that it holds $\big\|\S_L - \bar{\S}_L\big\|_F\leq \big\|\S_L^{-1/2}(\S_L - \bar{\S}_L)\big\|_F$, where
\begin{align} \label{eq:S-barS}
\big\|\S_L^{-1/2}(\S_L - \bar{\S}_L)\big\|_F
&\leq \big\|\S_L^{-1/2}\big\| \big\|\T_L(\bar{\S}_L) - \bar{\S}_L\big\|_F + \big\|\S_L^{-1/2}(\T_L(\S_L) - \T_L(\bar{\S}_L))\big\|_F .
\end{align}

We focus first on the second term of the right-hand side of \eqref{eq:S-barS}. Following a similar approach to the one in part \emph{\textbf{(iii)}} in the proof of Theorem \ref{thm:deteq3}, we can control the term 
$\big\|\S_L^{-1/2}\big(\T_L(\S_L) - \T_L(\bar{\S}_L) \big) \big\|_F$  as follows:
\begin{align}
\big\|\S_L^{-1/2} &\big( \T_L(\S_L) - \T_L(\bar{\S}_L) \big) \big\|_F \nonumber \\
\label{eq:normTL_1}
&=  \Big\|L\sum_{k=1}^K \S_L^{-1/2}\T_L(\S_L) \Ps_k\big(L\S_L\Ps_k+\I_M\big)^{-1}\big(\S_L-\bar{\S}_L\big)\Ps_k \big(L\bar{\S}_L\Ps_k+\I_M\big)^{-1} \T_L(\bar{\S}_L)\Big\|_F\\
&\leq  \Big\|L\sum_{k=1}^K \S_L^{1/2} \Ps_k\big(L\S_L\Ps_k+\I_M\big)^{-1}\big(\S_L-\bar{\S}_L\big)\Ps_k \big(L\bar{\S}_L\Ps_k+\I_M\big)^{-1} \T_L^{1/2}(\bar{\S}_L)\bar{\S}_L^{1/2}\Big\|_F \label{eq:normTL_1a} \\
& \hspace{.8cm}+ \Big\|L\sum_{k=1}^K \S_L^{1/2}\Ps_k\big(L\S_L\Ps_k+\I_M\big)^{-1}\big(\S_L-\bar{\S}_L\big)\Ps_k \big(L\bar{\S}_L\Ps_k+\I_M\big)^{-1} \T_L^{1/2}(\bar{\S}_L)(\T_L^{1/2}(\bar{\S}_L) - \bar{\S}_L^{1/2})\Big\|_F
\label{eq:normTL_1b}
\end{align}
where \eqref{eq:normTL_1} comes from applying Lemma \ref{lem:resolvent} twice. The term in \eqref{eq:normTL_1b} can be further bounded as
\begin{align}
\Big\|L\sum_{k=1}^K \S_L^{1/2}&  \Ps_k\big(L\S_L\Ps_k+\I_M\big)^{-1}\big(\S_L-\bar{\S}_L\big)\Ps_k \big(L\bar{\S}_L\Ps_k+\I_M\big)^{-1} \T_L^{1/2}(\bar{\S}_L)(\T_L^{1/2}(\bar{\S}_L) - \bar{\S}_L^{1/2})\Big\|_F \nonumber \\
&\leq L\sum_{k=1}^K \big \|\S_L^{1/2}\Ps_k\big(L\S_L\Ps_k+\I_M\big)^{-1}\big(\S_L-\bar{\S}_L\big)\Ps_k \big(L\bar{\S}_L\Ps_k+\I_M\big)^{-1} \T_L^{1/2}(\bar{\S}_L)\big \| \big\|\T_L^{1/2}(\bar{\S}_L) - \bar{\S}_L^{1/2}\big\|_F\\
&\leq \frac{K}{L} \big \|\S_L^{-1/2}\big\| \big\|\S_L-\bar{\S}_L\big\| \big\|\bar{\S}_L^{-1}\T_L^{1/2}(\bar{\S}_L)\big\| 
\big\| \big(\T_L(\bar{\S}_L) - \bar{\S}_L \big) \big(\T_L^{1/2}(\bar{\S}_L)+\bar{\S}_L^{1/2}\big)^{-1}\big\|_F \\
&\leq \frac{2}{\alpha_L} \big\|\S_L^{-1}\big\|^{1/2} \big\|\bar{\S}_L^{-1}\big\|^{3/2} \big\|\T_L(\bar{\S}_L) - \bar{\S}_L\big\|_F\label{eq:normTL_2}
\end{align}
noting that $\|\T_L(\bar{\S}_L)\|\leq 1$ and $\big\|\S_L-\bar{\S}_L\big\|\leq 2$. The term in \eqref{eq:normTL_1a} satisfies
\begin{align}
\Big\|L\sum_{k=1}^K \S_L^{1/2} &\Ps_k\big(L\S_L\Ps_k+\I_M\big)^{-1}\big(\S_L-\bar{\S}_L\big)\Ps_k \big(L\bar{\S}_L\Ps_k+\I_M\big)^{-1} \T_L^{1/2}(\bar{\S}_L)\bar{\S}_L^{1/2}\Big\|_F^2 \nonumber \\
& \leq L^2 \Big\| \T_L^{1/2}(\bar{\S}_L)\sum_{k=1}^K\bar{\S}_L\Ps_k \big(L\bar{\S}_L\Ps_k+\I_M\big)^{-2} \Ps_k\T_L^{1/2}(\bar{\S}_L)\Big\| \sum_{k=1}^K \Tr\big(\M_k(\S_L)\S_L^{-1}(\S_L - \bar{\S}_L)^2\M_k(\S_L)\big)\\
& \leq L \Big\| \T_L^{1/2}(\bar{\S}_L)\sum_{k=1}^K \Ps_k^{1/2}\big(L\bar{\S}_L\Ps_k+\I_M\big)^{-1} \Ps_k^{1/2}\T_L^{1/2}(\bar{\S}_L)\Big\| \Big\| \sum_{k=1}^K \M_k(\S_L)^2\Big\|  \big\|\S_L^{-1/2}(\S_L - \bar{\S}_L)\big\|_F^2\\
& \leq   \big\|\T_L(\bar{\S}_L) - \I_M\big\| \big\|\S_L^{-1/2}(\S_L - \bar{\S}_L)\big\|_F^2 =   \big(1-\lambda_{\min}\big(\T_L(\bar{\S}_L)\big)\big) \big\|\S_L^{-1/2}(\S_L - \bar{\S}_L)\big\|_F^2 \label{eq:normTL_3}
\end{align}
where we have used Lemma \ref{lem:sum_ABC} with $\M_k(\S_L)$ defined as in \eqref{eq:Mk} and, hence, $\big\|\sum_{k=1}^K\M_k(\S_L)^2\big\|\leq\frac{1}{L}$. On the other hand, it holds that
\begin{align}
\T_L(\bar{\S}_L) =  \Big(\sum_{k=1}^K \Ps_k \big(L \bar{\S}_L \Ps_k +\I_M \big)^{-1} + \I_M\Big)^{-1} 
\succeq \Big(\I_M+\frac{K}{L}\bar{\S}_L^{-1}\Big)^{-1}\succeq \frac{1}{1+\tfrac{K}{L}(1+\gamma_L)}\I_M
\end{align} 
and, since $\gamma_L$ satisfies \eqref{eq:gamma_L_bounds}, we conclude that 
\begin{align}
\lambda_{\min} \big(\T_L(\bar{\S}_L)\big) \geq \frac{1-\frac{\bar{\tau}_M}{\alpha_L}}{1+\frac{1-\bar{\tau}_M}{\alpha_L}}.
\end{align}
Finally, combining \eqref{eq:normTL_2} together with \eqref{eq:normTL_3}, we can substitute back in \eqref{eq:S-barS} and obtain that
\begin{equation}
\|\S_L^{-1/2}(\S_L - \bar{\S}_L)\|_F \leq \frac{\|\S_L^{-1/2}\| + \frac{2}{\alpha_L}\|\S_L^{-1}\|^{1/2} \|\bar{\S}_L^{-1}\|^{3/2}}{1-(1-\lambda_{\min}(\T_L(\bar{\S}_L)))^{1/2}} \|\T_L(\bar{\S}_L) - \bar{\S}_L\|_F.
\end{equation}
In consequence, since from \eqref{eq:gamma_L_bounds} we know that $\|\bar{\S}_L^{-1}\|$ is uniformly bounded, in order to complete the proof it remains to show that 
\begin{equation} \label{eq:TL-SL}
\frac{1}{M}\big\|\T_L(\bar{\S}_L) - \bar{\S}_L \big\|_F^2 \xrightarrow[K,L,M\rightarrow \infty]{\mathrm{a.s.}} 0.
\end{equation}

With this objective, let us first bound $\|\T_L(\bar{\S}_L) - \bar{\S}_L\|_F^2$ as
\begin{align}
\|\T_L(\bar{\S}_L) - \bar{\S}_L\|_F^2 
& = \Big\|  \Big(\I_M - \bar{\S}_L -\bar{\S}_L\sum_{k=1}^K \Ps_k\big(L\bar{\S}_L\Ps_k+\I_M\big)^{-1}\Big)  \T_L(\bar{\S}_L)\Big\|_F^2\\
& \leq  \big\|\T_L(\bar{\S}_L)\big\|^2
\Big\|\I_M -\bar{\S}_L- \bar{\S}_L\frac{1}{L}\sum_{k=1}^K \Ps_k\big(\bar{\S}_L\Ps_k+\tfrac{1}{L}\I_M\big)^{-1}\Big\|_F^2\\
& \leq  \big\|\bar{\S}_L\big\|^2 
\Big\|\frac{\gamma_L}{\gamma_L+1}\I_M-\frac{1}{L}\sum_{k=1}^K \Ps_k\big(\Ps_k+\tfrac{\gamma_L+1}{L}\I_M\big)^{-1}\Big\|_F^2\\
& \leq \frac{1}{\alpha_L^2} 
\Big\|\frac{1}{K}\sum_{k=1}^K \Big(\Ps_k\big(\Ps_k+\tfrac{\gamma_L+1}{L}\I_M\big)^{-1} - \ell_k\big(-\tfrac{\gamma_L+1}{L}\big)\I_M\Big)\Big\|_F^2
\label{eq:app4_auxx}
\end{align}
where we have used that $\|\T_L(\bar{\S}_L)\|^2<1$ and  $\|\bar{\S}_L\|^2<1$, and in the last inequality we have applied the fixed point equation of $\gamma_L$ in \eqref{eq:gamma} and we have defined 
\begin{align}
\ell_k(z) &=  \frac{2\tau_{M,k}}{1+\tau_{M,k}-\frac{\tau_{M,k}z}{\snr_k}+\sqrt{(1+\tau_{M,k}-\frac{\tau_{M,k}z}{\snr_k})^2-4\tau_{M,k}}}, & z<0.
\end{align}
Now we introduce $\De_k = \Ps_k\big(\Ps_k+\tfrac{\gamma_L+1}{L}\I_M\big)^{-1} - \ell_k\big(-\tfrac{\gamma_L+1}{L}\big)\I_M$, which satisfies
\begin{align}
\frac{1}{M} \Big\|\frac{1}{K}\sum_{k=1}^K \De_k\Big\|_F^2 = \frac{1}{MK^2}\sum_{k,k'=1}^K  \Tr(\De_k\De_{k'})= \frac{1}{MK^2}\sum_{k=1}^K  \Tr(\De_k^2) +  \frac{1}{MK^2}\sum_{k\neq k'}  \Tr(\De_k\De_{k'}).
\end{align}
Since the spectral norm of $\De_k$ is almost surely upper bounded by $2$, $\frac{1}{MK^2}\sum_{k=1}^K  \Tr(\De_k^2)\xrightarrow[]{\mathrm{a.s.}} 0$ as $K,L,M\rightarrow \infty$. For the second term, $\frac{1}{MK^2}\sum_{k\neq k'}  \Tr(\De_k\De_{k'})$, we can use the independence between $\De_k$ and $\De_{k'}$ and bound the following function:
\begin{align}\label{eq:f_L_M}
f_{L,M}(z) & = \frac{1}{MK^2}\sum_{k\neq k'}^K  \Tr\big(\big(\Ps_k (\Ps_k-z\I_M)^{-1} - \ell_k(z)\I_M\big)\De_{k'}\big), & z<0.
\end{align}
Indeed, given the independence between $\De_{k}$ and $\De_{k^{\prime}}$,  we can treat $\De_{k^{\prime}}$ as a deterministic matrix with respect to $\De_{k}$. Equivalently, in the following we take the expectations  with respect to the distribution of $\De_k$. Recall that, under the covariance matrix model in \eqref{ass:sig_k}, $\Ps_k = \frac{\snr_k}{r_k}\sum_{i=1}^{r_k}\x_{k,i}\x_{k,i}^\dagger$ with $\x_{k,i}$ vectors of i.i.d. entries of zero-mean and unit variance. 
Then, using Lemma~\ref{lem:inv}, we get
\begin{align}
\Tr\big(\Ps_k\big(\Ps_k-z\I_M\big)^{-1}\De_{k'}\big)
= \frac{\snr_k}{r_k}\sum_{j=1}^{r_k} \x_{k,j}^\dagger(\Ps_k-z\I_M)^{-1}\De_{k'}\x_{k,j}
&= \frac{\snr_k}{r_k}\sum_{j=1}^{r_k} \frac{\x_{k,j}^\dagger (\Ps_{k,(j)}-z\I_M)^{-1}\De_{k'}\x_{k,j}}
{1+\frac{\snr_k}{r_k}\x_{k,j}^\dagger (\Ps_{k,(j)}-z\I_M)^{-1}\x_{k,j}}\\
&= \frac{1}{M} \sum_{j=1}^{r_k} \frac{\x_{k,j}^\dagger (\Ps_{k,(j)}-z\I_M )^{-1}\De_{k'}\x_{k,j}}
{\frac{\tau_{M,k}}{\snr_k}+\frac{1}{M}\x_{k,j}^\dagger(\Ps_{k,(j)}-z\I_M )^{-1}\x_{k,j}}
\end{align}
where $\Ps_{k,(j)} = \frac{\snr_k}{r_k} \sum_{i=1,i\neq j}^{r_k} \x_{k,i}\x_{k,i}^\dagger$. Similarly to \cite{Wag12}, we want first to upper bound  $\zeta_{k,k'}(z)$ defined as
\begin{align}
\zeta_{k,k'}(z)& = \frac{1}{M}\sum_{j=1}^{r_k} \frac{\x_{k,j}^\dagger(\Ps_{k,(j)}-z\I_M)^{-1}\De_{k'}\x_{k,j}}{\frac{\tau_{M,k}}{\snr_k}+\frac{1}{M}\x_{k,j}^\dagger(\Ps_{k,(j)}-z\I_M)^{-1}\x_{k,j}} 
- \Big(\frac{\tau_{M,k}}{\frac{\tau_{M,k}}{\snr_k}+\frac{1}{\snr_k}m_{\tau_{M,k}}(\frac{z}{\snr_k})} \Big) 
\Tr\big((\Ps_k-z\I_M)^{-1}\De_{k'}\big)
\end{align}
with $m_{\tau_{M,k}}$ being the Stieltjes transform of the Marcenko-Pastur distribution \cite{debbahcouillet}
\begin{equation}
m_{\tau_{M,k}}(z) = \frac{\tau_{M,k}-1-\tau_{M,k}z}{2z} - \frac{\sqrt{(\tau_{M,k}+1-\tau_{M,k}z)^2-4\tau_{M,k}}}{2z}.
\end{equation}
Observing that
\begin{align} \label{eq:l_k_over_1_l_k}
\frac{\tau_{M,k}}{\frac{\tau_{M,k}}{\snr_k}+\frac{1}{\snr_k}m_{\tau_{M,k}}(\frac{z}{\snr_k})}
= -\frac{z\ell_k(z)}{1-\ell_k(z)}
\end{align}
it holds that
\begin{align}
\Tr\big(\Ps_k (\Ps_k-z\I_M)^{-1}\De_{k'}\big) 
&= \zeta_{k,k'}(z) - \frac{z\ell_k(z)}{1-\ell_k(z)} \Tr\big( (\Ps_k-z\I_M)^{-1}\De_{k'}\big)\\
&= \zeta_{k,k'}(z) + \frac{\ell_k(z)}{1-\ell_k(z)}\Big(\Tr(\De_{k'}) - \Tr\big(\Ps_k(\Ps_k-z\I_M)^{-1}\De_{k'}\big)\Big)
\end{align}
and, therefore, we can write 
\begin{equation} \label{eq:tr_xi_k}
\Tr\big(\Ps_k (\Ps_k-z\I_M)^{-1}\De_{k'}\big) - \ell_k(z)\Tr(\De_{k'}) = (1-\ell_k(z))\zeta_{k,k'}(z).
\end{equation}
Let us now decompose $\zeta_{k,k'}(z)=\zeta_{k,k'}^{(1)}(z)+\zeta_{k,k'}^{(2)}(z)$, with
\begin{align}
\zeta_{k,k'}^{(1)}(z) 
&= \frac{1}{M} \sum_{j=1}^{r_k} 
\Big( \frac{1}{\frac{\tau_{M,k}}{\snr_k}+\frac{1}{M}\x_{k,j}^\dagger(\Ps_{k,(j)}-z\I_M)^{-1}\x_{k,j}}
+\frac{z\ell_k(z)}{1-\ell_k(z)} \frac{1}{\tau_{M,k}}\Big) \x_{k,j}^\dagger(\Ps_{k,(j)}-z\I_M)^{-1}\De_{k'}\x_{k,j} \\
\zeta_{k,k'}^{(2)}(z) &=  -\frac{z\ell_k(z)}{1-\ell_k(z)} 
\Big( \frac{1}{\tau_{M,k}} \sum_{j=1}^{r_k} \x_{k,j}^\dagger(\Ps_{k,(j)}-z\I_M)^{-1}\De_{k'}\x_{k,j} 
-  \Tr\big( (\Ps_k-z\I_M)^{-1}\De_{k'}\big)\Big).
\end{align}
Using the equality in \eqref{eq:l_k_over_1_l_k}, the term $\zeta_{k,k'}^{(1)}(z)$ can be bounded as
\begin{align}
\frac{1}{M} \big|\zeta_{k,k'}^{(1)} (z)\big| 
&\leq  \frac{|z|\ell_k(z)\|\De_{k'}\|}{1-\ell_k(z)}
\Big| \frac{1}{Mr_k}\sum_{j=1}^{r_k} \x_{k,j}^\dagger(\Ps_{k,(j)}-z\I_M)^{-1}\x_{k,j} - 
\frac{1}{\snr_k}m_{\tau_{M,k}}\big(\tfrac{z}{\snr_k}\big)\Big|\\
&\leq\frac{|z|\ell_k(z)\|\De_{k'}\|}{1-\ell_k(z)}\Big| \frac{1}{Mr_k}\sum_{j=1}^{r_k} \x_{k,j}^\dagger (\Ps_{k,(j)}-z\I_M )^{-1}\x_{k,j} 
- \frac{1}{M}\Tr\big( (\Ps_{k}-z\I_M)^{-1}\big)\Big|\\
&\hspace{2.25cm} +\frac{|z|\ell_k(z)\|\De_{k'}\|}{1-\ell_k(z)}\Big| \frac{1}{M}\Tr\big( (\Ps_{k}-z\I_M )^{-1}\big) -  \frac{1}{\snr_k}m_{\tau_{M,k}}\big(\tfrac{z}{\snr_k}\big)\Big|. \label{eq:bound_xi_1}
\end{align}
Similarly, for the term $\zeta_{k,k'}^{(2)}(z)$ we get 
\begin{align}
\frac{1}{M}\big|\zeta_{k,k'}^{(2)}(z)\big|\leq 
\frac{1}{M} \frac{|z|\ell_k(z)}{1-\ell_k(z)}\Big| \frac{1}{r_k}\sum_{j=1}^{r_k} \x_{k,j}^\dagger(\Ps_{k,(j)}-z\I_M )^{-1}\De_{k'}\x_{k,j} - \Tr\big((\Ps_{k}-z\I_M)^{-1}\De_{k'}\big)\Big| \label{eq:bound_xi_2}.
\end{align}
Given that $\|\De_{k'}\|\leq 2$  almost surely  and $0< \ell_k(z) < \tau_{M,k}$, we can substitute the bounds in \eqref{eq:bound_xi_1} and \eqref{eq:bound_xi_2} back in equation \eqref{eq:tr_xi_k} and obtain that
\begin{multline} 
\frac{1}{M}\big|\Tr\big(\Ps_k (\Ps_k-z\I_M)^{-1}\De_{k'}\big) - \ell_k(z)\Tr(\De_{k'})\big|  \leq 
  \frac{|z|\tau_{M,k}}{M} \Big| \frac{1}{r_k}\sum_{j=1}^{r_k} \x_{k,j}^\dagger(\Ps_{k,(j)}-z\I_M )^{-1}\De_{k'}\x_{k,j} -\Tr\big((\Ps_{k}-z\I_M)^{-1}\De_{k'}\big)\Big| \\
+ \frac{\|\De_{k'}\| |z|\tau_{M,k}}{M} \Big| \frac{1}{r_k}\sum_{j=1}^{r_k} \x_{k,j}^\dagger (\Ps_{k,(j)}-z\I_M)^{-1}\x_{k,j} - \Tr\big( (\Ps_{k}-z\I_M )^{-1}\big)\Big|\\
 + |z|\tau_{M,k}\|\De_{k'}\| \Big| \frac{1}{M}\Tr\big( (\Ps_{k}-z\I_M)^{-1}\big) -  \frac{1}{\snr_k}m_{\tau_{M,k}}\big(\tfrac{z}{\snr_k}\big)\Big|.
\label{eq:ineq_baisilverstein}
\end{multline}
Then, we can use that
\begin{align} \label{eq:bound_tr-l_k_last_term}
|z| \Big| \frac{1}{M}\Tr\big((\Ps_{k}-z\I_M)^{-1}\big) -  \frac{1}{\snr_k}m_{\tau_{M,k}}\big(\tfrac{z}{\snr_k}\big)\Big|= \frac{1}{M} \Big| \Tr\big(\Ps_{k}(\Ps_{k}-z\I_M)^{-1}\big) -  M\ell_k(z)\Big|
\end{align}
and set $\De_{k'}=\I_M$ in \eqref{eq:ineq_baisilverstein} so that we can bound the right hand side of \eqref{eq:bound_tr-l_k_last_term} as
\begin{multline} \label{eq:bound_tr-l_k_last_term_2}
\frac{1}{M}\big|\Tr\big(\Ps_k (\Ps_k-z\I_M)^{-1}\big) - M\ell_k(z)\big|  \leq 
\frac{1}{M} \frac{2|z|\tau_{M,k}}{(1-\tau_{M,k})}  
\Big| \frac{1}{r_k}\sum_{j=1}^{r_k} \x_{k,j}^\dagger (\Ps_{k,(j)}-z\I_M)^{-1}\x_{k,j} - \Tr\big( (\Ps_{k}-z\I_M )^{-1}\big)\Big|.
\end{multline}
Finally, we are in the position to bound function $f_{L,M}(z)$ in \eqref{eq:f_L_M} using \eqref{eq:ineq_baisilverstein} together with 
\eqref{eq:bound_tr-l_k_last_term_2}. Indeed, there exists constants $D_1 \leq \max_k \tau_{M,k} $  and $D_2 \leq 2\frac{1+\max_k\tau_{M,k}}{1-\max_k\tau_{M,k}}$ independent from $K,L,M$ such that 
\begin{align}
|f_{M,L}(z)|&\leq \frac{1}{K^2}\sum_{k\neq k'}^K \frac{1}{M} \Big| \Tr\big(\Ps_k (\Ps_k-z\I_M)^{-1}\De_{k'}\big) - \ell_k(z)\Tr(\De_{k'})\Big| \nonumber\\
&\leq D_1|z| \frac{1}{K^2}\sum_{k\neq k'}^K \frac{1}{M r_k} \Big|\sum_{j=1}^{r_k} \big(\x_{k,j}^\dagger(\Ps_{k,(j)}-z\I_M)^{-1}\De_{k'}\x_{k,j} - 
\Tr\big((\Ps_{k}-z\I_M)^{-1}\De_{k'}\big)\big)\big| \label{eq:f_M_L_terms_1}\\
& \hspace{3.5cm}+D_1D_2|z|  \frac{1}{K^2}\sum_{k\neq k'}^K \frac{1}{M r_k} \Big| \sum_{j=1}^{r_k} \x_{k,j}^\dagger(\Ps_{k,(j)}-z\I_M )^{-1}\x_{k,j} - \Tr\big( (\Ps_{k}-z\I_M)^{-1}\big)\Big| \label{eq:f_M_L_terms_2}.
\end{align}

Let us first focus on the term in \eqref{eq:f_M_L_terms_1}. Defining
\begin{equation}
\delta_{j,k,k'}(z) =  \x_{k,j}^\dagger(\Ps_{k,(j)}-z\I_M)^{-1}\De_{k'}\x_{k,j} -\Tr\big((\Ps_{k}-z\I_M)^{-1}\De_{k'}\big)
\end{equation}
and, using H\"older's inequality \cite[eq.~(5.35)]{Bil95}  on the sum over $j$, we have that
\begin{align}
\Exp_k\Big\{\Big|\frac{1}{Mr_k}\sum_{j=1}^{r_k}\delta_{j,k,k'}(z)\Big|^q\Big\} &\leq \frac{r_k^{q-1}}{(Mr_k)^q}\sum_{j=1}^{r_k}\Exp_k\big\{|\delta_{j,k,k'}(z)|^q\big\} \\
&\leq \frac{2^{q-1}}{M^q r_k}\sum_{j=1}^{r_k}\Big(\frac{C_q}{|z|^q}\|\De_{k'}\|^q M^{q/2} +\frac{(\snr_k)^q}{r_k^q|z|^{2q}}\Exp\big\{\|\x_{k,j}\|^{2q}\big\}\|\De_{k'}\|^q\Big) \label{eq:bound_sum_j}
\end{align}
where the last inequality can be obtained as follows.
From Lemma~\ref{lem:proof_trace} we know that there exists a constant $C_q$ for any $q\geq 1$ such that
\begin{multline}
\Exp_k\big\{\big|\x_{k,j}^\dagger(\Ps_{k,(j)}-z\I_M)^{-1}\De_{k'}\x_{k,j} - \Tr\big((\Ps_{k,(j)}-z\I_M)^{-1}\De_{k'}\big)\big|^q\big\}\\
\leq C_q\big\|(\Ps_{k,(j)}-z\I_M)^{-1}\De_{k'}\big\|_F^q \leq \frac{C_q}{|z|^q}\|\De_{k'}\|^q M^{q/2} \label{eq:bound_sum_j_1}. 
\end{multline}
On the other hand, it holds
\begin{align}
\big|\Tr\big((\Ps_{k,(j)}-z\I_M)^{-1}\De_{k'}\big) &- \Tr\big((\Ps_{k}-z\I_M)^{-1}\De_{k'}\big)\big|^q \nonumber \\
&= \big|\Tr\big((\Ps_{k,(j)}-z\I_M)^{-1}\big( \frac{\snr_k}{r_k}\x_{k,j}\x_{k,j}^\dagger\big)(\Ps_{k}-z\I_M)^{-1}\De_{k'}\big)\big|^q\\
&\leq\frac{(\snr_k)^q}{r_k^q}\|\x_{k,j}\|^{2q}\big\| (\Ps_{k}-z\I_M )^{-1}\De_{k'}(\Ps_{k,(j)}-z\I_M)^{-1}\big\|^q
\leq\frac{(\snr_k)^q}{r_k^q|z|^{2q}}\|\x_{k,j}\|^{2q} \|\De_{k'}\|^q \label{eq:bound_sum_j_2}
\end{align}
which, combined with \eqref{eq:bound_sum_j_1} and applying again H\"older's inequality, gives
\begin{equation}
\Exp_k\big\{|\delta_{j,k,k'}(z)|^q\big\}
\leq 2^{q-1}\Big(\frac{C_q}{|z|^q}\|\De_{k'}\|^qM^{q/2} +\frac{(\snr_k)^q}{r_k^q|z|^{2q}}\Exp\big\{\|\x_{k,j}\|^{2q}\big\} \|\De_{k'} \|^q\Big)
\end{equation}
and this proves the bound in \eqref{eq:bound_sum_j}. Finally, we resort to H\"older's inequality on the sum over $k,k'$ to obtain 
\begin{multline}
\Exp_k\Big\{\Big|\frac{1}{K^2}\sum_{k\neq k'}^K \frac{|z|}{Mr_k}\sum_{j=1}^{r_k}\delta_{j,k,k'}(z)\Big|^q\Big\}
\leq \frac{1}{K^2}\sum_{k\neq k'}^K \Exp_k\Big\{\Big|\frac{|z|}{Mr_k}\sum_{j=1}^{r_k}\delta_{j,k,k'}(z)\Big|^q\Big\}\\
\leq \frac{2^{q-1}}{K^2M^{q/2}}\sum_{k\neq k'}^K\frac{1}{r_k}\sum_{j=1}^{r_k}\Big(C_q\|\De_{k'}\|^q +\frac{(\snr_k)^q}{\tau_{M,k}^qM^{3q/2}|z|^{q}}\Exp\{\|\x_{k,j}\|^{2q}\} \|\De_{k'}\|^q\Big).\label{eq:holdertype}
\end{multline}
Since $\x_{k,j}$ have finite eight-order moment and $\lim\inf_{M,r_k} \tau_{M,k}>0$, we apply \eqref{eq:holdertype} for $q=4$, which results in 
\begin{align}
\frac{1}{K^2}\sum_{k\neq k'}^K \Exp_k\Big\{\Big|\frac{|z|}{Mr_k}\sum_{j=1}^{r_k}\delta_{j,k,k'}(z)\Big|^4\Big\}\leq \frac{8}{M^2}\Big(C^{\prime} + \frac{C^{\prime\prime}}{(M^{3/2}|z|)^4}\Big).
\end{align}
with $C^{\prime}=\sup_k C_4\|\De_{k}\|^4$ and $C^{\prime\prime}=\sup_{k,k'} \frac{(\snr_k)^4}{\tau_{M,k}^4}\Exp\{\|\x_{k,j}\|^{8}\}\|\De_{k'}\|^4$. 

Recall now the bound for $|f_{L,M}(z)|$ in \eqref{eq:f_M_L_terms_1} and \eqref{eq:f_M_L_terms_2}. Since we can use the previous procedure also for the second term, we can conclude that 
\begin{align}
\Exp\big\{ |f_{L,M}(z)|^4 \big\} \leq \frac{8}{M^2}\Big(D^{\prime} + \frac{D^{\prime\prime}}{(M^{3/2}|z|)^4}\Big)
\end{align}
for some constants $D^\prime$ and $D^{\prime\prime}$.
Given that  $\lim\sup_{M,L} \frac{L}{M^{3/2}}<\infty$ since $\lim\inf_{M,L} ML^{-2/3}>0$, and $\gamma_L>0$, we now can use Markov's inequality \cite[eq.~(5.31)]{Bil95} to establish that
\begin{equation}
\Pr\Big(\Big|f_{M,L}\Big(-\frac{\gamma_L+1}{L}\Big)\Big|\geq \epsilon\Big)\leq \frac{1}{\epsilon^4}\Exp\Big\{ \big|f_{M,L}\Big(-\frac{\gamma_L+1}{L}\Big)\Big|^4\Big\}=O\Big(\frac{1}{M^2}\Big).
\end{equation}
Noting that $\frac{1}{M^2}$ is summable, we can finally call Borel-Cantelli lemma \cite[Thm.~4.3]{Bil95} to see that 
\begin{equation}
f_{M,L}\Big(-\frac{\gamma_L+1}{L}\Big) = \frac{1}{MK^2}\sum_{k\neq k'}^K  \Tr(\De_k\De_{k'}) \xrightarrow[K,L,M\rightarrow \infty]{\mathrm{a.s.}} 0.
\end{equation}
Plugging this result back in \eqref{eq:app4_auxx}, shows \eqref{eq:TL-SL} and thus completes the proof.

\end{proof}

\subsection{Proof of Theorem~{\rm \ref{thm:deteq4}}}\label{app:deteq4}
\begin{proof}
Observe first that, under assumptions of the theorem, the convergence of the deterministic equivalent in Theorem~\ref{thm:deteq3} holds uniformly in $M$. Hence, in order to prove Theorem~\ref{thm:deteq4}, we just need to show that 
\begin{align}
L\left|\xi_{0}^{(\mathsf{ii})}\big(\{\Sig_k\}, \{\snr_k\}; \Gam_L\big) - \bar{\xi}_{0}^{(\mathsf{ii})}\big(\Sig_0, \gamma_L\big)\right|
\xrightarrow[K,L,M\rightarrow \infty]{\mathrm{a.s.}} 0 
\end{align}
where  $\xi_{0}^{(\mathsf{ii})}\big(\{\Sig_k\}, \{\snr_k\}; \Gam_L\big)$ is the deterministic equivalent in Theorem~\ref{thm:deteq3}. We use Lemma \ref{lem:resolvent} and write 
\begin{align}
L\left|\xi_{0}^{(\mathsf{ii})}\big(\Gam, \Sig_0\big) - \bar{\xi}_{0}^{(\mathsf{ii})}\big(\Sig_0\big)\right|
= \frac{L\beta_0}{M} \left|\Tr\Big(\Sig_0(\I_M+L\snr_0\S_L\Sig_0)^{-1} L\snr_0(\S_L-\bar{\S}_L)\Sig_0 (\I_M+L\snr_0\bar{\S}_L\Sig_0)^{-1} \Big) \right|
\end{align}
for  $\S_L \triangleq (\Gam_L+\I_M)^{-1}$  and  $\bar{\S}_L \triangleq (1+\gamma_L)^{-1}\I_M$. Then, it holds that
\begin{align}
L^2\left|\xi_{0}^{(\mathsf{ii})}\big(\Gam, \Sig_0\big) - \bar{\xi}_{0}^{(\mathsf{ii})}\big(\Sig_0\big)\right|^2
\leq \frac{1}{M^2\snr_0^2} \left|\Tr\Big(\S_L^{-1}(\S_L-\bar{\S}_L)\bar{\S}_L^{-1} \Big)\right|^2
&\leq \frac{\|\S_L^{-1}\bar{\S}_L^{-1}\|_F^2}{M\snr_0^2}\frac{1}{M}\big\|\S_L-\bar{\S}_L \big\|_F^2\\
&\leq \frac{\|\S_L^{-1}\bar{\S}_L^{-1}\|^2}{\snr_0^2}\frac{1}{M}\big\|\S_L-\bar{\S}_L \big\|_F^2
\end{align} 
using the Cauchy-Schwarz inequality. We can now conclude the proof by applying Proposition \ref{prop:Gam_L2}.
\end{proof}

 \subsection{Proof of Corollary {\rm \ref{cor:deteq4}}} \label{app:deteq4bis}
 \begin{proof}[Proof of \textbf{(a)}]
Under the conditions of Theorem~\ref{thm:deteq4} and the condition in \eqref{eq:beta_far_ues}, we need to show that 
 \begin{align} \label{eq:limit_gamma_2}
 | \gamma_L  |\xrightarrow[ L,K\rightarrow \infty]{} 0
 \end{align}
where $\gamma_L$ is the unique fixed-point of the function  $G_L:\mathbb{R}\rightarrow \mathbb{R}$ defined as 
 \begin{align}
G_L(x) &\triangleq  \frac{1}{L}\sum_{k=1}^K \frac{2\tau_{M,k}(x+1)}{1+\tau_{M,k}+\frac{\tau_{M,k}(x+1)}{L\snr_k}+\sqrt{\big(1+\tau_{M,k}+\frac{\tau_{M,k}(x+1)}{L\snr_k}\big)^2-4\tau_{M,k}}}, & 0\leq x\leq \frac{\bar{\tau}_M}{\alpha_L-\bar{\tau}_M}.
 \end{align}
Observe that $G_L$ is positive and satisfies
\begin{align}
G_L(x) \leq \frac{1}{L}\sum_{k=1}^K \frac{\tau_{M,k}(x+1)}{1+\frac{\tau_{M,k}(x+1)}{L\snr_k}} 
\leq\frac{1}{L}\sum_{k=1}^K \frac{1}{\frac{1}{1+x}+\frac{1}{L\snr_k}}
\leq \frac{1}{L}\sum_{k=1}^K \frac{1}{1-\frac{\bar{\tau}_M}{\alpha_L}+\frac{1}{L\snr_k}}
\end{align}
 which, under the condition in \eqref{eq:beta_far_ues} and using that $\lim\inf_{K,L} \frac{L}{K}>0$, proves \eqref{eq:limit_gamma_2}.
 \end{proof}

 \begin{proof}[Proof of \textbf{(b)}]
 Under the conditions of Theorem~\ref{thm:deteq4}, and the condition in \eqref{eq:beta_close_ues},  we need to show that 
 \begin{align} \label{eq:limit_gamma}
 | (1+\gamma_L)^{-1} - (1+\gamma_{\infty})^{-1}|  \xrightarrow[K,L \rightarrow \infty]{\mathrm{a.s.}} 0
 \end{align}
where $\gamma_{\infty}= \frac{\bar{\tau}_M}{\alpha_L-\bar{\tau}_M}$ and $\gamma_L$ is the unique solution of the fixed-point equation in \eqref{eq:gamma}, which can be rewritten as
 \begin{equation}
 \frac{\gamma_L}{1+\gamma_L} =  \frac{\bar{\tau}_M}{\alpha_L} +\frac{1}{L}\sum_{k=1}^K  \Big(1-\tau_{M,k}+\frac{\tau_{M,k}(1+\gamma_L)}{L\snr_k}-\sqrt{\Big(1-\tau_{M,k}+\frac{\tau_{M,k}(1+\gamma_L)}{L\snr_k}\Big)^2-\frac{4\tau_{M,k}^2(1+\gamma_L)}{L\snr_k}}\Big).
 \end{equation}
 Therefore, $\frac{\gamma_L}{1+\gamma_L}$ is the unique fixed point of the function $F_L:\mathbb{R}\rightarrow \mathbb{R}$ defined as
 \begin{align}
 \label{eq:FL}
 F_L(x) \triangleq \frac{\bar{\tau}_M}{\alpha_L} +\frac{1}{L}\sum_{k=1}^K  \Big(1-\tau_{M,k}+\frac{\tau_{M,k}}{L(1-x)\snr_k}-\sqrt{\Big(1-\tau_{M,k}+\frac{\tau_{M,k}}{L(1-x)\snr_k}\Big)^2-\frac{4\tau_{M,k}^2}{L(1-x)\snr_k}}\Big).
 \end{align}

Since the sequence $x_1=\frac{\gamma_1}{1+\gamma_1},\dots,x_L=\frac{\gamma_L}{1+\gamma_L}$ satisfies $\frac{\gamma_{\infty}}{1+\gamma_{\infty}}\leq x_{\ell}\leq\frac12$ for any $\ell\geq 1$, we can extract a subsequence $x_{\varphi(1)},\dots,x_{\varphi(L)}$ converging to some $x_{\infty}$. 
Furthermore, for any $x$ such that $\frac{\gamma_{\infty}}{1+\gamma_{\infty}}\leq x\leq\frac12$,  we can use that $ 1-\sqrt{x} \leq \sqrt{1-x}$ whenever $x\leq 1$ and obtain 
 \begin{align}
 F_L(x)- \frac{\bar{\tau}_M}{\alpha_L}&\leq \frac{1}{L}\sum_{k=1}^K  \sqrt{\frac{4\tau_{M,k}^2}{L(1-x)\snr_k}}\leq \frac{2\sqrt{2}}{L^{3/2}}\sum_{k=1}^K  \sqrt{\frac{1}{\snr_k}}
 \end{align}
and this converges to zero under the condition in  \eqref{eq:beta_close_ues}, considering that $\lim\inf_{K,L} \frac{L}{K}>0$.
Finally, taking the limit in \eqref{eq:FL} gives $x_{\infty}=\frac{\bar{\tau}_M}{\alpha_L}$ which shows that $\frac{\bar{\tau}_M}{\alpha_L}$ is the limit of any subsequence $(x_{\varphi_i})_i$ and, hence,  proves \eqref{eq:limit_gamma}.
\end{proof}

\end{appendices}

\clearpage

\bibliographystyle{IEEEtran}
\bibliography{RMT_references,ref_snops,IEEEabrv}

\end{document}